\newtheorem{theorem}{Theorem}
\newtheorem{proposition}{Proposition}
\newtheorem{assumption}{Assumption}
\newtheorem{definition}{Definition}
\newtheorem{lemma}{Lemma}
\newtheorem{corollary}{Corollary}
\newtheorem{remark}{Remark}
\DeclareMathOperator*{\argmin}{arg\,min}
\DeclarePairedDelimiter\floor{\lfloor}{\rfloor}
\newcommand{\nll}{\mathrel{\not{\mkern -7mu\ll}}}
\title{Hypothesis Testing of Mixture Distributions using Compressed Data}
\author{Minh Thanh Vu}
\begin{document}
\maketitle
\begin{abstract}
  In this paper we revisit the binary hypothesis testing problem with one-sided compression. Specifically we assume that the distribution in the null hypothesis is a mixture distribution of iid components. The distribution under the alternative hypothesis is a mixture of products of either iid distributions or finite order Markov distributions with stationary transition kernels. The problem is studied under the Neyman-Pearson framework in which our main interest is the maximum error exponent of the second type of error. We derive the optimal achievable error exponent and under a further sufficient condition establish the maximum $\epsilon$-achievable error exponent. It is shown that to obtain the latter, the study of the exponentially strong converse is needed. Using a simple code transfer argument we also establish new results for the Wyner-Ahlswede-K{\"o}rner problem in which the source distribution is a mixture of iid components.
\end{abstract}
\begin{IEEEkeywords}
Mixture distribution,  information-spectrum method, exponentially strong converse, Neyman-Pearson framework, error exponent.
\end{IEEEkeywords}

\section{Introduction}
\subsection{Motivations \& Related Works}
Hypothesis testing with communication constraints is a classic problem in information theory. The problem was initiated by Ahswede and Csisz{\'a}r in \cite{ahlswede1986hypothesis} as well as by Berger in \cite{berger1979hypothesis}. It is assumed that a pair of data sequences $(x^n,y^n)$ is observed at separate locations. The sequence $x^n$ is compressed and sent to the location of $y^n$ via a noiseless channel. The decision center there decides whether $(x^n,y^n)$ is iid generated from the null hypothesis with a distribution $P_{XY}$ or from the alternative hypothesis with a distribution $Q_{XY}$. The Neyman-Pearson framework was used to study the trade-off between the probabilities of errors. The main interest was to establish the maximum error exponent of the second type of error when the probability of the first type of error is bounded as the number of samples tends to infinity. A single letter formulation was given in \cite{ahlswede1986hypothesis} for the testing against independence scenario and strong converse was proven for the general setting under a special condition. Various lower bounds for the general setting have been proposed in \cite{han1987hypothesis} and \cite{shimokawa1994error}. The work \cite{han1987hypothesis} also established the optimal error exponent in the zero-compression rate regime when $\epsilon$ is small enough. Using the same condition as in \cite{ahlswede1986hypothesis}, the work \cite{shalaby1992multiterminal} established the optimal $\epsilon$-error exponent also in the zero-compression rate regime. The optimality of the random binning scheme proposed in \cite{shimokawa1994error} was shown for the conditional independence setting in \cite{rahman2012optimality}. In \cite{tian2008successive} the authors extended the testing against independence study to the successive refinement setting. The work \cite{watanabe2017neyman} studied the non-asymptotic regime of the general setting under the two-sided zero-compression rate constraint. Yet, in all of the above studies, the distributions in both hypotheses are assumed to be iid.

Mixture distributions are prevalent in practice, cf. \cite{mclachlanpeel} for a comprehensive list of applications. However, hypothesis testing for mixture distributions is an under-examined direction in information theory. Notable studies are given in \cite{chen1996general,hanspectrum,han2018first}. Communication constraints are not included in the above works. In this work we study the following binary hypothesis problem in which the hypotheses are given by
\begin{align}
  H_0\colon P_{Y^nX^n} &= \sum_{i}\nu_i P_{Y_iX_i}^{\otimes n},\nonumber\\
  H_1\colon Q_{Y^nX^n} &= \sum_{jt}\tau_{jt} Q_{Y_j^n}\times Q_{X_t^n}.
\end{align}
Mixture of iid components is also known as mixture of repeated measurements or mixture of grouped observations in the literature. By no means exhaustive, we list a few works with applications in topic modeling \cite{ritchie2020consistent}, in cognitive psychology \cite{elmore2004estimating}, and in developmental psychology \cite{cruz2004semiparametric}, cf. also \cite{wei2020convergence}. In machine learning applications, to allow flexible modeling it is often assumed that within each component the joint distribution has a product form without the requirement of having the same marginal distribution, cf. \cite{pal2002noise} and \cite{anandkumar2012method}. We keep the iid condition in the null hypothesis for tractable analysis. A motivating example is provided in the following.

Assume that two statisticians observe two bags of words $x^n$ and $y^n$ taken as excerpts from some documents where $n$ denote the number of words in each bag. For simplicity we assume that in each bag words have an identical marginal distribution and also the order of words is not important. Then it is likely that the bag of words $x^n$ ($y^n$) is not generated iid, since for example knowing the first word $x_1$ ($y_1$) to be in Latin (German) gives us a guess that the whole $x^n$ ($y^n$) is in Latin (German) \cite{jordan2010lec}. Since the order of words does not matter, $x^n$ and $y^n$ are sequences of exchangeable observations. It is natural to approximate the distributions of $x^n$ and $y^n$ by finite mixtures of iid distributions, due to de Finetti's theorem \cite{kirsch2019elementary}, in which each underlying state represents a topic. The two statisticians then form a hypothesis testing problem. In the null hypothesis, they assume that the two bags of words are generated jointly iid according to $P_{Y_iX_i}^{\otimes n}$ from an unknown topic $i\in [1:m]$ with probability $\nu_i$, for example $x_{l}$ is a direct translation or a synonym of $y_l$ for all $l\in [1:n]$. In the alternative hypothesis they assume that the two bags of words are generated independently and iid according to $Q_{Y_j}^{\otimes n}\times Q_{X_t}^{\otimes n}$ from an unknown topic $(j,t)$ with probability $\tau_{jt}$.

We similarly are interested in the optimal error exponent of the second type of error when the probability of the first type is restricted by some $\epsilon\in [0,1)$. The iid assumption of each component in the null hypothesis as well as the factorization assumption in the alternative hypothesis are used to facilitate the derivation. In contrast to the classic strong converse result in \cite{ahlswede1986hypothesis}, the maximum $\epsilon$-achievable error exponent generally depends on not only the prior distribution $(\nu_i)$ but also relations between different constituent components in the mixture distributions. When no compression is involved, the weak law of large numbers can be used to establish the maximum $\epsilon$-achievable error exponent, cf. \cite{hanspectrum,han2018first}. However, this argument is no longer applicable in the presence of compression. We need to use an argument established through proving exponential strong converses of constituent components.
\subsection{Contributions}
We summarize contents of our work in the following.
\begin{itemize}
\item We use code transformation arguments to unveil a new property of the classic Ahlswede-Csisz{\'a}r problem in a general setting, cf. Theorem \ref{theorem_acs_extended} in Appendix \ref{ap_common_seq}. The arguments are also useful in simplifying proofs of different results in our work.
\item We provide the optimal achievable error exponent in Theorem \ref{thm_3} and show in Corollary \ref{coroll_1} that for $\epsilon>0$ small enough the obtained error exponent is also $\epsilon$-optimal. Our proof is based on studying a compound hypothesis testing problem of differentiating between $\{P_{Y_iX_i}^{\otimes n}\}$ and $\{Q_{Y_j^{n}}\times Q_{X_t^{n}}\}$ under communication constraints. The results are established when $\{(Q_{Y_j^{n}})\}$ and $\{(Q_{X_t^n})\}$ are stationary memoryless processes or finite Markov processes with stationary transition probabilities.
\item Under a further sufficient condition on the sets of distributions $\{P_{Y_iX_i}^{\otimes n}\}$ and $\{Q_{Y_j^n}\}$, Assumption \ref{assump_2},  we provide a complete characterization of the maximum $\epsilon$-error exponent in Theorem \ref{thm_4}. It is shown that even if the strong converse is available for the compound problem of testing $\{P_{Y_iX_i}^{\otimes n}\}$ against $\{Q_{Y_j^n}\times Q_{X_t^n}\}$, it is not sufficient for establishing the maximum $\epsilon$-error exponent in our mixture setting.  Our derivation is based on the exponentially strong converse of the false alarm probabilities under the assumed condition.
  \item We refine a recently established connection in \cite{vu2021hypothesis} between the Wyner-Ahlswede-K{\"o}rner (WAK) problem \cite{wyner1975source,ahlswede1975source} and the hypothesis testing against independence problem. While the previous result holds only under the stationary ergodic assumption, our new connection in Theorem \ref{thm_5} is valid under a more general assumption. We then use the refined connection to establish the corresponding minimum achievable compression rate and the minimum $\epsilon$-achievable compression rate for the WAK problem with mixture distributions.
  \end{itemize}
 \subsection{Organization}
  Our paper is organized as follows. In Section \ref{sec_2} we review previous results and define quantities which are essential to characterize optimal error exponents in our study. We then establish various results for the compound setting, some might be of independent interest in Section \ref{sec_3}. In Section \ref{sec_4}, we provide the maximum achievable error exponent in the mixture setting. We state the sufficient condition and use it to establish the maximum $\epsilon$-achievable error exponent in the mixture setting in Section \ref{sec_5}. Then the refined connection between the WAK problem and the hypothesis testing against independence problem is given in Section \ref{sec_6}. Consequences of the new connection are also given therein.
  \subsection{Notations}
 We focus on finite alphabets in this paper. Before we begin we make the following conventions. Given a probability measure $\mu$, $\mu^{\otimes n}$ denotes its $n$-fold product measure extension. $\log(·)$ denotes the natural logarithm. For any two distributions $P$ and $Q$ on an alphabet $\mathcal{U}$, assume that $Q(u)=0$, if $P$ is absolutely continuous w.r.t $Q$, denoted by $P\ll Q$, holds then we define $\iota_{P\Vert Q}(u) = 0$, otherwise we define $\iota_{P\Vert Q}(u) = +\infty$, irrespective of whether $P(u)$ is equal to 0 or not. We also define\footnote{For simplicity we use the convention $\log 0 = -\infty$ in the following.} $\iota_{P\Vert Q}(u) = \log P_{Y}(u)/Q_{Y}(u)$ when $Q(u)>0$. The relative entropy between two distributions $P$ and $Q$ is defined as $D(P\Vert Q) = \mathbb{E}[\iota_{P\Vert Q}(U)]$ where $U\sim P$. For a joint distribution $P_{UV}$ on $\mathcal{U}\times\mathcal{V}$ if $P_{U}(u)=0$ or $P_{V}(v)=0$ holds then we define $\iota_{P_{UV}}(u;v)=0$ as $P_{UV}\ll P_{U}\times P_{V}$ is valid. Otherwise we define $\iota_{P_{UV}}(u;v) = \log P_{UV}(u,v)/(P_{U}(u)\times P_{V}(v))$. The mutual information between $U$ and $V$ that is jointly distributed according to $P_{UV}$ is defined as $I(U;V) = \mathbb{E}[\iota_{P_{UV}}(U;V)]$. For a finite set $\mathcal{A}$, we use $\vert  \mathcal{A}\vert  $ and $\mathcal{A}^c$  to denote its cardinality and its complement. For a mapping $\phi\colon\mathcal{X}\to\mathcal{M}$ we define $\vert\phi\vert$ to be the cardinality of its range, $\vert\phi\vert \triangleq \vert \mathcal{M} \vert$. For a given distribution $P$ on $\mathcal{X}$ and a stochastic mapping $f\colon \mathcal{X}\to \{0,1\}$ with the corresponding transition kernel $W$ we define 
 \begin{align}
 P(f) \triangleq \sum_xP(x)W(0\vert  x),\; \text{and}\; P(1-f) \triangleq 1-P(f).
 \end{align}
\section{Preliminaries}\label{sec_2}
We review previous results on the hypothesis testing problem with one-sided compression. Then we present an important assumption and essential quantities that are needed to establish results of our study in later sections.
\begin{figure}[htb]
\centering
\includegraphics{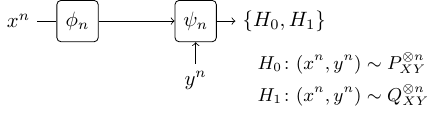}
\caption{Illustration of the hypothesis with communication constraint setting.}\label{as_fig}
\end{figure}

In \cite{ahlswede1986hypothesis}, the authors studied the following binary hypothesis testing problem. Deciding whether $(x^n,y^n)$ is iid generated from $H_0\colon P_{XY}$ or $H_1\colon Q_{XY}$ using a testing scheme $(\phi_n,\psi_n)$. Herein $\phi_n$ is a compression mapping,
\begin{equation}
  \phi_n\colon\mathcal{X}^n\to\mathcal{M},\label{testing_sch_a}
\end{equation}
and $\psi_n$ is a decision mapping
\begin{equation}
  \psi_n\colon\mathcal{Y}^n\times \mathcal{M}\to \{0,1\}.\label{testing_sch_b}
\end{equation}
The setting is depicted in Fig. \ref{as_fig}.
The corresponding type I and type II error (also known as false alarm and miss detection) probabilities are given by  
\begin{align}
  \alpha_n &= P_{Y^n\phi_n(X^n)}(1-\psi_n) = \sum P_{Y^n\phi_n(X^n)}(y^n,\phi_n(x^n))P_{1\vert  y^n,\phi_n(x^n)}\nonumber\\
  \beta_n &= Q_{Y^n\phi_n(X^n)}(\psi_n) = \sum Q_{Y^n\phi_n(X^n)}(y^n,\phi_n(x^n))P_{0\vert  y^n,\phi_n(x^n)},\label{error_probs_def}
\end{align}
where $P_{k\vert  y^n,\phi_n(x^n)}$, $k\in \{0,1\}$, is the probability that $\psi_n$ outputs $k$ given the pair $(y^n,\phi_n(x_n))$. The following achievability definition is repeatedly used in the subsequent analysis.
\begin{definition}\label{def_1}
For a given $R_c$ and an $\epsilon \in [0,1)$, $E$ is an $\epsilon$-achievable error exponent of the second type for the binary hypothesis testing problem if there exists a sequence of testing schemes $(\phi_n,\psi_n)$ such that

  \begin{align}
    \limsup_{n\to\infty}\alpha_n\leq \epsilon,\; &\liminf_{n\to\infty}\frac{1}{n}\log\frac{1}{\beta_n}\geq E,\nonumber\\
    \limsup_{n\to\infty}\frac{1}{n}&\log\vert  \phi_n\vert  \leq R_c.\label{conds_def_1}
  \end{align}
We define $E^{\star}_{\epsilon}(R_c)$ as the supremum of all $\epsilon$-achievable error exponent at $R_c$.
\end{definition}
When $Q(y\vert  x)>0$ for all $(x,y)\in\mathcal{X}\times\mathcal{Y}$ holds, Ahlswede and Csisz{\'a}r proved the strong converse result that for a given $R_c$, $E_{\epsilon}^{\star}(R_c)$ does not depend on $\epsilon\in[0,1)$. They also provided a multi-letter formula for $E_{\epsilon}^{\star}(R_c)$ under the stated condition. In the case that $Q_{XY} = P_{X}\times P_{Y}$ holds, the formula reduces to\footnote{For notation simplicity, we suppress the cardinality bound for $\mathcal{U}$ in the subsequent.}
\begin{align}
  E_{\epsilon}^{\star}(R_c) = \max_{P_{U\vert  X}\colon I(X;U)\leq R_c}I(Y;U),\;\forall \epsilon \in [0,1).
\end{align}
A closer examination also reveals that in the case $Q_{XY} = Q_Y\times Q_{X}$ holds, where $Q_Y$ and $Q_X$ are distributions on $\mathcal{Y}$ and $\mathcal{X}$ satisfying $D(P_Y\Vert Q_Y)<\infty$ and $D(P_X\Vert Q_X)<\infty$, we also have\footnote{In \cite{shalaby1992multiterminal}, the authors claimed in the introduction that Ahswede and Csisz{\'a}r obtained a single-letter characterization when $Q_{XY}=Q_X\times Q_Y$ using the entropy characterization methods. However, in \cite{ahlswede1986hypothesis} only characterizations for $Q_{XY} = P_X\times P_Y$ and an arbitrary $Q_{XY}$  with $R_c\geq H(X)$ were provided. Further details to validate this claim were not provided in \cite{shalaby1992multiterminal}.} for all $\epsilon\in [0,1)$
\begin{equation}
  E_{\epsilon}^{\star}(R_c) =  \max_{P_{U\vert  X}\colon I(X;U)\leq R_c}I(Y;U) + D(P_Y\Vert Q_Y) + D(P_X\Vert Q_X).
\end{equation}
In other words, Theorem 5 in \cite{ahlswede1986hypothesis} is tight in this case. We give a general relation using code transformations which leads to this observation in Appendix \ref{ap_common_seq}.\\
\noindent This observation motivates us to study in this paper generalized problems of compound hypothesis testing and hypothesis testing for mixture distributions. These settings are based on finite collections of distributions $\{P_{X_iY_i}^{\otimes n}\}_{i=1}^m$ and $\{Q_{Y_j^n}\times Q_{X_t^n}\}_{j\in [1:k],t\in [1:r]}$. Let 
\begin{align}
\mathcal{P}_{\mathcal{X}}=\{\tilde{P}\mid \tilde{P}\; &\text{is the marginal distributions on}\; \mathcal{X}\nonumber\\
& \text{ of } P_{X_iY_i} \;\text{for some } i\in [1:m]\}.
\end{align}
We assume that $\mathcal{P}_{\mathcal{X}}$ can be enumerated as $\{P_{\mathcal{X},1},\dots, P_{\mathcal{X},\vert \mathcal{P}_{\mathcal{X}}\vert}\}$. For each $s\in [1:\vert \mathcal{P}_{\mathcal{X}}\vert ]$, let $\mathfrak{F}_s$ be the set of indices $i$ such that $P_{\mathcal{X},s}$ is a the marginal distribution of $P_{Y_iX_i}$ on $\mathcal{X}$, i.e.,
\begin{equation}
\mathfrak{F}_s = \{i\mid P_{\mathcal{X},s} \text{ is the marginal on }\mathcal{X} \text{ of } P_{X_iY_i}\}.\label{fs_def}
\end{equation}
To characterize the optimal error exponents in the compound setting and the mixture setting, we need to make the following assumption.
\begin{assumption}\label{assump_1}
The processes $\{Q_{X_t^n}\}$ and $\{Q_{Y_j^n}\}$ are assumed to be in one of the following categories.  
\begin{itemize}
\item $Q_{X_t^n} = Q_{X_t}^{\otimes n}$ and $Q_{Y_j^n} = Q_{Y_j}^{\otimes n}$ hold for all $j\in [1:k]$ and $t\in [1:r]$, respectively. We assume further that for all $i\in [1:m]$, $\min_{j}D(P_{Y_i}\Vert Q_{Y_j})<\infty$ and for all $s\in [1:\vert \mathcal{P}_{\mathcal{X}}\vert ]$, $\min_{t}D(P_{\mathcal{X},s}\Vert Q_{X_t})<\infty$ hold. Accordingly, we define the minimum distances as\footnote{In case that $\argmin$ does not return a unique value, we pick the first one.}

\begin{align}
  \forall i\in [1:m],\; j_i^{\star} &= \argmin_{j\in[1:k]}D(P_{Y_i}\Vert Q_{Y_j}),\; d_i^{\mathrm{y}} = D(P_{Y_i}\Vert Q_{Y_{j_i^{\star}}}),\; \nonumber\\
  \forall s\in [1:\vert \mathcal{P}_{\mathcal{X}}\vert ],\; t_s^{\star} &= \argmin_{t\in[1:r]}D(P_{\mathcal{X},s}\Vert Q_{X_t}),\;d_s^{\mathrm{x}} = D(P_{\mathcal{X},s}\Vert Q_{X_{t_s^{\star}}}).
\end{align}
\item $(Q_{X_t^n})$ and $(Q_{Y_j^n})$ are Markov processes of finite orders with stationary transition probabilities for all $t\in [1:r]$ and $j\in [1:k]$. Compared to the first category we need an additional assumption which is for all $n$, we have $P_{\mathcal{X},s}^{\otimes n}\ll Q_{X_t^n}$ for all $(s,t)$ and similarly $P_{Y_i}^{\otimes n}\ll Q_{Y_j^n}$ for all $(i,j)$. We define the following limits, called relative entropy rates,

  \begin{align}
\forall (s,t),\;  A_{X}^{(st)} &= \lim_{n\to\infty}[D(P_{\mathcal{X},s}^{\otimes (n+1)}\Vert Q_{X_t^{n+1}}) - D(P_{\mathcal{X},s}^{\otimes n}\Vert Q_{X_t^n})],\nonumber\\
 \forall (i,j),\; A_{Y}^{(ij)} &= \lim_{n\to\infty}[D(P_{Y_i}^{\otimes (n+1)}\Vert Q_{Y_j^{n+1}}) - D(P_{Y_i}^{\otimes n}\Vert Q_{Y_j^n})].
  \end{align}
  The existence of these (relative divergence rate) limits are guaranteed by \cite[Theorem 1]{barron1985strong}. Similarly, we assume that for all $i\in [1:m]$, $\min_jA_{Y}^{(ij)}<\infty$ and for all $s\in [1:\vert \mathcal{P}_{\mathcal{X}}\vert ]$, $\min_tA_X^{(st)}<\infty$ hold. We further define
  \begin{align}
\forall i\in [1:m],\;    j_i^{\star} &= \argmin_{j\in[1:k]}A_{Y}^{(ij)},\; d_i^{\mathrm{y}} = A_{Y}^{(ij_i^{\star})},\nonumber\\
\forall s\in [1:\vert \mathcal{P}_{\mathcal{X}}\vert ],\;    t_s^{\star} &= \argmin_{t\in[1:r]} A_{X}^{(st)},\; d_s^{\mathrm{x}} = A_{X}^{(st_s^{\star)}}. 
   \end{align}
 \end{itemize}
\end{assumption}

\noindent For notation simplicity we also define total minimum distances for all $i\in [1:m]$, $i\in\mathfrak{F}_s$,
\begin{align}
  d_{is}^{\mathrm{yx}} = d_i^{\mathrm{y}} + d_s^{\mathrm{x}}.
\end{align}
The following quantities are used to characterize optimal error exponents in later sections\footnote{\noindent For clarity, we reserve pairs of random variables $(Y_i,X_i)$ for distributions $P_{Y_iX_i}$ and pairs of random variables $(\bar{Y}_i,\bar{X}_s)$ for distributions $P_{Y_i\vert  X_i}\times P_{\mathcal{X},s}$.}
\begin{align}
 \forall i\in [1:m],\; i\in \mathfrak{F}_s,\; \xi_{i}(R_c) &= \max_{P_{U\vert  \bar{X}_s}\colon I(\bar{X}_s;U)\leq R_c}I(\bar{Y}_i;U) + d_{is}^{\mathrm{yx}},\nonumber\\
 \forall s\in [1:\vert \mathcal{P}_{\mathcal{X}}\vert ],\;  \theta_s(R_c) &= \max_{P_{U\vert  \bar{X}_s}\colon I(\bar{X}_s;U)\leq R_c}\min_{i\in\mathfrak{F}_s} [I(\bar{Y}_i;U) + d_{is}^{\mathrm{yx}}],
\end{align}
where $P_{\bar{X}_s} = P_{\mathcal{X},s}$ and for all $i\in\mathfrak{F}_s$, $P_{\bar{Y}_i\vert  \bar{X}_s} = P_{Y_i\vert  X_i} $ hold.

\section{Compound Hypothesis Testing}\label{sec_3}
In this section we study the compound hypothesis testing problem of differentiating between two collections of distributions $\{P_{Y_iX_i}^{\otimes n}\}_{i\in [1:m]}$ and $\{Q_{Y_j^n}\times Q_{X_t^{n}}\}_{j\in [1:k], t\in [1:r]}$ satisfying Assumption \ref{assump_1}. We derive the optimal achievable error exponent and provide partial strong converse results for this setting.\\
A testing scheme for this problem is similarly characterized by a pair of compression-decision mappings $(\phi_n,\psi_n)$. For a given testing scheme $(\phi_n,\psi_n)$, we define the following quantities which characterize the maximum type I and type II error probabilities
\begin{align}
  \alpha_n &= \max_{i\in [1:m]}\alpha_n^{(i)} = \max_{i\in [1:m]}P_{Y_i^n\phi_n(X_i^n)}(1-\psi_n),\nonumber\\
  \beta_n &= \max_{j\in [1:k],t\in [1:r]}\beta_n^{(jt)}= \max_{j\in [1:k],t\in[1:r]}Q_{Y_j^n}\times Q_{\phi_n(X_t^n)}(\psi_n).
\end{align}
Similarly we also use Definition \ref{def_1} as the $\epsilon$-achievability definition. For a given $R_c$, the maximum $\epsilon$-achievable error exponent is denoted by $E_{\mathrm{comp},\epsilon}^{\star}(R_c)$.\\
As our achievability result will be used in Sections \ref{sec_4} and \ref{sec_5}, for notation compactness, we define the following auxiliary sets. For a given mapping $\phi_n$ and a positive number $E$, define for each $i\in [1:m]$, an intersected decision set $\mathcal{I}_{n}^{(i)}(E)$ as follows
\begin{align}
  \mathcal{I}_{n}^{(i)}(E) = &\{(y^n,\phi_n(x^n))\mid P_{Y_i^n\phi_n(X_i^n)}(y^n,\phi_n(x^n))\nonumber\\
  &\hspace{1cm}>e^{nE} \max_{j\in [1:k],t\in[1:r]}Q_{Y_j^n}\times Q_{\phi_n(X_t^n)}(y^n,\phi_n(x^n))\}.\label{ie_expl}
\end{align}
In the case $k=1$ and $r=1$ hold, $\mathcal{I}_{n}^{(i)}(E)$ is a decision region based on likelihood ratio for testing $P_{Y_i^n\phi_n(X_i^n)}$ against $Q_{Y_1^n}\times Q_{\phi_n(X_1^n)}$.
\subsection{Characterization of $E_{\mathrm{comp},0}^{\star}(R_c)$}
 The following result characterizes the optimal achievable error exponent in the compound hypothesis testing problem.
\begin{theorem}\label{thm_1}
  For a given compression threshold $R_c$, we have $E_{\mathrm{comp},0}^{\star}(R_c) = \min_{s\in [1:\vert \mathcal{P}_{\mathcal{X}}\vert ]}\theta_s(R_c)$. Furthermore,  
  for a given positive number $\gamma$ and any sequence of testing schemes $(\phi_n,\psi_n)$ such that $\bar{E}=\min_{s\in [1:\vert \mathcal{P}_{\mathcal{X}}\vert ]}\theta_s(R_c)-\gamma$ is achievable we also have with $E = \bar{E}-\gamma$
  \begin{align}
    \lim_{n\to\infty}P_{Y_i^n\phi_n(X_i^n)}\big[(\mathcal{I}_{n}^{(i)}(E))^c\big] = 0,\;\forall i \in [1:m].
  \end{align} 
\end{theorem}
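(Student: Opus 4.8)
\emph{Proof proposal.} The statement bundles the single-letter value $E_{\mathrm{comp},0}^{\star}(R_c)=\min_s\theta_s(R_c)$ with a spectrum-localization of the acceptance region; I would prove the localization first, since it is self-contained and is exactly what the later mixture converse uses. Writing $\mathcal{A}_n=\{(y^n,m):\psi_n=0\}$ for a scheme achieving $\bar{E}=\min_s\theta_s(R_c)-\gamma$, the hypothesis $\alpha_n^{(i)}\to 0$ gives $P_{Y_i^n\phi_n(X_i^n)}(\mathcal{A}_n)\to 1$, while bounding the maximum in \eqref{ie_expl} by the sum gives $P_{Y_i^n\phi_n(X_i^n)}(y^n,m)\le e^{nE}\sum_{j,t}Q_{Y_j^n}(y^n)Q_{\phi_n(X_t^n)}(m)$ on $(\mathcal{I}_n^{(i)}(E))^c$; summing this over $\mathcal{A}_n\cap(\mathcal{I}_n^{(i)}(E))^c$ yields $P_{Y_i^n\phi_n(X_i^n)}\big(\mathcal{A}_n\cap(\mathcal{I}_n^{(i)}(E))^c\big)\le kr\,e^{nE}\beta_n\le kr\,e^{n(E-\bar{E}+o(1))}\to 0$ because $E=\bar{E}-\gamma<\bar{E}$. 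Combined with $P_{Y_i^n\phi_n(X_i^n)}(\mathcal{A}_n^c)\to 0$ this gives the claim, using neither Assumption \ref{assump_1} nor the single-letter formula.

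\noindent For achievability $E_{\mathrm{comp},0}^{\star}(R_c)\ge\min_s\theta_s(R_c)$ I would use a layered codebook: for each marginal class $s$ fix $P_{U|\bar{X}_s}$ with $I(\bar{X}_s;U)\le R_c$ attaining, up to $\delta$, the max--min in $\theta_s(R_c)$, and draw layer $s$ as a covering code of $\approx e^{nI(\bar{X}_s;U)}$ codewords from the induced $P_U$, so the whole index set has size $\le S\,e^{nR_c}=e^{n(R_c+o(1))}$. The encoder reads $s$ off the type of $x^n$ (which under $H_0$ with $i\in\mathfrak{F}_s$ concentrates on $P_{\mathcal{X},s}$) and sends the layer label plus a codeword index jointly typical with $x^n$; the decoder accepts iff the recovered $(m,u^n)$ clears a message test $\tfrac1n\log\frac{P_{\phi_n(X_i^n)}(m)}{\max_t Q_{\phi_n(X_t^n)}(m)}>d_s^{\star}-\delta$ and, for some $i\in\mathfrak{F}_s$, a codeword test $\tfrac1n\log\frac{P^{\otimes n}_{\bar{Y}_i|U}(y^n|u^n)}{Q_{Y_{j_i^{\star}}^n}(y^n)}>\min_{i'\in\mathfrak{F}_s}[I(\bar{Y}_{i'};U)+d_{i'}^{\star}]-\delta$. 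Under $H_0$ the covering/Markov lemma with the divergence AEP of \cite{barron1985strong} (the law of large numbers in the i.i.d.\ case) makes both tests clear their thresholds with probability $\to 1$, since $\tfrac1n\log\frac{P_{\phi_n(X_i^n)}(m)}{Q_{\phi_n(X_{t_s^{\star}}^n)}(m)}\to d_s^{\star}$ for a covering code and $\tfrac1n\log\frac{P^{\otimes n}_{\bar{Y}_i|U}(y^n|u^n)}{Q_{Y_{j_i^{\star}}^n}(y^n)}\to I(\bar{Y}_i;U)+d_i^{\star}$; under any alternative $(j,t)$ a Markov-inequality bound on each of the finitely many $(s,i,\text{alternative})$ contributions gives $\beta_n\le e^{-n(\min_s\theta_s(R_c)-\delta)+o(n)}$, and a standard expurgation fixes the codebook. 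Let $\delta\downarrow 0$.

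\noindent For the converse $E_{\mathrm{comp},0}^{\star}(R_c)\le\min_s\theta_s(R_c)$, fix $s$ and a $0$-achievable $E$, and abbreviate $M=\phi_n(X^n)$ with $X^n\sim P_{\mathcal{X},s}^{\otimes n}$; this push-forward law is common to all $i\in\mathfrak{F}_s$. First a \emph{per-index} bound: since $\alpha_n^{(i)}\to 0$, the slices $\mathcal{A}_m=\{y^n:(y^n,m)\in\mathcal{A}_n\}$ carry $\ge 1-\lambda_n$ of $P_{Y_i^n|M=m}$ for $m$ in a set of $P_M$-probability $\to 1$; from $\beta_n\ge\sum_m Q_{\phi_n(X_{t_s^{\star}}^n)}(m)\,Q_{Y_{j_i^{\star}}^n}(\mathcal{A}_m)$, Jensen's inequality slice-by-slice and then over $m$, the AEP for $\iota_{P_{Y_i}^{\otimes n}\|Q_{Y_{j_i^{\star}}^n}}$, and the fact that $\tfrac1n\log\frac{P_M}{Q_{\phi_n(X_{t_s^{\star}}^n)}}$ is bounded with $P_M$-mean $\le d_s^{\star}+o(1)$, one gets $E\le\tfrac1n I(Y_i^n;M)+d_{is}^{\star}+o(1)$ for every $i\in\mathfrak{F}_s$ (no blowing-up is needed here since $\epsilon=0$). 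The second step upgrades this family of bounds, one per auxiliary $(M,\tilde{Y}_i^{\,T-1},T)$, into the max--min $\theta_s(R_c)$ by exhibiting a \emph{single} auxiliary good for all $i$ at once: on $X^n\sim P_{\mathcal{X},s}^{\otimes n}$ generate, conditionally independently, fictitious $\tilde{Y}_i^n\sim\prod_l P_{Y_i|X_i}(\cdot|X_l)$ for each $i\in\mathfrak{F}_s$, and set $U=\big(M,(\tilde{Y}_i^{\,T-1})_{i\in\mathfrak{F}_s},T\big)$, $\bar{X}_s=X_T$, $\bar{Y}_i=\tilde{Y}_{i,T}$ with $T$ uniform on $[1:n]$. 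Then $U-\bar{X}_s-\bar{Y}_i$ is Markov with $P_{\bar{Y}_i|\bar{X}_s}=P_{Y_i|X_i}$; the usual chain-rule and time-sharing identities give $I(\bar{X}_s;U)\le\tfrac1n\log|\phi_n|\le R_c+o(1)$, and dropping the remaining $\tilde{Y}_{i'}^{\,T-1}$ shows $I(\bar{Y}_i;U)\ge\tfrac1n I(\tilde{Y}_i^n;M)=\tfrac1n I(Y_i^n;M)$. Hence $\min_{i\in\mathfrak{F}_s}[I(\bar{Y}_i;U)+d_{is}^{\star}]\ge E-o(1)$, so $E\le\theta_s(R_c+o(1))$, and $E\le\theta_s(R_c)$ by continuity of $\theta_s$ in $R_c$; minimizing over $s$ finishes.

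\noindent The step I expect to be the real obstacle is this last one: a naive per-pair converse gives only $E\le\min_i\xi_i(R_c)$, which one checks can be strictly larger than $\min_s\theta_s(R_c)$, so the whole difficulty is manufacturing the single test channel demanded by the max--min, and padding the auxiliary with the conditionally independent copies $\{\tilde{Y}_i\}_{i\in\mathfrak{F}_s}$ is the device that makes the per-index information terms simultaneously attainable. The recurring secondary nuisance is the Markov alternatives: wherever an i.i.d.\ divergence $D(P_{Y_i}\|Q_{Y_j})$ or $D(P_{\mathcal{X},s}\|Q_{X_t})$ appears it must become the divergence rate $A_Y^{(ij)}$ or $A_X^{(st)}$, whose existence and Shannon--McMillan--Breiman--type convergence come from \cite[Theorem 1]{barron1985strong}, the absolute-continuity hypotheses in Assumption \ref{assump_1} keeping every information density finite.
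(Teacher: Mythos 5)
Your first paragraph (the localization claim) is exactly the paper's own argument — the generic bound $P_{Y_i^n\phi_n(X_i^n)}[(\mathcal{I}_n^{(i)}(E))^c]\le \alpha_n^{(i)}+e^{nE}\sum_{j,t}\beta_n^{(jt)}$ — and your converse, which couples all the $\tilde{Y}_i^n$ on one $X^n\sim P_{\mathcal{X},s}^{\otimes n}$ and uses $U=(M,(\tilde{Y}_i^{T-1})_{i\in\mathfrak{F}_s},T)$, is a legitimate alternative to the paper's route (the paper derives the converse from Theorem \ref{thm_3}: Han's lemma, a code transformation onto testing against $P_{Y_i}^{\otimes n}\times P_{X_i}^{\otimes n}$, the bound $\underline{I}\le\liminf\frac1n I$, and the auxiliary $U_{il}=(\bar\phi_n^s(X_i^n),X_i^{l-1})$, whose kernel is automatically common over $i\in\mathfrak{F}_s$ because the $\mathcal{X}$-marginal is common); your intermediate justifications are shakier than needed — the push-forward log-ratio $\frac1n\log(P_M/Q_{\phi_n(X_{t_s^\star}^n)})$ need not be bounded — but the per-index inequality $E\le\frac1n I(Y_i^n;M)+d_{is}^\star+o(1)$ follows cleanly from the divergence data-processing bound $(1-\alpha_n^{(i)})\log\frac1{\beta_n}\le I(Y_i^n;M)+D(P_{Y_i}^{\otimes n}\|Q_{Y_{j_i^\star}^n})+D(P_M\|Q_{\phi_n(X_{t_s^\star}^n)})+\log 2$, so that half is sound.

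The genuine gap is in your achievability scheme: your codeword test compares $y^n$ only against $Q_{Y_{j_i^\star}^n}$, so the miss-detection probabilities $\beta_n^{(jt)}$ with $j\neq j_i^\star$ — which enter the compound criterion $\beta_n=\max_{j,t}\beta_n^{(jt)}$ — are simply not controlled, and no Markov-inequality bound saves you because $\mathrm{E}_{Q_{Y_j^n}}[P^{\otimes n}_{\bar Y_i|U}(Y^n|u^n)/Q_{Y_{j_i^\star}^n}(Y^n)]$ is not $\le 1$ for $j\neq j_i^\star$. Concretely, take $m=S=k-1=1$, $Y$ independent of $X$, $P_{Y_1}=\mathrm{Bern}(1/2)$, $Q_{Y_1}=\mathrm{Bern}(0.4)$, $Q_{Y_2}=\mathrm{Bern}(0.9)$: then $j_1^\star=1$, $d_1^\star\approx 0.02$, and your likelihood-ratio test against $Q_{Y_1}$ accepts essentially every $y^n$ whose fraction of ones is at least $1/2$, an event of probability tending to one under $Q_{Y_2}$, so $\beta_n^{(2t)}\to 1$. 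This is why the paper's statistic uses $\min_{j\in[1:k]}\iota_{P_{Y^n}\|Q_{Y_j^n}}(y^n)$ (and $\min_t$ on the $x$-side), with Lemma \ref{lemma_mix_sup} showing these minima still concentrate at $d_i^\star$, $d_s^\star$ under $H_0$ (including infinite divergences and the Markov category); Remark 1's statement that the $j\neq j_i^\star$ exponents "do not matter" concerns the value of the exponent, not a license for the test to ignore those alternatives. A second, related soft spot is your decoder-side message test: the claimed concentration $\frac1n\log\frac{P_{\phi_n(X_i^n)}(m)}{Q_{\phi_n(X_{t_s^\star}^n)}(m)}\to d_s^\star$ is not a consequence of the covering lemma or the law of large numbers — push-forward information densities need not concentrate, since the preimage of the realized index can contain sequences that are not divergence-typical and inflate $Q_{\phi_n(X_t^n)}(m)$, and in the Markov category the type-based classifier says nothing about the $Q_{X_t^n}$ masses. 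The paper avoids this entirely by testing $\min_t\iota_{P_{X^n}\|Q_{X_t^n}}(x^n)>n(d_{\hat s}^\star-\gamma)$ at the encoder and sending an erasure flag otherwise, which yields the pointwise bound $Q_{\phi_n(X_t^n)}(m)\le e^{-n(d_{\hat s}^\star-\gamma)}P_{\phi_n(X^n)}(m)$ on non-erased messages by summing over preimages. Both defects are repairable (replace the denominators by $\max_j Q_{Y_j^n}(y^n)$ and move the $x$-side test before compression), but as written the type II guarantee fails.
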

The achievability proof of Theorem \ref{thm_1} is given in Appendix \ref{proof_thm_1}. The converse proof of Theorem \ref{thm_1} follows from the one of Theorem \ref{thm_3}. The proof of Theorem \ref{thm_1} uses a combination of techniques from \cite{hanspectrum}, \cite{verdu2012non} and our new mixing idea. In the following we provide an overview of steps in the proof of Theorem \ref{thm_1}.

Assume that the set of marginal distributions on $\mathcal{X}$, $\mathcal{P}_{\mathcal{X}}$, consists of a single element. Assume further that the number of components in the null hypothesis is two, i.e. $m=2$ and $\mathfrak{F}_1= \{1,2\}$. First we check whether the sequence $x^n$ is ``typical" in the sense that 
\begin{align}
\min_t \iota_{P_X^{\otimes n}\Vert Q_{X_t^n}}(x^n)>n(d_1^{\mathrm{x}}-\gamma).\label{change_qxto_px}
\end{align}
This helps us to perform the change of measure step from $Q_{X_t^n}$ to  $P_{X}^{\otimes n}$ in the analysis of the type-II (or miss detection) probability. The above condition is violated with vanishing probability in the analysis of the false alarm probability. We then select a test channel $P_{U\vert  X}$ and generate a codebook from the marginal distribution $P_{U}^{\otimes n}$. 
In our proof we do not estimate $i$ from the sequence $y^n$  to avoid potential complications in the analysis of the miss detection probability. We \textit{artificially create} instead the following joint distribution
\begin{align}
P_{Y^nU^n} = \sum_{i=1}^2 \nu_i P_{Y_iU}^{\otimes n},
\end{align}
where $\nu_i$ are positive probability weights. $P_{Y^nU^n}$ shifts the burden from calculating the miss detection probabilities to bounding the false alarm probabilities, which is less complex. We then consider the following score function which is helpful in defining a deterministic decision mapping $\psi_n$
\begin{align}
\zeta(y^n,u^n) = \iota_{P_{Y^nU^n}}(y^n;u^n) + \min_{j}\iota_{P_{Y^n\Vert Q_{Y_j^n}}}(y^n).\label{change_qto_pyu}
\end{align}
 In the score function $\zeta(\cdot,\cdot)$ the first term resolves the uncertainty within the set of marginal distributions $\{P_{Y_i}^{\otimes n}\}$, while the second term resolves the mismatch between two sets of distributions $\{P_{Y_i}^{\otimes n}\}$ and $\{Q_{Y_t^n}\}$. The second term also \textit{indirectly} checks whether $y^n$ is ``typical''.\\
Given a chosen codeword, which we explain how to obtain later, we decide that the null hypothesis is true if $\zeta(y^n,u^n)>n(E-d_1^{\mathrm{x}})$ is fulfilled. Given this decision the miss detection probabilities can be deduced based on the following chain of measure changing steps
\begin{align}
Q_{Y_j^n}(y^n)\stackrel{\eqref{change_qto_pyu}}{\to} P_{Y^n\vert  U^n}(y^n\vert  u^n),\; Q_{\phi_n(X_t^n)}(u^n) \stackrel{\eqref{change_qxto_px}}{\to} P_{\phi(X^n)}(u^n),
\end{align}
as well as the fact that $\sum_{y^n,u^n}P_{Y^n\vert  U^n}(y^n\vert  u^n)P_{\phi(X^n)}(u^n)\leq 1$ holds, where the summation is performed over the decision region.\\
Now to obtain a transmission message index, we search for a codeword such that it yields the lowest maximum false alarm conditional probabilities by looking at
\begin{align}
\max_{i\in\mathfrak{F}_s}\mathrm{Pr}\{\zeta(Y_i^n,u^n)<n(E-d_1^{\mathrm{x}})\vert  X_i^n=x^n\}.
\end{align}
In the analysis of the maximum false alarm probabilities changing measure from $P_{Y^nU^n}$ to $P_{Y_i^nU_i^n}$ in the expression of $\zeta(\cdot,\cdot)$ is relatively standard, cf. \cite{hanspectrum}. We then can use the standard typical arguments to conclude the existence of a good codebook. 
In the general case where the marginal set $\mathcal{P}_{\mathcal{X}}$ has multiple elements, we need to estimate $s$. Because of the way that we design $\zeta(\cdot,\cdot)$ this extra step does not affect the exponent of the miss detection probability.

\subsection{A partial characterization of $E_{\mathrm{comp},\epsilon}^{\star}(R_c)$}
\noindent We have the following result which provides a partial characterization of $E_{\mathrm{comp},\epsilon}^{\star}(R_c)$.
\begin{theorem}\label{thm_2}
  Given a positive number $R_c$, define the inactive set $\mathcal{S} = \{s\mid \theta_s(R_c)<\min_{i\in\mathfrak{F}_s}\xi_i(R_c)\}$.
  \begin{itemize}
  \item If\footnote{We use the following convention: if $\mathcal{S}=\varnothing$ then $\min_{s\in\mathcal{S}}(\cdot) = +\infty$ and $\max_{s\in\mathcal{S}}(\cdot) = -\infty$.} $\epsilon<\min\{\min_{s\in \mathcal{S}}\frac{1}{\vert  \mathfrak{F}_s\vert},1\}$ holds, then we have $$E_{\mathrm{comp},\epsilon}^{\star}(R_c) = \min_{s\in [1:\vert \mathcal{P}_{\mathcal{X}}\vert ]}\theta_s(R_c).$$
\item If $\epsilon>\max\{\max_{s\in\mathcal{S}}\frac{\vert  \mathfrak{F}_s\vert  -1}{\vert  \mathfrak{F}_s\vert}, 0\}$, then we have $$E_{\mathrm{comp},\epsilon}^{\star}(R_c) = \min_{i\in [1:m]}\xi_i(R_c).$$
\item Let $s^{\star}$ be an optimality achieving index, i.e., $\theta_{s^{\star}}(R_c) = \min_{s^{\prime}}\theta_{s^{\prime}}(R_c)$. Assume that $s^{\star}$ is active, i.e., $\theta_{s^{\star}}(R_c) = \min_{i\in \mathfrak{F}_{s^{\star}}}\xi_i(R_c)$. For an arbitrarily given $\gamma>0$, for any sequence of testing schemes $(\phi_n,\psi_n)$ such that the following inequalities are satisfied
\begin{align}
  \limsup_{n\to\infty}\frac{1}{n}\log\vert  \phi_n\vert  \leq R_c,\;\liminf_{n\to\infty}\frac{1}{n}\log\frac{1}{\beta_n}\geq \min_s\theta_s(R_c)+\gamma,
\end{align}
we have then $\lim_{n\to\infty}\alpha_n^{(i^{\star})} = 1$, where $i^{\star}\in\mathfrak{F}_{s^{\star}}$ is an index such that $\xi_{i^{\star}}(R_c) =\theta_{s^{\star}}(R_c)$. It also implies that under this assumption we have $E_{\mathrm{comp},\epsilon}^{\star}(R_c) = \min_s\theta_s(R_c)$ for all $\epsilon\in [0,1)$.
\end{itemize}
\end{theorem}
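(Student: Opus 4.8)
The strategy is to extract from the compound test a single binary hypothesis test to which a strong converse applies, and then to bootstrap that converse. Fix $s^\star\in\mathcal{S}^\star$ with $\theta_{s^\star}(R_c)=\min_{i\in\mathfrak{F}_{s^\star}}\xi_i(R_c)$ and pick $i^\star\in\mathfrak{F}_{s^\star}$ attaining $\xi_{i^\star}(R_c)=\min_{i\in\mathfrak{F}_{s^\star}}\xi_i(R_c)$; since $s^\star\in\mathcal{S}^\star$ this gives
\[
  \xi_{i^\star}(R_c)=\min_{i\in\mathfrak{F}_{s^\star}}\xi_i(R_c)=\theta_{s^\star}(R_c)=\min_{s\in[1:S]}\theta_s(R_c).
\]
Set $j^\star=j_{i^\star}^\star$ and $t^\star=t_{s^\star}^\star$. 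Because $i^\star\in\mathfrak{F}_{s^\star}$ we have $P_{\bar{X}_{s^\star}}=P_{\mathcal{X},s^\star}$ equal to the marginal of $X_{i^\star}$, so $(\bar{X}_{s^\star},\bar{Y}_{i^\star})\sim P_{X_{i^\star}Y_{i^\star}}$ and $\xi_{i^\star}(R_c)=\max_{P_{U|X_{i^\star}}\colon I(X_{i^\star};U)\le R_c}I(Y_{i^\star};U)+d_{i^\star}^\star+d_{s^\star}^\star$. The key point is that, using the \emph{same} pair $(\phi_n,\psi_n)$, the numbers $\alpha_n^{(i^\star)}$ and $\beta_n^{(j^\star t^\star)}$ are precisely the type~I and type~II error probabilities of the single test of $P_{X_{i^\star}Y_{i^\star}}^{\otimes n}$ against $Q_{Y_{j^\star}^n}\times Q_{X_{t^\star}^n}$, and $\xi_{i^\star}(R_c)$ is exactly the optimal error exponent of that test: in the memoryless category this is the factorized-$Q$ characterization discussed after Theorem~5 of \cite{ahlswede1986hypothesis}, made rigorous through the code transformation of Appendix~\ref{ap_common_seq}; in the Markov category one first replaces $Q_{Y_{j^\star}^n}$, $Q_{X_{t^\star}^n}$ by the memoryless envelopes $e^{-nd_{i^\star}^\star}P_{Y_{i^\star}}^{\otimes n}$, $e^{-nd_{s^\star}^\star}P_{\mathcal{X},s^\star}^{\otimes n}$ on the typical set under $H_0$, on which the relevant log-likelihood ratios concentrate around their rates (whose existence is granted by \cite[Theorem 1]{barron1985strong}), thereby reducing to a testing-against-independence problem shifted by $d_{i^\star}^\star+d_{s^\star}^\star$. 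In either case the strong converse for this single test asserts that its maximum $\epsilon$-achievable error exponent equals $\xi_{i^\star}(R_c)$ for every $\epsilon\in[0,1)$.

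Now let $(\phi_n,\psi_n)$ be a sequence of testing schemes with $\limsup_n\frac1n\log|\phi_n|\le R_c$ and $\liminf_n\frac1n\log\frac1{\beta_n}\ge\min_s\theta_s(R_c)+\gamma$, and assume for contradiction that $\alpha_n^{(i^\star)}\not\to 1$. Then there are $\epsilon_0>0$ and a subsequence $(n_k)$ with $\alpha_{n_k}^{(i^\star)}\le 1-\epsilon_0$. Complete $(n_k)$ to a full sequence of codes indexed by all block lengths by using, for $n\notin\{n_k\}$, the trivial scheme that always declares $H_0$ ($|\phi_n|=1$, $\psi_n\equiv 0$, so $\alpha_n^{(i^\star)}=0$ and $\beta_n^{(j^\star t^\star)}=1$). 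The completed sequence still obeys $\limsup_n\alpha_n^{(i^\star)}\le 1-\epsilon_0<1$ and $\limsup_n\frac1n\log|\phi_n|\le R_c$, so the single-test strong converse forces $\limsup_n\frac1n\log\frac1{\beta_n^{(j^\star t^\star)}}\le\xi_{i^\star}(R_c)$. On the other hand $\beta_n\ge\beta_n^{(j^\star t^\star)}$ and $\min_s\theta_s(R_c)=\xi_{i^\star}(R_c)$, so along $(n_k)$ one has $\frac1{n_k}\log\frac1{\beta_{n_k}^{(j^\star t^\star)}}\ge\xi_{i^\star}(R_c)+\gamma-o(1)$, whence $\limsup_n\frac1n\log\frac1{\beta_n^{(j^\star t^\star)}}\ge\xi_{i^\star}(R_c)+\gamma$, a contradiction. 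Hence $\lim_{n\to\infty}\alpha_n^{(i^\star)}=1$.

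For the consequence, fix $\epsilon\in[0,1)$ and suppose some $E>\min_s\theta_s(R_c)$ is $\epsilon$-achievable for the compound problem, witnessed by $(\phi_n,\psi_n)$ with $\limsup_n\alpha_n\le\epsilon$, $\limsup_n\frac1n\log|\phi_n|\le R_c$, $\liminf_n\frac1n\log\frac1{\beta_n}\ge E$. Taking $\gamma\in(0,E-\min_s\theta_s(R_c))$ in the previous step yields $\alpha_n^{(i^\star)}\to 1$, and since $\alpha_n\ge\alpha_n^{(i^\star)}$ this gives $\limsup_n\alpha_n=1>\epsilon$, a contradiction; hence $E_{\mathrm{comp},\epsilon}^\star(R_c)\le\min_s\theta_s(R_c)$. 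Combining with the achievability $E_{\mathrm{comp},0}^\star(R_c)=\min_s\theta_s(R_c)$ from Theorem~\ref{thm_1} and the monotonicity $E_{\mathrm{comp},\epsilon}^\star(R_c)\ge E_{\mathrm{comp},0}^\star(R_c)$ gives $E_{\mathrm{comp},\epsilon}^\star(R_c)=\min_s\theta_s(R_c)$ for all $\epsilon\in[0,1)$. The main obstacle is the first step --- the reduction to, and strong converse of, the single test. In the memoryless category this is exactly the content of Appendix~\ref{ap_common_seq}; in the Markov category it requires controlling the fluctuations of the relative-entropy-rate log-likelihood ratios tightly enough to transplant the memoryless strong converse onto the Markov envelopes, so that the exponent $\xi_{i^\star}(R_c)$ is genuinely unbeatable even when $\alpha_n^{(i^\star)}$ is allowed to approach $1$.
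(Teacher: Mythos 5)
There is a genuine gap, in fact two. First, the statement you were asked to prove has three parts, and your argument addresses only the last one: you never prove the characterization $E_{\mathrm{comp},\epsilon}^{\star}(R_c)=\min_s\theta_s(R_c)$ for $\epsilon<\min\{\min_{s\in\mathcal{S}}1/|\mathfrak{F}_s|,1\}$ (the paper does this by a change-of-measure argument in the style of Tyagi--Watanabe: restricting to the set $\tilde{\mathcal{V}}_n$ where all $m$ conditional acceptance probabilities exceed $\eta$, which is nonempty only when $\epsilon<1/|\mathfrak{F}_s|$, then single-letterizing via the functional $R_{\mathrm{HT}}^{\mu,\alpha}$ and the hyperplane characterization of the convex region $\mathcal{R}$), nor the achievability of $\min_i\xi_i(R_c)$ for $\epsilon>\max_{s\in\mathcal{S}}(|\mathfrak{F}_s|-1)/|\mathfrak{F}_s|$ (the paper partitions $\mathcal{X}^n$ into $|\mathfrak{F}_s|$ cells each of probability $>1-\epsilon$ and runs a different code on each cell). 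Neither of these reduces to the single-test strong converse you invoke.

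Second, even for the part you do attack, the argument breaks at the padding step. The claim to be proven is $\lim_n\alpha_n^{(i^{\star})}=1$, which is strictly stronger than what the asymptotic strong converse ``$E_{\epsilon}^{\star}=\xi_{i^{\star}}(R_c)$ for all $\epsilon\in[0,1)$'' can deliver: that statement constrains exponents defined through $\liminf_n\frac1n\log\frac1{\beta_n}$ for full sequences with $\limsup_n\alpha_n\le\epsilon$, and it does not bound $\limsup_n\frac1n\log\frac1{\beta_n^{(j^{\star}t^{\star})}}$ as you assert. Your completed sequence uses the trivial test off the subsequence, so $\beta_n^{(j^{\star}t^{\star})}=1$ there and the completed sequence only witnesses the exponent $0$; hence it produces no contradiction with $E_{\epsilon}^{\star}=\xi_{i^{\star}}(R_c)$, and the contradiction you draw is unsupported. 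What would suffice is a finite-$n$ converse with slack vanishing uniformly over schemes (so that $\alpha_n^{(i^\star)}\le 1-\epsilon_0$ at block length $n$ forces $\frac1n\log\frac1{\beta_n^{(j^\star t^\star)}}\le\xi_{i^\star}(R_c)+o(1)$), but you neither state nor prove such a bound, and it is not what Theorem \ref{theorem_acs_extended} asserts. The paper instead proves an \emph{exponentially} strong converse directly: assuming the acceptance probability under $P_{Y_{i^{\star}}X_{i^{\star}}}^{\otimes n}$ exceeds $e^{-\lambda n}$, it tilts the distribution onto the acceptance region (divergence at most $\lambda n$), single-letterizes through $R_{\mathrm{HT}}^{\mu,\alpha}$, and uses hyperplane separation of the closed convex set $\mathcal{R}$ to reach a contradiction for small $\lambda$; this shows the acceptance probability decays exponentially, hence $\alpha_n^{(i^{\star})}\to1$, which is the missing idea in your proposal.
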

A simple case in which the third statement in Theorem \ref{thm_2} holds, is when $\vert  \mathfrak{F}_s\vert  =1$ for all $s\in [1:\vert \mathcal{P}_{\mathcal{X}}\vert ]$. The proof of Theorem \ref{thm_2} is given in Appendix \ref{proof_thm_2}. The second part of Theorem \ref{thm_4} will be employed in proving the converse of Theorem \ref{thm_4} for mixture models in Section \ref{sec_4}. 

Recall that $\mathfrak{F}_s$ represents the set of distributions $P_{Y_iX_i}$ which have the same marginal distribution $P_{\mathcal{X},s}$ on $\mathcal{X}$, cf. Equation \eqref{fs_def}. In the following we provide an outline of the proofs of the first two points in the statement Theorem \ref{thm_2}. 
 
We discuss in this paragraph the first item in the first part of Theorem \ref{thm_2}. When $s$ is active, i.e., $\theta_s(R_c) = \min_{i\in\mathfrak{F}_s}\xi_i(R_c)$ holds, then it follows from the strong converse bound for testing $P_{Y_iX_i}^{\otimes n}$ against $Q_{Y_{j_i^{\star}}^n}\times Q_{X_{t_s^{\star}}^n}$ for each $i$ inside the class $\mathfrak{F}_s$, that $E_{\mathrm{comp},\epsilon}^{\star}(R_c)\leq \theta_s(R_c)$ holds for all $\epsilon\in [0,1)$, cf. Theorem \ref{theorem_acs_extended}. Therefore we only need to focus on the inactive set $\mathcal{S}$. For simplicity, in this discussion we can assume that $\mathcal{S} = \{1\}$ holds and there are two components inside $\mathfrak{F}_1$. We assume further that there is no mismatch, i.e., $\{Q_{Y_j^n}\} = \{P_{Y_i}^{\otimes n}\}$ and $\{Q_{X_t^n}\} = \{P_{X_i}^{\otimes n}\}$ hold. The formulation of $\theta_1(R_c)$ requires the selection of a test channel $P_{U\vert  \bar{X}_1}$. To show the strong converse bound, the general idea is hence to identify a common test channel $P_{U\vert  \bar{X}_1}$. This can be done by considering relevant sets $\mathcal{V}_i$, $i=1,2$, of $x^n$ such that for each $i=1,2,$ $$\mathrm{Pr}(\psi_n(Y_i^n,\phi_n(x^n))=0\vert  X_i^n=x^n)>\eta$$
holds, where $\eta\in (0,1-(\epsilon+\gamma))$ and $\gamma\in (0,1-\epsilon)$ are arbitrary. By setting $\eta=1/2-(\epsilon+\gamma)$ and using the reverse Markov inequality, we obtain the following inequalities
  \begin{equation}
    P_{X_i}^{\otimes n}(\mathcal{V}_i)\geq 1/(1+2(\epsilon+\gamma)),\; \forall i=1,2.
  \end{equation}
We require that the intersection $\mathcal{V} = \cap_{i=1}^2\mathcal{V}_i$ should be non-empty.  This allows us to define a joint distribution $P_{\tilde{Y}_1^n\tilde{Y}_2^n\tilde{X}^n}$ where $P_{\tilde{X}^n}$ is supported on $\mathcal{V}$. Note also that $\eta$-restriction in the definition of $\mathcal{V}_i$ allows us to obtain the following inequality
 \begin{equation*}
 n(E-\dots) \leq I(\tilde{Y}_1^n;\phi_n(\tilde{X}^n)) + (\text{a bounded function of $\eta$, $\epsilon$ and $\gamma$}).
 \end{equation*}
 With this we can identify $\tilde{U}_l = (\phi_n(\tilde{X}^n),(\tilde{Y}_i^{l-1})_{i=1}^2)$ for all $l\in [1:n]$. The variational arguments in \cite{tyagi2019strong} can be used to indeed show that $E_{\mathrm{comp},\epsilon}^{\star}(R_c)\leq \theta_1(R_c)$ also holds. To make $\mathcal{V}$ non-empty, we must have $\epsilon<1/2 = \frac{1}{\vert  \mathfrak{F}_1\vert}$. When the inactive set $\mathcal{S}$ contains more than one element we obtain the corresponding threshold $\min_{s\in\mathcal{S}}\frac{1}{\vert  \mathfrak{F}_s\vert}$. 

 Now we discuss about the second item in the first part of Theorem \ref{thm_2}. Since $E_{\mathrm{comp},\epsilon}^{\star}(R_c)\leq \xi_i(R_c)$ holds for all $i\in [1:m]$ and $\epsilon\in [0,1)$, cf. Theorem \ref{theorem_acs_extended}, we only explain the achievability direction of the second item. For simplicity we assume that the set of marginal distributions $\mathcal{P}_{\mathcal{X}}$ has a single element, the element is inactive $\mathcal{S}=\{1\}$, and $\mathfrak{F}_1= \{1,2\}$. Our achievability idea is to build two sequences of testing schemes separately and then mix them together. For this we need to divide the space $\mathcal{X}^n$ into $\vert  \mathfrak{F}_1\vert   = 2$ partitions $\mathcal{C}_1$ and $\mathcal{C}_2$ such that $P_{\bar{X}_1}^{\otimes n}(\mathcal{C}_l)> 1-\epsilon$ for all $l=1,2$ for all sufficiently large $n$. Since $\epsilon>\frac{\vert  \mathfrak{F}_1\vert  -1}{\vert  \mathfrak{F}_1\vert  }=1/2$, such partitioning can be done. For each $i\in \{1,2\}$, we design a sequences of testing schemes $(\phi_n^{1i},\psi_n^{1i})$ to differentiate between $P_{Y_iX_i}^{\otimes n}$ and $Q_{Y_{j_i^{\star}}^n}\times Q_{X_{t_1^{\star}}^n}$ such that $\xi_i(R_c)-\gamma$ is achievable.
As in the proof of Theorem \ref{thm_1} we also define an auxiliary mixture distribution
\begin{align}
P_{Y^nX^n} = \sum_{i}\nu_i P_{Y_iX_i}^{\otimes n}.
\end{align}
The mixture distribution helps to alleviate the estimation of the distribution of $y^n$. Once the preparation is complete, we perform the compression as follows. \\
We first check if $x^n$ is a typical sequence in the sense that whether the following condition is fulfilled or not
\begin{align}
\min_{t}\iota_{P_{X^n}\Vert Q_{X_t^n}}(x^n)>n(d_{1}^{\mathrm{x}}-\gamma).
\end{align}
Again the above inequality also helps resolving the mismatch between two sets of distributions $\{P_{X_i}^{\otimes n}\}$ and $\{Q_{X_t^n}\}$.
Suppose that $x^n$ is typical. If $x^n\in \mathcal{C}_1$ then we use $\phi_n^{11}$ to compress it, and similarly when $x^n\in\mathcal{C}_2$ we use $\phi_n^{12}$ to compress it. The joint compression mapping is then
  \begin{align}
    \phi_n^1(x^n) = \phi_{n}^{11}(x^n)\mathbf{1}\{x^n\in\mathcal{C}_1\} + \phi_{n}^{12}(x^n)\mathbf{1}\{x^n\in\mathcal{C}_2\}.
    \end{align}
The joint compression mapping induces the following distribution from $P_{Y^nX^n}$
  \begin{align}
    P_{Y^nU} = \nu_1P_{Y_1^n\phi_n^{11}(X_1^n)} + \nu_2 P_{Y_2^n\phi_n^{12}(X_2^n)}.
  \end{align}
  We also define the following score function
  \begin{align}
\zeta(y^n,\phi_n^{1}(x^n)) = \iota_{P_{Y^nU}}(y^n;\phi_n^{1}(x^n)) + \min_{j}\iota_{P_{Y^n\Vert Q_{Y_j^n}}}(y^n).
\end{align}
We say $H_0$ is true if $\zeta(y^n,\phi_n^{1}(x^n))>n(E-d_1^{\mathrm{x}})$ holds. Let us look at $\alpha_n^{(1)}$ which can be upper-bounded as
\begin{align}
\alpha_n^{(1)}&\leq \mathrm{Pr}\{\zeta(Y_1^n,\phi_n^1(X_1^n))>n(E-d_1^{\mathrm{x}}), X_1^n \text{ is typical}\}\nonumber\\
& + \mathrm{Pr}\{X_1^n \text{ is atypical}\}\nonumber\\
&\stackrel{(a)}{\leq} \mathrm{Pr}\{\zeta(Y_1^n,\phi_n^{11}(X_1^n))>n(E-d_1^{\mathrm{x}}), X_1^n \in\mathcal{C}_1\}\nonumber\\
& + \mathrm{Pr}\{X_1^n\notin\mathcal{C}_1\} + \mathrm{Pr}\{X_1^n \text{ is atypical}\}\nonumber\\
&\leq \mathrm{Pr}\{\zeta(Y_1^n,\phi_n^{11}(X_1^n))>n(E-d_1^{\mathrm{x}})\}\nonumber\\
& + \mathrm{Pr}\{X_1^n\notin\mathcal{C}_1\} + \mathrm{Pr}\{X_1^n \text{ is atypical}\},
\end{align}
where $(a)$ holds since when $x^n$ is in $\mathcal{C}_1$ and is typical, we use $\phi_n^{11}$ to compress it. We have $\mathrm{Pr}\{X_1^n\notin\mathcal{C}_1\}<\epsilon$ by construction. The first term can be shown to be vanishing using similar steps as in the proof of Theorem \ref{thm_1}.\\
 We note that Theorem \ref{thm_1} only guarantees that $\theta_1(R_c)-\gamma$ is achievable which is below $\min_{i\in\mathfrak{F}_1}\xi_i(R_c)-\gamma$, our desired error exponent in this part of Theorem \ref{thm_2}. The collection of sets $\{\mathcal{C}_l\}_{l=1}^2$ is used to resolve the confusion about which compression mappings we should use when $s=1$ is not active. We are willing to pay an additional $\epsilon$ error probability price for using this collection. The general case is a little bit more complicated but follows the same principles as we discuss herein. Note that when $s$ is active, we do not need to divide $\mathcal{X}^n$ into a collection of subsets as above. This is because by Theorem \ref{thm_1} we can design a sequence of testing schemes to differentiate between $\{P_{Y_iX_i}^{\otimes n}\}_{i\in\mathfrak{F}_s}$ and $\{Q_{Y_j^n}\times Q_{X_t^n}\}$ such that $\min_{i\in\mathfrak{F}_s}\xi_i(R_c)-\gamma$ is achievable.

\begin{remark}
We have $\theta_s(R_c)\leq \min_{i\in\mathfrak{F}_s}\xi_i(R_c)$. The inequality can be strict which can be shown numerically. This means that in general strong converse does not hold for the compound testing problem. In other words, $E_{\mathrm{comp},\epsilon}^{\star}(R_c)$ depends on $\epsilon$.
\end{remark}

\section{Testing Against Generalized Independence}\label{sec_4}
In this section we consider the hypothesis testing problem involving mixture distributions. We use results and techniques from Section \ref{sec_3} to establish the optimal achievable error exponent in this section. We begin with our model's definition.\\
Assume that we have two sets of distributions $\{P_{Y_iX_i}^{\otimes n}\}_{i=1}^m$ and $\{Q_{Y_j^n}\times Q_{X_t^n}\}_{j\in [1:k],t\in [1:r]}$, which fulfill the conditions given in Assumption \ref{assump_1}.
For given $\{P_{Y_iX_i}^{\otimes n}\}_{i=1}^m$, let the distribution under the null hypothesis be defined as 
\begin{align}
  P_{Y^nX^n}= \sum_{i}\nu_iP_{Y_iX_i}^{\otimes n},\;\text{where} \; \forall i,\;\nu_i>0,\;\text{and}\;\sum_{i}\nu_i = 1.
\end{align}
Similarly the distribution under the alternative hypothesis is given by
\begin{align}
Q_{Y^nX^n}= \sum_{j,t}\tau_{jt}Q_{Y_j^n}\times Q_{X_t^n},
\end{align}
where $\tau_{jt}\geq 0$ for all $(j,t)$, $\sum_{j,t}\tau_{jt} = 1$, and $\forall i\in [1:m],\; i\in\mathfrak{F}_s,\; \tau_{j_i^{\star}t_s^{\star}}>0$. For notation simplicity in the subsequent analysis we define $\gamma_q = \min_{i\in[1:m],i\in\mathfrak{F_s}}\tau_{j_i^{\star}t_s^{\star}}$. We name this problem testing against generalized independence.\\
The model of $Q_{Y^nX^n}$ subsumes the following two cases:
\begin{itemize}
\item testing against independence in which $Q_{Y^nX^n} = (\sum_{i}\nu_i P_{Y_i}^{\otimes n})\times (\sum_{i^{\prime}}\nu_{i^{\prime}}P_{X_{i^{\prime}}}^{\otimes n})$ hold,
\item and testing against (unobserved) conditional independence in which $Q_{Y^nX^n} = \sum_{i}\nu_i P_{Y_i}^{\otimes n}\times P_{X_i}^{\otimes n}$ hold.
\end{itemize} 
For a given pair of compression-decision mappings $(\phi_n,\psi_n)$, we define the corresponding type-I and type-II (false alarm and miss detection) probabilities as
\begin{align}
  \alpha_n &= P_{Y^n\phi_n(X^n)}(1-\psi_n),\nonumber\\
  \beta_n &= Q_{Y^n\phi_n(X^n)}(\psi_n).\label{alphabeta_defs}
\end{align}
Similarly as in Definition \ref{def_1} we say that $E$ is an $\epsilon$-achievable error exponent at a compression rate $R_c$ for testing $P_{Y^nX^n}$ against $Q_{Y^nX^n}$ if there exists a sequence of testing schemes $(\phi_n,\psi_n)$ such that all the conditions in  \eqref{conds_def_1} are satisfied. We denote the maximum $\epsilon$-achievable error exponent at the given rate $R_c$ by $E_{\mathrm{mix},\epsilon}^{\star}(R_c)$.
We first characterize the optimal achievable error exponent in the testing against generalized independence problem $E_{\mathrm{mix},0}^{\star}(R_c)$.
\begin{theorem}\label{thm_3}
  For a given compression rate $R_c$, in testing $P_{Y^nX^n}$ against $Q_{Y^nX^n}$ using one-sided compression, we have
  \begin{align}
    E_{\mathrm{mix},0}^{\star}(R_c) = E_{\mathrm{comp},0}^{\star}(R_c) = \min_{s\in[1:\vert \mathcal{P}_{\mathcal{X}}\vert ]}\theta_s(R_c).
  \end{align}  
\end{theorem}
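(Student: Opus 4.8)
The plan is to establish the two equalities in turn: first $E_{\mathrm{mix},0}^{\star}(R_c)=E_{\mathrm{comp},0}^{\star}(R_c)$ by a direct comparison of the error quantities of the two formulations, and then the common value $\min_s\theta_s(R_c)$ by combining the achievability scheme of Appendix~\ref{proof_thm_1} with a dedicated converse. For the equivalence, fix any scheme $(\phi_n,\psi_n)$ and let $\alpha_n^{(i)}$, $\beta_n^{(jt)}$ be its component error probabilities. Since $\mathcal{N}$ is nonempty we have $m\gamma_p\le1$, so the weight vector putting mass $1-(m-1)\gamma_p$ on any fixed index and $\gamma_p$ on the rest lies in $\mathcal{N}$; hence $(1-(m-1)\gamma_p)\max_i\alpha_n^{(i)}\le\sup_{P\in\mathfrak{P}_n}P_{Y^n\phi_n(X^n)}(1-\psi_n)\le\max_i\alpha_n^{(i)}$, and the mixture type-I error vanishes iff every $\alpha_n^{(i)}$ does. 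Likewise, using $\tau_{jt}\le\Gamma_q$ for all $(j,t)$ and $\tau_{j_i^{\star}t_s^{\star}}\ge\gamma_q$ for the critical pairs, $\gamma_q\max_{j,t}\beta_n^{(jt)}\le\sup_{Q\in\mathfrak{Q}_n}Q_{Y^n\phi_n(X^n)}(\psi_n)\le kr\,\Gamma_q\max_{j,t}\beta_n^{(jt)}$; since $\gamma_q$ is a fixed constant and $\tfrac1n\log\Gamma_q\to0$, the mixture type-II exponent coincides with $\tfrac1n\log\bigl(1/\max_{j,t}\beta_n^{(jt)}\bigr)$ up to a vanishing term. Thus a sequence of schemes is $0$-achievable with exponent $E$ for $\mathfrak{P}_n$ against $\mathfrak{Q}_n$ iff it is $0$-achievable with the same exponent for the compound problem, so $E_{\mathrm{mix},0}^{\star}(R_c)=E_{\mathrm{comp},0}^{\star}(R_c)$; with the scheme of Appendix~\ref{proof_thm_1} this already gives $E_{\mathrm{mix},0}^{\star}(R_c)\ge\min_s\theta_s(R_c)$.

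It remains to prove $E_{\mathrm{mix},0}^{\star}(R_c)\le\min_s\theta_s(R_c)$; by the equivalence this is precisely the converse part of Theorem~\ref{thm_1}. Take a scheme with $\limsup_n\tfrac1n\log|\phi_n|\le R_c$, $\alpha_n\to0$, and $\liminf_n\tfrac1n\log(1/\beta_n)\ge E$; the comparison above then gives, for every $i$ with $i\in\mathfrak{F}_s$, both $\alpha_n^{(i)}\to0$ and $\liminf_n\tfrac1n\log(1/\beta_n^{(j_i^{\star}t_s^{\star})})\ge E$. Fix $s\in[1:S]$ and restrict to $i\in\mathfrak{F}_s$, which all share the $\mathcal{X}$-marginal $P_{\mathcal{X},s}$, so $(X^n,\phi_n(X^n))$ has one common law under the associated $H_0$-components. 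I would first apply the code-transfer identity of Appendix~\ref{ap_common_seq} — in the i.i.d.\ subcase directly, and in the Markov subcase together with \cite[Theorem~1]{barron1985strong} to replace $\tfrac1nD(P_{Y_i}^{\otimes n}\|Q_{Y_{j_i^{\star}}^n})$ and $\tfrac1nD(P_{\mathcal{X},s}^{\otimes n}\|Q_{X_{t_s^{\star}}^n})$ by the rates $d_i^{\star}$, $d_s^{\star}$ — to turn each test of $P_{Y_iX_i}^{\otimes n}$ against $Q_{Y_{j_i^{\star}}^n}\times Q_{X_{t_s^{\star}}^n}$ into a test of $P_{\bar{Y}_i\bar{X}_s}^{\otimes n}$ against the product $P_{\bar{Y}_i}^{\otimes n}\times P_{\bar{X}_s}^{\otimes n}$, at the cost of the additive constant $d_{is}^{\star}$ in the second-type exponent. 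Then I would run the information-spectrum form of the Ahlswede--Csisz{\'a}r one-sided-compression converse for testing against independence, carried out simultaneously over all $i\in\mathfrak{F}_s$ using the common law of $(\bar{X}_s^n,\phi_n(\bar{X}_s^n))$: extract a (time-shared) auxiliary $U$ with $I(\bar{X}_s;U)\le R_c+o(1)$, use $\alpha_n^{(i)}\to0$ to see that $\{\psi_n=1\}$ carries almost all the $P_{\bar{Y}_i\bar{X}_s}^{\otimes n}$-mass, and bound its $P_{\bar{Y}_i}^{\otimes n}\times P_{\bar{X}_s}^{\otimes n}$-mass from below by $e^{-n(I(\bar{Y}_i;U)+o(1))}$. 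Since the same $U$ serves every $i\in\mathfrak{F}_s$, this forces $E\le\min_{i\in\mathfrak{F}_s}[I(\bar{Y}_i;U)+d_{is}^{\star}]\le\theta_s(R_c)$ after maximizing over feasible $U$, and as $s$ was arbitrary, $E\le\min_s\theta_s(R_c)$. Running the same estimate with $E$ pushed strictly below $\min_s\theta_s(R_c)$ also yields $P_{Y_i^n\phi_n(X_i^n)}\bigl[(\mathcal{I}_n^{(i)}(E))^c\bigr]\to0$, which is the remaining assertion of Theorem~\ref{thm_1}.

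The equivalence step and the achievability reduction are routine; the real work is the compound converse. Its two delicate points are (i) producing a single auxiliary $U$ valid simultaneously for all $i\in\mathfrak{F}_s$ — this is exactly where the mixture/compound structure forces the inner $\min$ in $\theta_s$, and where a naive per-$i$ argument would fail — and (ii) upgrading the classical \emph{rate} converse to an \emph{exponent} (information-spectrum) converse while transporting the divergence-rate offsets $d_{is}^{\star}$ through the finite-order Markov case, which leans on \cite{barron1985strong} and an AEP-type control of the relevant log-likelihood ratios. I expect (i) to be the crux.
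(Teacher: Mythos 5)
Your proposal is correct, and its analytic engine is the same as the paper's: the code transfer of Appendix~\ref{ap_common_seq} with the typicality sets controlled by the weak law of large numbers or \cite[Theorem 1]{barron1985strong} to absorb $d_{is}^{\star}$, an information-spectrum (Han-type) weak converse, and the time-shared auxiliary $U_{il}=(\bar{\phi}_n^s(X_i^n),X_i^{l-1})$ whose conditional law is common to all $i\in\mathfrak{F}_s$ because the $\mathcal{X}$-marginal and the compression map are shared --- which is, as you say, exactly where the inner minimum in $\theta_s$ comes from. What you do differently is the direction of the reduction: you pass from the mixture to per-component error probabilities at once, via $\nu_i\ge\gamma_p$ and $\tau_{j_i^{\star}t_s^{\star}}\ge\gamma_q$ (so $\alpha_n^{(i)}\le\alpha_n/\gamma_p$ and $\beta_n^{(j_i^{\star}t_s^{\star})}\le\beta_n/\gamma_q$), prove a compound converse directly, and then read off the mixture converse; the paper instead proves the mixture converse directly --- applying \cite[Lemma 4.1.2]{hanspectrum} to the mixture pair and converting the mixture information-density event into the component events $\mathcal{C}_n^{(i)}$ through the auxiliary sets $\mathcal{B}_n^{(i)}$, $\mathcal{D}_n^{(i)}$ and the bound $Q_{Y^n\phi_n(X^n)}\ge\gamma_q\,Q_{Y_{j_i^{\star}}^n}\times Q_{\phi_n(X_{t_s^{\star}}^n)}$ --- and then obtains the converse of Theorem~\ref{thm_1} as a corollary, the reverse of your logical order. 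Your ordering yields a shorter proof at $\epsilon=0$, since the weight floors force every component type-I error to vanish and the extra mixture-level application of Han's lemma becomes unnecessary; the paper's ordering is what carries over to the $\epsilon>0$ analysis in Section III, where the component type-I errors do not all vanish and your reduction no longer applies. Two minor points: your two-sided sandwich $\gamma_q\max_{j,t}\beta_n^{(jt)}\le\sup_{Q\in\mathfrak{Q}_n}\beta_n\le kr\Gamma_q\max_{j,t}\beta_n^{(jt)}$ tacitly assumes all weights (not only the critical ones) are bounded by $\Gamma_q$, which matches how $\mathcal{T}$ is used in the paper's achievability though not its literal definition --- and only the lower bound for critical pairs is actually needed for the converse; and the step ``bound the product-measure mass from below by $e^{-n(I(\bar{Y}_i;U)+o(1))}$'' should be phrased as the standard divergence or information-spectrum weak-converse inequality, which is precisely the form the paper uses.
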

We first provide a remark about the first equality in the statement of Theorem \ref{thm_3}. In this we highlight the difference between our model and a previous study.

Let us consider the case that $\tau_{jt}>0$ for all pairs $(j,t)$ holds. Assume that $E$ is achievable in the mixture setting via a sequence of testing schemes $(\phi_n,\psi_n)$. Since $P_{Y^n\phi_n(X^n)}(1-\psi_n)\to 0$ and $\nu_i>0$ for all $i$, we have $P_{Y_i^n\phi_n(X_i^n)}(1-\psi_n)\to 0$. Similarly, for an arbitrarily given $\gamma>0$ and for all sufficiently large $n$ we have $Q_{Y^n\phi_n(X^n)}(\psi_n)\leq e^{-n(E-\gamma)}$. Since for all $(j,t)$, $\tau_{jt}>0$ holds, we have
$Q_{Y_j^n\phi_n(X_t^n)}(\psi_n)\leq \frac{1}{\tau_{jt}}e^{-n(E-\gamma)}$. This implies $E$ is an achievable error exponent in the compound hypothesis testing problem with the corresponding sequence of testing schemes $(\phi_n,\psi_n)$. Hence $E_{\mathrm{mix},0}^{\star}(R_c)\leq E_{\mathrm{comp},0}^{\star}(R_c)$ holds. The arguments discussed herein are similar to the ones given in \cite{han2018first} when data are not compressed. In our proof of Theorem \ref{thm_3}, we only need the restriction that $\tau_{j_i^{\star}t_s^{\star}}>0$ for all $i\in [1:m]$ and $i\in\mathfrak{F}_s$.

We explain the idea of showing $E_{\mathrm{mix},0}^{\star}(R_c)\leq \min_{s\in[1:\vert \mathcal{P}_{\mathcal{X}}\vert ]}\theta_s(R_c)$ in the following. For simplicity assume that there is no mismatch, i.e., $\{Q_{X_t^n}\} = \{P_{X_s}^{\otimes n}\}$ as well as $\{Q_{Y_j^n}\} = \{P_{Y_i}^{\otimes n}\}$ hold. Assume that $E$ is an achievable error exponent via a sequence of testing schemes $(\phi_n,\psi_n)$. A central idea of the Neyman-Pearson framework is to consider a decision region based on the likelihood ratio. An advantage of working with a likelihood-based decision region is that elementary set operations such as intersection, contraction, etc. can be performed through simple change of measure steps either in the numerator or denominator of the likelihood ratio. We want to show that if $E$ is achievable then roughly
\begin{align}
\mathrm{Pr}\{\iota_{P_{Y_i^n\phi_n(X_i^n)}}(Y_i^n;\phi_n(X_i^n))<nE\}\to 0,\;\text{as}\; n\to\infty. \label{target_thm_3}
\end{align}
The term inside the bracket is a rejection region based on the likelihood ratio for testing $P_{Y_i^n\phi_n(X_i^n)}$ against $P_{Y_i^n}\times P_{\phi_n(X_i^n)}$. Then based on the definition of the spectral-inf mutual information rate as well as the fact that the spectral-inf mutual information rate is bounded by the inf-mutual information rate, we can arrive at a conclusion that for an arbitrarily given $\gamma>0$, and for all $i\in [1:m]$,
\begin{align}
E\leq \frac{1}{n}I(Y_i^n;\phi_n(X_i^n)) + \gamma
\end{align}
for all sufficiently large $n$. Then we can use the standard single-letterization method to obtain that $E\leq \min_{s\in[1:\vert \mathcal{P}_{\mathcal{X}}\vert ]}\theta_s(R_c)$. In order to obtain the conclusion in \eqref{target_thm_3}, we need to perform several change of measure steps. First we form a decision region $\mathcal{A}_n$ based on the likelihood ratio of $P_{Y^n\phi_n(X^n)}$ and $P_{Y^n}\times P_{\phi_n(X^n)}$ as well as the achievable error exponent $E$. Then it can be shown that $P_{Y^n\phi_n(X^n)}(\mathcal{A}_n^c)\to 0$. We then do the first change of measure step from $P_{Y^n\phi_n(X^n)}$ to $P_{Y_i^n\phi_n(X_i^n)}$ to obtain 
\begin{align}
P_{Y_i^n\phi_n(X_i^n)}(\mathcal{A}_n^c)\to 0.
\end{align}
This can be seen from the definition of $P_{Y^nX^n}$, as $\nu_i>0$. Next we need to change the measure inside the definition of $\mathcal{A}_n$. Roughly we want to show that the following inequality holds
\begin{align}
&\log\frac{P_{Y_i^n\phi_n(X_i^n)}}{P_{Y_i^n}\times P_{\phi_n(X_i^n)}}(\cdot,\cdot)+\text{extra  penality}\nonumber\\
&\geq \log\frac{P_{Y^n\phi_n(X^n)}}{P_{Y^n}\times P_{\phi_n(X^n)}}(\cdot,\cdot)\geq nE.
\end{align}
Changing the measures in the denominator from $P_{Y^n}$ to $P_{Y_i^n}$ and $P_{\phi_n(X^n)}$ to $P_{\phi_n(X_i^n)}$ can be done based on inequalities $P_{Y^n}(\cdot)\geq \nu_i P_{Y_i^n}(\cdot)$ and $P_{\phi_n(X^n)}\geq \nu_i P_{\phi_n(X_i^n)}(\cdot)$. These inequalities follow from the definition of $P_{Y^nX^n}$ and hold for all $y^n$ and $\phi_n(x^n)$. The change from $P_{Y^n\phi_n(X^n)}$ to $P_{Y_i^n\phi_n(X_i^n)}$ is not based on the definition of $P_{Y^nX^n}$. However we can show that it holds with high probability. The proof of the general case involves another change of measure step from $Q_{Y_{j_i^{\star}}^n}\times Q_{\phi_n(X_{t_s^{\star}}^n)}$ to $P_{Y_i^n}\times P_{\phi_n(X_i^n)}$ using our code transformation arguments.

\begin{proof}
Assume that $E$ is an achievable error exponent at a compression rate $R_c$ in the compound hypothesis testing problem via a sequence of testing schemes $(\phi_n,\psi_n)$. We have by definition 
\begin{align}
\lim_{n\to\infty}\max_{i\in [1:m]}\alpha_n^{(i)} &=  0,\nonumber\\
\liminf_{n\to\infty}\frac{1}{n}\log\frac{1}{\max_{(j,t)}\beta_n^{(jt)}}&\geq E.
\end{align}
Applying this sequence to the current testing against generalized independence setting we obtain
\begin{align}
P_{Y^n\phi_n(X^n)}(\psi_n) &= \sum_{i=1}^m \nu_i P_{Y_i^n\phi_n(X_i^n)}(\psi_n)\nonumber\\
 &\leq \max_{i\in [1:m]}P_{Y_i^n\phi_n(X_i^n)}(\psi_n)= \max_{i\in [1:m]}\alpha_n^{(i)},\nonumber\\
Q_{Y^n\phi_n(X^n)}(\psi_n) &= \sum_{j,t} \tau_{jt}Q_{Y_j^{n}}\times Q_{\phi_n(X_t^{n})}(\psi_n)\nonumber\\
&\leq \max_{(j,t)}Q_{Y_j^{n}}\times Q_{\phi_n(X_t^{n})}(\psi_n) = \max_{(j,t)}\beta_n^{(jt)}.
\end{align}
We obtain that 
\begin{equation}
\lim_{n\to\infty}P_{Y^n\phi_n(X^n)}(\psi_n) = 0,\; \liminf_{n\to\infty}\frac{1}{n}\log\frac{1}{Q_{Y^n\phi_n(X^n)}(\psi_n)}\geq E.
\end{equation}
Hence $E$ is also achievable in our testing against generalized independence setting, cf. Definition \ref{def_1}. Therefore we have
  \begin{align}
    E_{\mathrm{mix},0}^{\star}(R_c)\geq E_{\mathrm{comp},0}^{\star}(R_c).
    \end{align}
\noindent Now for an arbitrarily given $\gamma>0$, assume that $(\phi_n,\psi_n)$ is a sequence of testing schemes such that
\begin{equation}
  \lim_{n\to\infty}\alpha_n = 0,\;\liminf_{n\to\infty}\frac{1}{n}\log\frac{1}{\beta_n}\geq E + \gamma\label{eq_40}
\end{equation}
hold.
Define for each $n$ the following decision region based on the likelihood ratio
\begin{align}
\mathcal{A}_n = \{(y^n,\phi_n(x^n))\mid & P_{Y^n\phi_n(X^n)}(y^n,\phi_n(x^n))\nonumber\\
&\geq e^{nE}Q_{Y^n\phi_n(X^n)}(y^n,\phi_n(x^n))\}.
\end{align}
By \cite[Lemma 4.1.2]{hanspectrum} and the definition of $P_{Y^nX^n}$, we have for all $n$
\begin{align}
  \alpha_n + e^{nE}\beta_n&\geq P_{Y^n\phi_n(X^n)}(\mathcal{A}_n^c)\nonumber\\
  &= \sum_{i=1}^m\nu_i P_{Y_i^n\phi_n(X_i^n)}(\mathcal{A}_n^c)
\end{align}
\noindent
Let $(\gamma_n)$ be a sequence such that $\gamma_n\to 0$ and $n\gamma_n\to\infty$ as $n\to\infty$. For each $i\in [1:m]$, we define a set 
\begin{align}
\mathcal{G}_{n}^{(i)} = \{(y^n,\phi_n(x^n))\mid & P_{Y^n\phi_n(X^n)}(y^n,\phi_n(x^n))\nonumber\\
&<e^{n\gamma_n}P_{Y_i^n\phi_n(X_i^n)}(y^n,\phi_n(x^n))\}.\label{g_set}
\end{align}
We then have 
\begin{align}
  P_{Y_i^n\phi_n(X_i^n)}[(\mathcal{G}_{n}^{(i)})^c]\leq e^{-n\gamma_n}P_{Y^n\phi_n(X^n)}[(\mathcal{G}_{n}^{(i)})^c]\leq e^{-n\gamma_n}.
\end{align}
$\mathcal{G}_{n}^{(i)}$ contains high probability pairs when we perform the change of measure from $P_{Y^n\phi_n(X^n)}$ to $P_{Y_i^n\phi_n(X_i^n)}$ in the numerator of the likelihood ratio test in the definition of $\mathcal{A}_n$. To make the derivation more compact, we further define two following sets
\begin{align}
\mathcal{C}_{n}^{(i)} &= \{(y^n,\phi_n(x^n))\mid  P_{Y_i^n\phi_n(X_i^n)}(y^n,\phi_n(x^n))\nonumber\\
&\hspace{2cm}<e^{n(E-\gamma_n+\log\gamma_q/n)} Q_{Y_{j^{\star}_i}^n}\times  Q_{\phi_n(X_{t_s^{\star}}^n)}(y^n,\phi_n(x^n))\},\nonumber\\
\mathcal{D}_{n}^{(i)} &= \{(y^n,\phi_n(x^n))\mid P_{Y_i^n\phi_n(X_i^n)}(y^n,\phi_n(x^n))\nonumber\\
&\hspace{2cm}<e^{n(E-\gamma_n)} Q_{Y^n\phi_n(X^n)}(y^n,\phi_n(x^n))\}.\label{c_i_n_def}
\end{align}
$\mathcal{D}_n^{(i)}$ is a rejection region in testing $P_{Y_i^n\phi_n(X_i^n)}$ against $Q_{Y^n\phi_n(X^n)}$. $\mathcal{C}_n^{(i)}$ is a rejection region of testing $P_{Y_i^n\phi_n(X_i^n)}$ against $Q_{Y_{j^{\star}_i}^n}\times  Q_{\phi_n(X_{t_s^{\star}}^n)}$, which is our first desired test.
\noindent From the definition of $Q_{Y^nX^n}$, we know that for all pairs $(y^n,\phi_n(x^n))$, and for all $i\in [1:m]$ and $i\in\mathfrak{F}_s$, the following inequality holds
\begin{align}
  Q_{Y^n\phi_n(X^n)}(y^n,\phi_n(x^n))&\geq \gamma_q Q_{Y_{j^{\star}_i}^n}\times Q_{\phi_n(X_{t_s^{\star}}^n)}(y^n,\phi_n(x^n)).\label{reverse_beta}
\end{align}
This implies that $\mathcal{C}_{n}^{(i)}\subseteq\mathcal{D}_{n}^{(i)}$ holds. Furthermore for $(y^n,\phi_n(x^n))\in\mathcal{D}_{n}^{(i)}\cap\mathcal{G}_{n}^{(i)}$ we have
\begin{align}
P_{Y^n\phi_n(X^n)}(y^n,\phi_n(x^n))&\stackrel{\eqref{g_set}}{<} e^{n\gamma_n}P_{Y_i^n\phi_n(X_i^n)}(y^n,\phi_n(x^n))\nonumber\\
&\stackrel{\eqref{c_i_n_def}}{<}e^{nE} Q_{Y^n\phi_n(X^n)}(y^n,\phi_n(x^n))\nonumber\\
\Rightarrow (y^n,&\phi_n(x^n))\in\mathcal{A}_n^c.
\end{align}
Using the above analysis we perform in the following two change of measure steps
\begin{align}
  &P_{Y_i^n\phi_n(X_i^n)}(\mathcal{A}_n^c) + e^{-n\gamma_n}\nonumber\\
  &\stackrel{(a)}{\geq} P_{Y_i^n\phi_n(X_i^n)}(\mathcal{D}_{n}^{(i)}\cap\mathcal{G}_{n}^{(i)}) + P_{Y_i^n\phi_n(X_i^n)}[(\mathcal{G}_{n}^{(i)})^c]\nonumber\\
  &\geq P_{Y_i^n\phi_n(X_i^n)}(\mathcal{D}_{n}^{(i)})\nonumber\\
  &\stackrel{(b)}\geq P_{Y_i^n\phi_n(X_i^n)}(\mathcal{C}_{n}^{(i)}),
\end{align}
where
\begin{itemize}
\item in $(a)$ we change the measures from $P_{Y^n\phi_n(X^n)}$ to $P_{Y_i^n\phi_n(X_i^n)}$,
\item in $(b)$ we change the measures from $Q_{Y^n\phi_n(X^n)}$ to $Q_{Y_{j^{\star}_i}^n}\times Q_{\phi_n(X_{t_s^{\star}}^n)}$.
\end{itemize}
\noindent In summary we have
\begin{align}
  & \alpha_n + e^{nE}\beta_n +e^{-n\gamma_n}\nonumber\\
  &\geq \sum_{i=1}^m \nu_i P_{Y_i^n\phi_n(X_i^n)}(\mathcal{C}_{n}^{(i)}).
\end{align}
In combination with \eqref{eq_40}, since $e^{nE}\beta_n\to 0$ as $n\to\infty$ holds, we obtain
\begin{align}
\lim_{n\to\infty} P_{Y_i^n\phi_n(X_i^n)}(\mathcal{C}_{n}^{(i)})= 0,\;\forall i\in [1:m].\label{before_transform}
\end{align}
In the next step, for each $i\in [1:m]$, and $i\in \mathfrak{F}_s$, we will perform change of measure from $Q_{Y_{j^{\star}_i}^n}\times Q_{\phi_n(X_{t_s^{\star}}^n)}$ to $P_{Y_i^n}\times P_{\bar{\phi}_n^{s}(X_i^n)}$. $\bar{\phi}_n^{s}$ is a compression mapping for each class $\mathfrak{F}_s$ that is constructed from $\phi_n$.

For a given $i\in [1:m]$, consider the problem of differentiating between $P_{Y_iX_i}^{\otimes n}$ and $Q_{Y_{j_i^{\star}}^n}\times Q_{X_{t_s^{\star}}^n}$, where $i\in\mathfrak{F}_s$, via the testing scheme $(\phi_n,\mathbf{1}_{(\mathcal{C}_{n}^{(i)})^c})$. The corresponding error probabilities are given by
\begin{align}
  P_{Y_i^n\phi_n(X_i^n)}(\mathcal{C}_{n}^{(i)}),\;Q_{Y_{j_i^{\star}}^n}\times Q_{\phi_n(X_{t_s^{\star}}^n)}[(\mathcal{C}_{n}^{(i)})^c].
\end{align}
Note that by the definition of $\mathcal{C}_n^{(i)}$ we have
\begin{align}
  Q_{Y_{j_i^{\star}}^n}\times Q_{\phi_n(X_{t_s^{\star}}^n)}[(\mathcal{C}_{n}^{(i)})^c]&\leq e^{-n(E-\gamma_n+\log\gamma_q/n)}P_{Y_i^n\phi_n(X_i^n)}[(\mathcal{C}_{n}^{(i)})^c]\nonumber\\
  &\leq e^{-n(E-\gamma_n+\log\gamma_q/n)}.
\end{align}
We want to transform the given testing scheme to obtain a new testing scheme $(\bar{\phi}_n^s,\bar{\psi}_n^{(i)})$, $i\in\mathfrak{F}_s$, for differentiating between $P_{Y_iX_i}^{\otimes n}$ and $P_{Y_i}^{\otimes n}\times P_{X_i}^{\otimes n}$. This can be done using similar arguments to those given in Appendix \ref{ap_common_seq} as follows. For the given positive $\gamma$ we define for the given $s\in[1:\vert \mathcal{P}_{\mathcal{X}}\vert ]$ a typical subset of $\mathcal{X}^n$
\begin{align}
  \mathcal{B}_{n,\gamma}^{s} = \{x^n\mid \vert  \iota_{P_{\mathcal{X},s}^{\otimes n}\Vert Q_{X_{t_s^{\star}}^n}}(x^n)/n - d_s^{\mathrm{x}}\vert  <\gamma\}.
\end{align}
Similarly for the given $i$ we define a typical subset of $\mathcal{Y}^n$
\begin{align}
  \mathcal{B}_{n,\gamma}^{(i)} = \{y^n\mid \vert  \iota_{P_{Y_i}^{\otimes n}\Vert Q_{Y_{j_i^{\star}}^n}}(y^n)/n - d_i^{\mathrm{y}}\vert  <\gamma\}.
\end{align}
Then the new compression mapping $\bar{\phi}_n^s$ is defined as
\begin{align}
  \bar{\phi}_n^s\colon\mathcal{X}^n&\to\mathcal{M}\cup\{e\}\nonumber\\
  \bar{\phi}_n^s(x^n)&\mapsto\begin{dcases}\phi_n(x^n),\;&\text{if}\; x^n\in\mathcal{B}_{n,\gamma}^{s},\nonumber\\
  e\;&\text{otherwise}\end{dcases}.
\end{align}
The decision mapping $\bar{\psi}_n^{(i)}$ is defined as
\begin{align}
  \bar{\psi}_n^{(i)}\colon \mathcal{Y}^n\times(\mathcal{M}\cup\{e\})&\to \{0,1\}\nonumber\\
  \bar{\psi}_n^{(i)}(y^n,\bar{u})&\mapsto\begin{dcases}\mathbf{1}_{(\mathcal{C}_{n}^{(i)})^c}(y^n,\bar{u}),\;&\text{if}\; y^n\in \mathcal{B}_{n,\gamma}^{(i)},\;\text{and}\; \bar{u}\neq e,\nonumber\\
  1&\text{otherwise}\end{dcases}.
\end{align}
Using this testing scheme $(\bar{\phi}_n^{s},\bar{\psi}_n^{(i)})$ we can bound the error probabilities in testing $P_{Y_iX_i}^{\otimes n}$ against $P_{Y_i}^{\otimes n}\times P_{X_i}^{\otimes n}$ as
\begin{align}
  P_{Y_i^{n}\bar{\phi}_n^s(X_i^n)}(1-\bar{\psi}_n^{(i)})&\leq P_{Y_i^n\phi_n(X_i^n)}(\mathcal{C}_{n}^{(i)}) + P_{Y_i}^{\otimes n}[(\mathcal{B}_{n,\gamma}^{(i)})^{c}] + P_{X_i}^{\otimes n}[(\mathcal{B}_{n,\gamma}^{s})^{c}]\nonumber\\
  P_{Y_i}^{\otimes n}\times P_{\bar{\phi}_n^s(X_i^n)}(\bar{\psi}_n^{(i)})&\leq Q_{Y_{j_i^{\star}}^n}\times Q_{\phi_n(X_{t_s^{\star}}^n)}[(\mathcal{C}_{n}^{(i)})^c] e^{n(d_s^{\mathrm{x}}+ d_i^{\mathrm{y}}+2\gamma)}\nonumber\\
  &\leq e^{-nE_i^{\prime}},\label{eq_56}
\end{align}
where $E_i^{\prime} = E-\gamma_n+\log\gamma_q/n - [d_{is}^{\mathrm{yx}}+2\gamma]$.
Then using \cite[Lemma 4.1.2]{hanspectrum} we obtain the following inequality 
\begin{align}
  P_{Y_i^{n}\bar{\phi}_n^s(X_i^n)}(1-\bar{\psi}_n^{(i)}) &+ e^{n(E_i^{\prime}-\gamma)}P_{Y_i}^{\otimes n}\times P_{\bar{\phi}_n^s(X_i^n)}(\bar{\psi}_{n}^{(i)})\nonumber\\
  &\geq P_{Y_i^{n}\bar{\phi}_n^s(X_i^n)}[(\bar{\mathcal{A}}_{n}^{(i)})^c],
\end{align}
holds where
\begin{align}
  \bar{\mathcal{A}}_{n}^{(i)} = \{(y^n,\bar{\phi}_n^s(x^n))\mid &P_{Y_i^{n}\bar{\phi}_n^s(X_i^n)}(y^n,\bar{\phi}_n^s(x^n))\nonumber\\
  &\geq e^{n(E_{i}^{\prime}-\gamma)}P_{Y_i}^{\otimes n}\times P_{\bar{\phi}_n^s(X_i^n)}(y^n,\bar{\phi}_n^s(x^n))\},
\end{align}
is our desired decision region using the likelihood ratio test in testing $P_{Y_i^{n}\bar{\phi}_n^s(X_i^n)}$ against $P_{Y_i}^{\otimes n}\times P_{\bar{\phi}_n^s(X_i^n)}$. Using the inequalities in \eqref{eq_56} we obtain
\begin{align}
P_{Y_i^{n}\bar{\phi}_n^s(X_i^n)}[(\bar{\mathcal{A}}_{n}^{(i)})^c]\leq & P_{Y_i^n\phi_n(X_i^n)}(\mathcal{C}_{n}^{(i)}) + P_{Y_i}^{\otimes n}[(\mathcal{B}_{n,\gamma}^{(i)})^{c}]\nonumber\\
& + P_{X_i}^{\otimes n}[(\mathcal{B}_{n,\gamma}^{s})^{c}]+ e^{-n\gamma}.\label{after_transform}
\end{align}
Under Assumption \ref{assump_1}, $P_{Y_i}^{\otimes n}[(\mathcal{B}_{n,\gamma}^{(i)})^{c}]\to 0$ and $P_{X_i}^{\otimes n}[(\mathcal{B}_{n,\gamma}^{s})^{c}]\to 0$ as $n\to\infty$ due to either the weak law of large numbers or Theorem 1 in \cite{barron1985strong}. 
Since both $\gamma_p>0$, and $\gamma>\gamma_n-\log \gamma_q/n$ as $n\to\infty$ hold, by combining \eqref{before_transform} and \eqref{after_transform} we have for all $i\in [1:m]$
\begin{align}
  \lim_{n\to\infty}\mathrm{Pr}\{\iota_{Y_i^n\bar{\phi}_n^s(X_i^n)}(Y_i^n;\bar{\phi}_n^s(X_i^n))<n(E-d_{is}^{\mathrm{yx}}-4\gamma)\} = 0,
\end{align}
where $(Y_i^n,X_i^n)\sim P_{Y_iX_i}^{\otimes n}$ holds. Hence, for all $i\in [1:m]$, $i\in\mathfrak{F}_s$, we have
\begin{align}
  E - d_{is}^{\mathrm{yx}}-4\gamma \leq \underline{I}(\mathbf{Y}_i;\bar{\phi}_s(\mathbf{X}_i)),
\end{align}
where $(\mathbf{Y}_i,\bar{\phi}_s(\mathbf{X}_i)) = \{(Y_i^n,\bar{\phi}_n^s(X_i^n))\}_{n=1}^{\infty}$ and $\underline{I}(\cdot;\cdot)$ is the spectral-inf mutual information rate, defined for a joint process $(\bar{\mathbf{U}},\bar{\mathbf{V}})= \{(U^n,V^n)_{n=1}^{\infty}\}$ as
\begin{align}
\underline{I}(\mathbf{U};\mathbf{V}) = \sup\big\{\beta\big\vert   \lim_{n\to \infty}\mathrm{Pr}\big[\iota_{P_{U^nV^n}}(U^n;V^n)<n\beta\big]=0\big\}.
\end{align}
Since the spectral-inf mutual information rate is less than or equal to the inf-mutual information rate by \cite[Theorem 3.5.2]{hanspectrum}
 $$\underline{I}(\mathbf{Y}_i;\bar{\phi}_s(\mathbf{X}_i))\leq \liminf_{n\to\infty}\frac{1}{n}I(Y_i^n;\bar{\phi}_n^s(X_i^n))$$ holds, $\forall i\in [1:m], i\in\mathfrak{F}_s$, we have 

\begin{align}
  E - d_{is}^{\mathrm{yx}}-4\gamma\leq \sup_{n_0}\inf_{n\geq n_0}\frac{1}{n}I(Y_i^n;\bar{\phi}_n^s(X_i^n)).
\end{align}
For each $i\in [1:m]$, let $n_i(\gamma)$ be such that 
\begin{align}
 \sup_{n_0}\inf_{n\geq n_0}\frac{1}{n}I(Y_i^n;\bar{\phi}_n^s(X_i^n))\leq \inf_{n\geq n_i(\gamma)}\frac{1}{n}I(Y_i^n;\bar{\phi}_n^s(X_i^n)) + \gamma.
\end{align}
Then for all $i\in [1:m]$, we have 
\begin{align}
 E - d_{is}^{\mathrm{yx}}-4\gamma\leq \inf_{n\geq n_i(\gamma)}\frac{1}{n}I(Y_i^n;\bar{\phi}_n^s(X_i^n)) + \gamma.
\end{align}

Let $T$ be a uniform random variable on $[1:n]$ and independent of everything else. For each $i\in [1:m]$ and $i\in\mathfrak{F}_s$, we define $U_{il} = (\bar{\phi}_n^s(X_i^n),X_i^{l-1})$ for all $l\in [1:n]$ and $U_i = (U_{iT},T)$.
Therefore for all $\forall i\in [1:m]$ and $i\in\mathfrak{F}_s$, as well as for all sufficiently large $n$, say $n\geq \max_{i}n_i(\gamma)$, we have
\begin{align}
  E -& d_{is}^{\mathrm{yx}}-5\gamma\leq \frac{1}{n}I(Y_i^n;\bar{\phi}_n^s(X_i^n))\nonumber\\
  & = \frac{1}{n}\sum_{l=1}^nI(Y_{il};\bar{\phi}_n^s(X_i^n)\vert Y_i^{l-1}) \stackrel{(a)}{=} \frac{1}{n}\sum_{l=1}^nI(Y_{il};\bar{\phi}_n^s(X_i^n),Y_i^{l-1})\nonumber\\
  &\leq \frac{1}{n}\sum_{l=1}^nI(Y_{il};\bar{\phi}_n^s(X_i^n),Y_i^{l-1},X_i^{l-1})\stackrel{(b)}{=} \frac{1}{n}\sum_{l=1}^nI(Y_{il};\bar{\phi}_n^s(X_i^n),X_i^{l-1})\nonumber\\
  & = I(Y_{iT};U_{iT},T) = I(Y_{iT};U_i),
\end{align}
where $(a)$ follows since $Y_i^n\sim P_{Y_i}^{\otimes n}$, and $(b)$ is valid since $Y_i^{l-1}-X_i^{l-1}-(Y_{il},\bar{\phi}_n^s(X_i^n))$ forms a Markov chain.
 Similarly we also have
\begin{align}
  R_c+\gamma&\geq \frac{1}{n}\log\vert  \bar{\phi}_n^s\vert  \geq \frac{1}{n}I(X^n_i;\bar{\phi}_n^s(X_i^n)) = \frac{1}{n}\sum_{l=1}^n I(X_{il}; \bar{\phi}_n^s(X_i^n), X_i^{l-1})\nonumber\\
  &= I(X_{iT};U_{iT},T) = I(X_{iT};U_i),\; \forall i\in [1:m].
\end{align}
Note that for $\tau,\eta \in \mathfrak{F}_s$, we have $P_{U_{\tau}\vert  X_{\tau T}} = P_{U_{\eta}\vert  X_{\eta T}}$. We define this common kernel for each $\mathfrak{F}_s$ as $P_{U_s\vert  \bar{X}_s}$. Therefore for all $s\in [1:\vert \mathcal{P}_{\mathcal{X}}\vert ]$ and all sufficiently large $n$ we have
\begin{align}
  E-5\gamma&\leq \min_{i\in\mathfrak{F}_s}[I(\bar{Y}_{i};U_s)+d_{is}^{\mathrm{yx}}],\nonumber\\
   R_c+\gamma&\geq I(\bar{X}_{s};U_s),
\end{align}
where for all $i\in\mathfrak{F}_s$ we have $(\bar{Y}_{i},\bar{X}_{s},U_s)\sim P_{Y_i\vert  X_i}\times P_{\mathcal{X},s}\times P_{U_s\vert  \bar{X}_s}$.
By standard cardinality bound arguments \cite{csiszar2011information} and taking $\gamma\to 0$ we have that for all $s\in [1:\vert \mathcal{P}_{\mathcal{X}}\vert ]$
\begin{align}
  E\leq \min_{i\in\mathfrak{F}_s}&[I(\bar{Y}_i;\bar{U}_s)+d_{is}^{\mathrm{yx}}],\; R_c\geq I(\bar{X}_s;\bar{U}_s),\nonumber\\
&  P_{\bar{Y}_i\bar{X}_s\bar{U}_s} = P_{\bar{Y}_i\bar{X}_s}\times P_{\bar{U}_s\vert\bar{X}_s}.
\end{align}
Hence $E\leq \theta_s(R_c)$ holds for all $s\in [1:\vert \mathcal{P}_{\mathcal{X}}\vert ]$, which leads to $E_{\mathrm{mix},0}^{\star}(R_c)\leq \min_s\theta_s(R_c)$.
\end{proof}

\section{$\epsilon$-Error Exponent in Mixture Setting}\label{sec_5}
In this section we characterize the maximum $\epsilon$-achievable error exponent in the testing against generalized independence setting in Section \ref{sec_4}. 
\subsection{Small $\epsilon$-optimality of $E_{\mathrm{mix},0}^{\star}(R_c)$}
The following partial result is an immediate consequence of the first part of Theorem \ref{thm_2}. It states that when $\epsilon$ is small enough then the maximum achievable error exponent $E_{\mathrm{mix},0}^{\star}(R_c)$ is also $\epsilon$-optimal.
\begin{corollary} \label{coroll_1}
  For a given $R_c$, if $\epsilon<\min_{i\in[1:m]}\nu_i\times \min\{\min_{s\in\mathcal{S}}1/\vert  \mathfrak{F}_s\vert  ,1\}$, where the inactive set $\mathcal{S}$ is defined as in the statement of Theorem \ref{thm_2}, then
  \begin{equation}
    E_{\mathrm{mix},\epsilon}^{\star}(R_c) = \min_s\theta_s(R_c).
  \end{equation}
\end{corollary}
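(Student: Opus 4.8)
The plan is to establish the two bounds $E_{\mathrm{mix},\epsilon}^{\star}(R_c)\ge\min_s\theta_s(R_c)$ and $E_{\mathrm{mix},\epsilon}^{\star}(R_c)\le\min_s\theta_s(R_c)$ separately, with only the latter invoking the hypothesis on $\epsilon$. The lower bound is essentially free: the single distribution $P_{Y^nX^n}$ belongs to $\mathfrak{P}_n$ (after choosing $\gamma_p\le\min_i\nu_i$, which does not affect the value in Theorem \ref{thm_3} since $\mathfrak{Q}_n$ and $\min_s\theta_s(R_c)$ do not depend on $\gamma_p$), so any sequence of testing schemes that is $0$-achievable for testing $\mathfrak{P}_n$ against $\mathfrak{Q}_n$ satisfies $\limsup_n P_{Y^n\phi_n(X^n)}(1-\psi_n)=0\le\epsilon$ and is therefore $\epsilon$-achievable for testing the single $P_{Y^nX^n}$ against $\mathfrak{Q}_n$. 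Theorem \ref{thm_3} then gives $E_{\mathrm{mix},\epsilon}^{\star}(R_c)\ge E_{\mathrm{mix},0}^{\star}(R_c)\ge\min_s\theta_s(R_c)$.

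For the converse I would fix an $\epsilon$-achievable exponent $E$ with witnessing schemes $(\phi_n,\psi_n)$ and argue that the very same schemes are $\epsilon'$-achievable for the compound problem of Theorem \ref{thm_2} with $\epsilon'=\epsilon/\min_i\nu_i$. Two reductions are needed. On the type-I side, the identity $P_{Y^n\phi_n(X^n)}(1-\psi_n)=\sum_{i=1}^m\nu_i\,\alpha_n^{(i)}$ gives $\alpha_n^{(i)}\le\alpha_n/\nu_i$ for each $i$, so, there being only finitely many indices, $\limsup_n\max_i\alpha_n^{(i)}=\max_i\limsup_n\alpha_n^{(i)}\le\epsilon/\min_i\nu_i=\epsilon'$; the hypothesis $\epsilon<\min_i\nu_i\cdot\min\{\min_{s\in\mathcal{S}}1/|\mathfrak{F}_s|,1\}$ is exactly what makes $\epsilon'<\min\{\min_{s\in\mathcal{S}}1/|\mathfrak{F}_s|,1\}$. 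On the type-II side, the constant-coefficient measure $\gamma_q\sum_{i\in[1:m]}Q_{Y_{j_i^{\star}}^n}\times Q_{X_{t_s^{\star}}^n}$ (the sum over the active pairs, $i\in\mathfrak{F}_s$) lies in $\mathfrak{Q}_n$ because $\gamma_q\in[\gamma_q,\Gamma_q]$ for all $n$; evaluating $\beta_n$ at this measure yields $\beta_n\ge\gamma_q\max_{i}Q_{Y_{j_i^{\star}}^n}\times Q_{\phi_n(X_{t_s^{\star}}^n)}(\psi_n)$, and since $\gamma_q$ is a fixed positive constant this gives $\liminf_n\frac1n\log\frac1{\beta_n^{\mathrm{comp}}}\ge E$, where $\beta_n^{\mathrm{comp}}$ denotes the maximum type-II error of the compound problem taken over the active pairs $(j_i^{\star},t_s^{\star})$. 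By the remark following Theorem \ref{thm_2} and by the form of the bound used in \eqref{reverse_beta} in the converse of Theorem \ref{thm_3}, this is all that the converse half of Theorem \ref{thm_2} requires.

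With $(\phi_n,\psi_n)$ thus exhibited as an $\epsilon'$-achievable scheme for the compound problem and $\epsilon'$ strictly below the admissible threshold, the first bullet of Theorem \ref{thm_2} gives $E\le E_{\mathrm{comp},\epsilon'}^{\star}(R_c)=\min_s\theta_s(R_c)$; taking the supremum over all $\epsilon$-achievable $E$ yields $E_{\mathrm{mix},\epsilon}^{\star}(R_c)\le\min_s\theta_s(R_c)$, which combined with the first paragraph finishes the proof. I do not anticipate a genuine obstacle, since the author already signals that this is immediate from Theorem \ref{thm_2}, but two points deserve a careful sentence each: first, that the factor $1/\nu_i$ lost in passing from the mixture error $\alpha_n$ to the component errors $\alpha_n^{(i)}$ is precisely compensated by the factor $\min_i\nu_i$ in the hypothesis, so that $\epsilon'$ stays below the admissible threshold; and second, that the converse half of Theorem \ref{thm_2} only ever uses the type-II error against the active index pairs $(j_i^{\star},t_s^{\star})$, so that the possible vanishing of the inactive weights in the elements of $\mathfrak{Q}_n$ causes no loss.
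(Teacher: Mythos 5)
Your proposal is correct and follows essentially the same route as the paper: decompose the mixture type-I error to get $\limsup_n P_{Y_i^n\phi_n(X_i^n)}(1-\psi_n)\leq\epsilon/\nu_i<\min\{\min_{s\in\mathcal{S}}1/|\mathfrak{F}_s|,1\}$, use the lower bound $\gamma_q$ on the active coefficients in $\mathfrak{Q}_n$ to transfer the type-II exponent to the pairs $(j_i^{\star},t_s^{\star})$, and then invoke the \emph{proof} (not the bare statement) of the first part of Theorem \ref{thm_2}, exactly as the paper does, with achievability inherited from Theorem \ref{thm_3}/Theorem \ref{thm_1}. Your explicit remarks about why only the active index pairs matter and how the factor $1/\nu_i$ is absorbed by the hypothesis on $\epsilon$ are precisely the points the paper's shorter proof leaves implicit.
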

\begin{proof}
  Assume that $(\phi_n,\psi_n)$ is a sequence of testing schemes such that the conditions in Definition \ref{def_1} are satisfied for the pair $(R_c,E)$
  \begin{align}
    \limsup_{n\to\infty}\frac{1}{n}\log\vert  \phi_n\vert  \leq R_c,\; &\limsup_{n\to\infty}\alpha_n\leq \epsilon,\nonumber\\
    \liminf_{n\to\infty}\frac{1}{n}&\log\frac{1}{\beta_n}\geq E.
  \end{align}
\noindent For an arbitrarily small $\gamma>0$ and for all $i\in [1:m]$, $i\in\mathfrak{F}_s$, we have $Q_{Y_{j_i^{\star}}^n}\times Q_{\phi_n(X_{t_s^{\star}}^n)}(\psi_n)\leq e^{-n(E-\gamma)}$ for all $n\geq n_0(\gamma)$. Furthermore we also have
  \begin{align}
    \epsilon\geq \limsup_{n\to\infty}\alpha_n\geq \nu_i\limsup_{n\to\infty}P_{Y_i^n\phi_n(X_i^n)}(1-\psi_n),\;\forall i\in [1:m],
  \end{align}
  which implies that $\forall i\in [1:m]$ we have $\limsup_{n\to\infty}P_{Y_i^n\phi_n(X_i^n)}(1-\psi_n)<\min\{\min_{s\in\mathcal{S}}1/\vert  \mathfrak{F}_s\vert  ,1\}$. By the \textit{proof} the first part of Theorem \ref{thm_2} we obtain then $E-\gamma<\min_s\theta_s(R_c)$. Since $\gamma$ is arbitrary the conclusion follows.
\end{proof}
\subsection{A sufficient condition for characterizing $E_{\mathrm{mix},\epsilon}^{\star}(R_c)$}
To obtain a full characterization of $E_{\mathrm{mix},\epsilon}^{\star}(R_c)$ we need to make an additional assumption. By \cite[Theorem 2.5]{witsenhausen1975conditional} the maximization in the expression of $\xi_{i}(R_c)$, $i\in\mathfrak{F}_s$, can be restriction to the set $\mathcal{W}_s(R_c) = \{P_{U\vert  \bar{X}_s}\mid I(\bar{X}_s;U) = R_c\}$. For each $i\in \mathfrak{F}_s$, we define $\mathcal{W}_{s}^{(i)}(R_c)$ to be the set of optimal solutions of $\xi_{i}(R_c)$ within $\mathcal{W}_s(R_c)$. We define a set $\mathcal{O}_s(R_c)$ as follows:
\begin{align}
&\text{If} \; \bigcap_{i\in\mathfrak{F}_s}\mathcal{W}_{s}^{(i)}(R_c)\neq \varnothing,\;\text{then we define}\; \mathcal{O}_s(R_c) = \bigcap_{i\in\mathfrak{F}_s}\mathcal{W}_{s}^{(i)}(R_c),\nonumber\\ &\text{otherwise we take}\; \mathcal{O}_s(R_c) = \bigcup_{i\in\mathfrak{F}_s}\mathcal{W}_s^{(i)}(R_c).
\end{align}
 For an arbitrarily given $P_{U\vert  X}\in\mathcal{O}_s(R_c)$, the set of $\{\xi_{i},\;i\in\mathfrak{F}_s\}$ can be arranged as
\begin{equation}
I(\bar{Y}_{i_1};U) + d_{i_1}^{\mathrm{y}}\leq I(\bar{Y}_{i_2};U) + d_{i_2}^{\mathrm{y}}\leq \dots\leq I(\bar{Y}_{i_{\vert  \mathfrak{F}_s\vert  }};U) + d_{\vert  \mathfrak{F}_s\vert  }^{\mathrm{y}}.\label{fixed_order}
\end{equation}
This corresponds to a permutation $\sigma_s(P_{U\vert  X})$ on $[1:\vert  \mathfrak{F}_s\vert  ]$. It can be seen that when $\bigcap_{i\in\mathfrak{F}_s}\mathcal{W}_{s}^{(i)}(R_c)$ is not empty then the permutation $\sigma_s(\cdot)$ does not depend on a specific $P_{U\vert  X}$ in $\mathcal{O}_s(R_c)$. To account for more general scenarios, we make the following \textit{separability} assumption.
\begin{assumption}\label{assump_2}
For given set of distributions $\{P_{X_iY_i}\}_{i=1}^m$ and set of sequences of distributions $\{(Q_{Y_j^n})\}_{j=1}^k$, class $\mathfrak{F}_s$ and $R_c$, any two kernels $P_{U\vert  X}^1$ and $P_{U\vert  X}^2$ in $\mathcal{O}_s(R_c)$ satisfy $\sigma_s(P_{U\vert  X}^1) = \sigma_s(P_{U\vert  X}^2)$, i.e., the order is invariant to the change of kernels inside $\mathcal{O}_s(R_c)$.
\end{assumption}
It can be seen that Assumption \ref{assump_2} is satisfied when $\vert  \mathfrak{F}_s\vert  =1$ for all $s$, i.e., when all $P_{Y_iX_i}$ have distinct marginal distributions. In the following we briefly discuss two non-trivial scenarios in which Assumption \ref{assump_2} can be fulfilled. 

\textbf{Example 1}: In this example we assume that $\{Q_{Y_j^n}\} = \{P_{Y_i}^{\otimes n}\}$ and $\{Q_{X_t^n}\} = \{P_{\mathcal{X},s}^{\otimes n}\}$ hold. This implies that for all $i\in [1:m]$ and for all $s\in [1:\vert \mathcal{P}_{\mathcal{X}}\vert ]$, we have $d_i^{\mathrm{y}} = 0$ and $d_s^{\mathrm{x}} = 0$. We then assume further that within each class $\mathfrak{F}_s$ the set of channels $P_{\bar{Y}_i\vert  \bar{X}_s}$ can be ordered according to the \textit{less noisy} relation\footnote{A channel $P_{Y\vert  X}$ is less noisy \cite{lessnoisy} than a channel $P_{Z\vert  X}$ if for every $P_{XU}$ we have $I(Y;U)\geq I(Z;U)$.}. Then for all $P_{U\vert  X}\in\mathcal{O}_s(R_c)$, we have
\begin{align}
I(\bar{Y}_{i_1};U)\leq I(\bar{Y}_{i_2};U)\leq \dots\leq I(\bar{Y}_{i_{\vert  \mathfrak{F}_s\vert  }};U),
\end{align}
for some fixed order $\{i_1,\dots,i_{\vert  \mathfrak{F}_s\vert  }\}$ which does not depend on whether $\bigcap_{i\in\mathfrak{F}_s}\mathcal{W}_{s}^{(i)}(R_c)$ is empty or not. Hence Assumption \ref{assump_2} is satisfied.

\textbf{Example 2}: We consider another example in which the set of distributions in the null hypothesis is $\{P_{\bar{Y}_i\vert  \bar{X}}\times P_{\bar{X}}\}_{i=1}^2$. We assume further that $P_{\bar{Y}_1\vert  \bar{X}}$ is an erasure channel with erasure probability $t$. We also assume that $\{Q_{Y_j^n}\} = \{P_{\bar{Y}_i}^{\otimes n}\}$ and $\{Q_{X_t^n}\} = \{P_{\bar{X}}^{\otimes n}\}$ hold in this example. Then $\xi_1(R_c) = (1-t)R_c$, which can be achieved by any kernel $P_{U_1\vert  \bar{X}}$ such that $I(\bar{X};U_1)=R_c$. This implies that $\mathcal{W}^{(1)}(R_c)\cap\mathcal{W}^{(2)}(R_c) = \mathcal{W}^{(2)}(R_c)$ holds, a non-empty set, and hence Assumption \ref{assump_2} is satisfied.

\noindent 
Assumption \ref{assump_2} implies that at a given $R_c$ we have
\begin{equation}
  \xi_{i_1}(R_c)\leq\xi_{i_2}(R_c)\leq\dots\leq \xi_{i_{\vert  \mathfrak{F}_s\vert  }}(R_c).\label{ordering_cond}
  \end{equation}
This can be seen as follows. If $\bigcap_{i\in\mathfrak{F}_s}\mathcal{W}_{s}^{(i)}(R_c)\neq \varnothing$, we can take any kernel inside $\mathcal{O}_s(R_c)$ to achieve $\xi_{i_l}(R_c)$ for all $l\in [1:\vert  \mathfrak{F}_s\vert  ]$. Hence \eqref{fixed_order} leads to \eqref{ordering_cond}. Otherwise, we take a kernel $P_{U\vert X}\in \mathcal{W}_{s}^{(i_1)}(R_c)$ to obtain $\xi_{i_1}(R_c)\leq I(\bar{Y}_{i_2};U) + d_{i_2}^{\mathrm{y}}\leq \xi_{i_2}(R_c)$ since $\mathcal{W}_{s}^{(i_1)}(R_c)\subseteq \mathcal{O}_s(R_c)$ holds and so on. Furthermore we observe that for all $l\in [1:\vert  \mathfrak{F}_s\vert  ]$, we have
\begin{align}
  \xi_{i_l}(R_c)&\geq \max_{P_{U\vert  \bar{X}_s}\colon I(\bar{X}_s;U)\leq R_c}\min_{i\in \{i_{\eta}\}_{\eta=l}^{\vert  \mathfrak{F}_s\vert  }}[I(\bar{Y}_{i};U)+d_{is}^{\mathrm{yx}}]\nonumber\\
  &\geq \max_{P_{U\vert  \bar{X}_s}\in\mathcal{O}_s(R_c)}\min_{i\in \{i_{\eta}\}_{\eta=l}^{\vert  \mathfrak{F}_s\vert  }}[I(\bar{Y}_{i};U)+d_{is}^{\mathrm{yx}}]\nonumber\\
                                                                                                                                                        &\stackrel{\eqref{fixed_order}}{=} \max_{P_{U\vert  \bar{X}_s}\in\mathcal{O}_s(R_c)} [I(\bar{Y}_{i_l};U)+d_{i_ls}^{\mathrm{yx}}]\nonumber\\
  &\stackrel{(*)}{\geq} \xi_{i_l}(R_c).
\end{align}
$(*)$ can be verified as follows. If $\bigcap_{i\in\mathfrak{F}_s}\mathcal{W}_{s}^{(i)}(R_c)\neq \varnothing$, we can take any kernel inside $\mathcal{O}_s(R_c)$ to achieve $\xi_{i_l}(R_c)$. Otherwise, $\mathcal{O}_s(R_c)\supseteq \mathcal{W}_{s}^{(i_l)}(R_c)$ holds, then $(*)$ follows. Therefore, we have the following relation
\begin{equation}
  \xi_{i_l}(R_c) = \max_{P_{U\vert  \bar{X}_s}\colon I(\bar{X}_s;U)\leq R_c}\min_{i\in \{i_{\eta}\}_{\eta=l}^{\vert  \mathfrak{F}_s\vert  }}[I(\bar{Y}_{i};U)+d_{is}^{\mathrm{yx}}].\label{arrange_relation}
\end{equation}
Assume that at a given $R_c$, Assumption \ref{assump_2} is valid. By relabeling elements in the set $\{P_{Y_iX_i}\}_{i=1}^m$ if necessary, we assume that  $(\xi_i(R_c))_{i=1}^m$ is an increasing sequence. An example of such ordering is given in Fig. \ref{fig_illustrate}.
\begin{figure}
\centering
\includegraphics{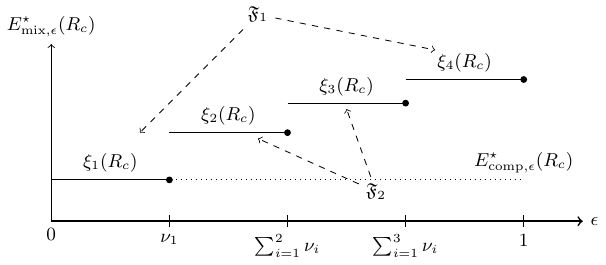}
\caption{Illustration of $E_{\mathrm{mix},\epsilon}^{\star}(R_c)$ under Assumption \ref{assump_2}. For comparison note that under Assumption \ref{assump_2} $E_{\mathrm{comp},\epsilon}^{\star}(R_c)$ does not depend on $\epsilon$. At $l=2$ the left-over sets are given by $\mathfrak{F}_1(2) = \{4\}$, $\mathfrak{F}_2(2) = \{2,3\}$.}\label{fig_illustrate}
\end{figure}

For each $l\in [1:m]$, for notation simplicity we define for each $s$ a left-over subset of $\mathfrak{F}_s$ as $\mathfrak{F}_s(l) = \mathfrak{F}_s\backslash [1:l-1]$. Since the ordering is unique, \eqref{ordering_cond} implies that $i_{1}\leq i_2\leq\dots\leq i_{\vert  \mathfrak{F}_s\vert  }$. Therefore, when $\mathfrak{F}_s(l)\neq \varnothing$, we have 
\begin{equation}
  \xi_{\min \mathfrak{F}_s(l)}(R_c) = \max_{P_{U\vert  \bar{X}_s}\colon I(\bar{X}_s;U)\leq R_c}\min_{i\in \mathfrak{F}_s(l)}[I(\bar{Y}_{i};U)+d_{is}^{\mathrm{yx}}].\label{before_propos_1}
\end{equation}
The above analysis leads to the following result.
\begin{proposition}\label{ln_conseq}
 Assume that at a given $R_c$ Assumption \ref{assump_2} is fulfilled and $(\xi_i(R_c))_{i=1}^m$ is an increasing sequence. Then for each $l\in [1:m]$ the following holds
\begin{align}
  \min_s\max_{P_{U\vert  \bar{X}_s}\colon I(\bar{X}_s;U)\leq R_c}\min_{i\in\mathfrak{F}_s(l)}[I(\bar{Y}_i;U) + d_{is}^{\mathrm{yx}}] = \xi_{l}(R_c). \label{ln_con_eq}
\end{align}
\end{proposition}
The left-hand side in \eqref{ln_con_eq} is the maximum achievable error exponent in testing $\{P_{Y_iX_i}^{\otimes n}\}_{i\in [l:m]}$ against $\{Q_{Y_j^n}\times Q_{X_t^n}\}$. This result will be used to establish the optimal $\epsilon$-error exponent in this section.
\begin{proof}
  Due to the above analysis the left-hand side equals to, cf. \eqref{before_propos_1},
  \begin{align}
    &\min_{s\colon \mathfrak{F}_s(l)\neq \varnothing}\xi_{\min \mathfrak{F}_s(l)}(R_c) = \min_{i\in \bigcup_s \mathfrak{F}_s(l)}\xi_i(R_c)\nonumber\\
    &=\min_{i\in [l:m]}\xi_i(R_c) = \xi_l(R_c).
    \end{align}
\end{proof}
\subsection{Characterization of $E_{\mathrm{mix},\epsilon}^{\star}(R_c)$ under Assumption \ref{assump_2}}
A complete characterization of $E_{\mathrm{mix},\epsilon}^{\star}(R_c)$ under Assumption \ref{assump_2} is provided in the following.
\begin{theorem}\label{thm_4}
  Assume that a given $R_c$, Assumption \ref{assump_2} is true and $(\xi_i(R_c))_{i=1}^m$ is an increasing sequence, then we have
\begin{align}
  E_{\mathrm{mix},\epsilon}^{\star}(R_c) = \sum_{i=1}^m \xi_i(R_c)\mathbf{1}_{[\sum_{j=1}^{i-1}\nu_j,\sum_{j=1}^{i}\nu_j)}(\epsilon).
\end{align}
\end{theorem}
The behavior of $E_{\mathrm{mix},\epsilon}^{\star}(R_c)$ is depicted in Fig. \ref{fig_illustrate}. The proof of Theorem \ref{thm_4} uses the result from the second part of Theorem \ref{thm_2}. It demonstrates an application of showing \textit{exponentially strong} converse. We briefly describe the proof idea in the following.

Fix an $\epsilon$ such that $\epsilon \in [\sum_{j=1}^{i-1}\nu_j, \sum_{j=1}^{i}\nu_j)$ holds. We build an $\epsilon$-achievable sequence of testing schemes based on a compound hypothesis testing problem in which we need to specify two sets of distributions in the null and alternative hypotheses. The set of distributions in the alternative hypothesis is
\begin{align}
 H_1^{\prime}:\{Q_{Y_j^n}\times Q_{X_t^n}\}_{j\in [1:k],t\in [1:r]}.
 \end{align}
  Since we are interested in showing $\epsilon$-error is achievable, we select the set of distributions in the null hypothesis as
\begin{align}
H_0^{\prime}:\{P_{Y_lX_l}^{\otimes n}\}_{l\in [i:m]}.
\end{align}
 We omit the other distributions $P_{Y_lX_l}^{\otimes n}$ for $l\in [1:i-1]$ in the null hypothesis of the above problem because in the mixture setting they contribute only to a total error probability up to $\sum_{j=1}^{i-1}\nu_j$ which is what we desire. Given a sequence of testing schemes such that $E$ is achievable $(\phi_n,\psi_n)$ in testing $H_0^{\prime}$ against $H_1^{\prime}$, we obtain as by-products the collection of intersected decision regions $\mathcal{I}_n^{(l)}(E)$ with vanishing complement probability, cf. Theorem \ref{thm_1}.\\
 We use $\phi_n$ to compress $x^n$. Next we select a decision region $\mathcal{A}_n$ for testing $P_{Y^nX^n}$ against $Q_{Y^nX^n}$  such that for all $l\in [i:m]$, $\mathcal{A}_n^c\subset (\mathcal{I}_n^{(l)}(E))^c$ holds. This ensures that $P_{Y_i^n\phi_n(X_i^n)}(\mathcal{A}_n^c)\to 0$ for all $l\in [i:m]$. Hence, asymptotically we have $P_{Y^n\phi_n(X^n)}[\mathcal{A}_n^c]\leq \sum_{j=1}^{i-1}\nu_jP_{Y_j^n\phi_n(X_j^n)}[\mathcal{A}_n^c]\leq \epsilon$.\\
As in the converse proof of Theorem \ref{thm_3}, assume that we have a likelihood based decision region $\mathcal{A}_n$ for testing $P_{Y^nX^n}$ against $Q_{Y^nX^n}$. Then we use change of measure steps to obtain for each $i\in [1:m]$ a likelihood based rejection region of testing $P_{Y_i^n\phi_n(X_i^n)}$ against $Q_{Y_{j_i^{\star}}^n}\times Q_{\phi_n(X_{t_s^{\star}}^n)}$, called $\mathcal{C}_{n}^{(i)}$. Then for all sufficiently large $n$ we have
\begin{align}
\epsilon>\sum_{l=1}^m \nu_l P_{Y_l^n\phi_n(X_l^n)}(\mathcal{C}_{n}^{(l)}).
\end{align}
We show that for all $l\in [1:i]$ we have $P_{Y_l^n\phi_n(X_l^n)}(\mathcal{C}_{n}^{(l)})$ converges to 1 when $E>\xi_i(R_c) + 3\gamma$ holds. Taking the limit superior on both sides on the above inequality, we obtain that $\epsilon>\sum_{j=1}^i\nu_j$, a contradiction. Hence we must have $E\leq\xi_i(R_c)$. \\
When no compression is involved, $\phi_n$ is the identity mapping, the convergence of $P_{Y_l^nX_l^n}(\mathcal{C}_{n}^{(l)})$ to 1 can be seen from the weak law of large numbers since $\mathcal{C}_{n}^{(l)}$ is characterized by the likelihood ratio. In our case this is not possible since the likelihood ratio can not be factorized as the sum of identical components. Using the strong converse arguments which guarantee the convergence in the limit superior sense, it can be seen that if $\epsilon\in [\max_{j\in [1:i-1]}{\nu_j}, \max_{j\in [1:i]}{\nu_j})$ holds, then we must have $E\leq\xi_{i}(R_c)$. This approach does not match the achievability result and does not yield conclusive information when $\epsilon>\max_{i\in[1:m]}\nu_i$. The required convergence of $P_{Y_l^n\phi_n(X_l^n)}(\mathcal{C}_{n}^{(l)})$ is guaranteed by the second part of Theorem \ref{thm_2} since $\xi_i(R_c)>\xi_{l}(R_c)$ holds for all $l\in [1:i]$ due to Assumption \ref{assump_2}. 

\begin{proof}
\textit{Achievability}: Given $i\in [1:m]$, consider a reduced compound problem of designing a testing scheme $(\phi_n,\psi_n)$ to differentiate between $\{P_{Y_lX_l}^{\otimes n}\}_{l\in [i:m]}$ and $\{Q_{Y_j^n}\times Q_{X_t^n}\}_{j\in [1:k],t\in [1:r]}$. By Theorem \ref{thm_1} and Proposition \ref{ln_conseq}, for each $\gamma>0$, there exists a sequence of testing schemes for this reduced compound setting such that $E+\gamma/2 = \xi_i(R_c)-\gamma/2$ is achievable, i.e.,
\begin{align}
  \lim_{n\to\infty}P_{Y_l^n\phi_n(X_l^n)}[(\mathcal{I}_{n}^{(l)}(E))^c] = 0,\;\forall l\in [i:m]. \label{eq_79}
\end{align}
In contrast to the achievability proof of Theorem \ref{thm_3}, although we can use the same sequence of compression mappings $(\phi_n)$ to compress $x^n$, we need to define a new sequence of decision mappings $\bar{\psi}_n$. For each $n$ and given the compression mapping $\phi_n$, define the following measure
\begin{align}
  \bar{P}_n = \sum_{l=1}^mP_{Y_l^n\phi_n(X_l^n)}.
\end{align}
Define an acceptance region for our testing $P_{Y^nX^n}$ against $Q_{Y^nX^n}$ problem as follows
\begin{align}
  \mathcal{A}_n = \big\{(y^n,\phi_n(x^n))\mid &\bar{P}_n(y^n,\phi_n(x^n))\nonumber\\
  &\geq \max_{j\in[1:k],t\in [1:r]}Q_{Y_j^n}\times Q_{\phi_n(X_t^n)}(y^n,\phi_n(x^n))e^{nE}\big\}.\label{accpt_region}
\end{align}
Then the probability of miss detection is given by
\begin{align}
  \beta_n = Q_{Y^n\phi_n(X^n)}(\mathcal{A}_n)&\leq kr\max_{j\in[1:k],t\in[1:r]}Q_{Y_j^n}\times Q_{\phi_n(X_t^n)}(\mathcal{A}_n)\nonumber\\
  &\leq kre^{-nE}\bar{P}_n(\mathcal{A}_n)\leq krme^{-nE}.
\end{align}
\noindent For the given $P_{Y^nX^n}$, the false alarm probability is given by
\begin{align}
  P_{Y^n\phi_n(X)^n}(\mathcal{A}_n^c) = \sum_{l=1}^m\nu_l P_{Y_l^n\phi_n(X_l^n)}(\mathcal{A}_n^c).
\end{align}
For each $l\in [1:m]$, if $(y^n,\phi_n(x^n))\in\mathcal{A}_n^c$ then we have
\begin{align}
P_{Y_l^n\phi_n(X_l^n)}(y^n,\phi_n(x^n))&\leq \bar{P}_n(y^n,\phi_n(x^n))\nonumber\\
&< e^{nE}\max_{j\in [1:k],t\in[1:r]}Q_{Y_j^n}\times Q_{\phi_n(X_t^n)}(y^n,\phi_n(x^n))
\end{align}
This means that $(y^n,\phi_n(x^n))\in (\mathcal{I}_{n}^{(l)}(E))^{c}$, cf. \eqref{ie_expl} for the definition. 
In summary we have
\begin{align}
  P_{Y^n\phi_n(X)^n}(\mathcal{A}_n^c)&=\sum_l\nu_lP_{Y_l^n\phi_n(X_l^n)}[\mathcal{A}_n^c]\nonumber\\
                                       &\leq \sum_l\nu_lP_{Y_l^n\phi_n(X_l^n)}[(\mathcal{I}_{n}^{(l)}(E))^c]\nonumber\\
                                     &\leq\sum_{l\in[1:i-1]}\nu_l + \sum_{l\in[i:m]}\nu_lP_{Y_l^n\phi_n(X_l^n)}[(\mathcal{I}_{n}^{(l)}(E))^c].
\end{align}
Therefore we have
\begin{equation}
  \limsup_{n\to\infty}\alpha_n \stackrel{\eqref{eq_79}}{\leq} \sum_{l\in[1:i-1]}\nu_l.
\end{equation}
This implies that $E_{\mathrm{mix},\epsilon}^{\star}(R_c)\geq \xi_i(R_c)$ if $\sum_{l=1}^{i-1}\nu_l\leq \epsilon <\sum_{l=1}^i\nu_l$.

\noindent\textit{Converse}: Given an $\epsilon \in [\sum_{j=1}^{i-1}\nu_j,\sum_{j=1}^i\nu_j)$ and an arbitrary $\gamma>0$, assume that $(E+\gamma)$ is $\epsilon$-achievable via a sequence of testing schemes $(\phi_n,\psi_n)$. In a similar fashion as in the converse proof of Theorem \ref{thm_3} we have,
\begin{align}
& \alpha_n + e^{nE}\beta_n +e^{-n\gamma_n}\nonumber\\
&\geq \sum_{i=1}^m \nu_i P_{Y_i^n\phi_n(X_i^n)}(\mathcal{C}_{n}^{(i)}),
\end{align}
where the rejection regions $\mathcal{C}_{n}^{(i)}$ are defined as in \eqref{c_i_n_def}.
Assume that $E = \xi_i(R_c) + 3\gamma$ holds. This implies that for all $l\in [1:i]$ we have $E>\xi_l(R_c)$. For each $l\in [1:i]$, we apply the second part of Theorem \ref{thm_2} to the problem of testing $P_{Y_lX_l}^{\otimes n}$ against $Q_{Y_{j_l^{\star}}^{n}}\times Q_{X_{t_s^{\star}}^{n}}$ where $l\in \mathfrak{F}_s$ via the sequence of testing schemes $(\phi_n,\mathbf{1}_{(\mathcal{C}_{n}^{(l)})^c})$ to obtain that
\begin{equation}
  \lim_{n\to\infty}P_{Y_l^n\phi_n(X_l^n)}(\mathcal{C}_{n}^{(l)}) = 1,\;\forall l\in[1:i].
\end{equation}
Since $e^{nE}\beta_n\to 0$ as $n\to\infty$, this implies that
\begin{align}
  \epsilon\geq \limsup_{n\to\infty}\alpha_n\geq \sum_{l=1}^i\nu_l,
\end{align}
a contradiction. Therefore we must have $E\leq \xi_i(R_c)$ and hence $E^{\star}_{\mathrm{mix},\epsilon}(R_c)\leq \xi_i(R_c)$.
\end{proof}
\section{A refined relation to the WAK problem}\label{sec_6}
In this section we use the techniques and results from the previous sections to establish new results for the WAK problem in which the joint distribution is a mixture of iid components.\\
We first recall the definition of the WAK problem. Assume that we have a joint source $(X^n,Y^n)\sim P_{X^nY^n}$ which takes values on an alphabet $\mathcal{X}_n\times\mathcal{Y}_n$ where $\mathcal{Y}_n$ is finite or countable infinite. A code for the WAK problem is a triple of mappings 
\begin{align}
\phi_{1n}\colon\mathcal{X}_n\to\mathcal{M}_1&,\; \phi_{2n}\colon\mathcal{Y}_n\to\mathcal{M}_2.\nonumber\\
\bar{\psi}_n\colon\mathcal{M}_1&\times\mathcal{M}_2\to\mathcal{Y}_n.
\end{align}
In the WAK setting we aim to control the error probability $\mathrm{Pr}\{\hat{Y}^n\neq Y^n\}$ where $\hat{Y}^n = \bar{\psi}_n(\phi_{1n}(X^n),\phi_{2n}(Y^n))$. The achievable region has been characterized using the information-spectrum formula in \cite{miyake1995coding}. We are interested in single-letter formulas for the ($\epsilon$-) achievable regions. To obtain these we relate the WAK problem to the testing against independence problem with general distributions.

The hypotheses are given by
\begin{align}
  H_0&\colon (y^n,x^n)\sim P_{Y^nX^n},\nonumber\\
  H_1&\colon (y^n,x^n)\sim Q_{Y^n}\times P_{X^n},
\end{align}
where $Q_{Y^n}$ is a distribution on $\mathcal{Y}_n$.
Similarly we use \eqref{testing_sch_a} and \eqref{testing_sch_b} for the definition of a testing scheme in this case. Additionally, Definition \ref{def_1} is taken as the definition of $\epsilon$-achievability.

When $\mathcal{Y}$ is a finite or countable finite alphabet, and the process $(Y_i)_{i=-\infty}^{\infty}$ in the WAK problem is a stationary and ergodic process which has a finite entropy rate, as well as $\mathcal{Y}_n = \mathcal{Y}^n$ and $Q_{Y^n}= P_{Y^n}$, a generalized relation between the WAK problem and the hypothesis testing against independence problem has been established in our recent work \cite{vu2021hypothesis}. The relation allows us to transfer results from the testing against independence problem to the WAK problem and vice versa. However it is not strong enough for our current interest. In the following we study a refined relation between the two problems.

In this section we similarly assume that $\mathcal{Y}_n = \mathcal{Y}^n$ where specifically $\mathcal{Y}$ is a finite alphabet. We further assume that $Q_{Y^n} = \chi^{\otimes n}$ where $\chi$ is the uniform distribution on $\mathcal{Y}$. For simplicity we call this setting U(niform)-HT.
Define a set
\begin{align}
\mathcal{A}_n(\phi_n,t) = \{(y^n,\phi_n(x^n))\mid P_{Y^n\vert  \phi_{n}(X^n)}(y^n\vert  \phi_{n}(x^n))\leq t\}.
\end{align}
For a given WAK-code $(\phi_{1n},\phi_{2n},\psi_n)$, and an arbitrary number $\eta$, we have
\begin{align}
&  \mathrm{Pr}\{\hat{Y}^n= Y^n\}\leq  P_{Y^n\phi_{1n}(X^n)}[(\mathcal{A}_n(\phi_{1n},e^{-\eta}/\vert  \mathcal{M}_2\vert  ))^c]\nonumber\\
                               &\hspace{1cm}+ \sum_{u_1}P_{\phi_{1n}(X^n)}(u_1)\sum_{\substack{y^n\colon P_{Y^n\vert  \phi_{1n}(X^n)}(y^n\vert  u_1)\leq e^{-\eta}/\vert  \mathcal{M}_2\vert  \\
  y^n=\bar{\psi}_n(u_1,\phi_{2n}(y^n))}}P_{Y^n\vert  \phi_{1n}(X^n)}(y^n\vert  u_1)\nonumber\\
  &\stackrel{(*)}{\leq}  P_{Y^n\phi_{1n}(X^n)}[(\mathcal{A}_n(\phi_{1n},e^{-\eta}/\vert  \mathcal{M}_2\vert  ))^c] + e^{-\eta},
\end{align}
where $(*)$ holds as for a given $u_1\in\mathcal{M}_1$ the number of $y^n$ satisfying $y^n=\bar{\psi}_n(u_1,\phi_{2n}(y^n))$ is upper bounded by $\vert  \mathcal{M}_2\vert  $.
Therefore, we obtain
\begin{align}
\mathrm{Pr}\{\hat{Y}^n\neq Y^n\} + e^{-\eta} \geq P_{Y^n\phi_{1n}(X^n)}(\mathcal{A}_n(\phi_{1n},e^{-\eta}/\vert  \mathcal{M}_2\vert  )).\label{wak_basic}
\end{align}
Additionally, given a testing scheme $(\phi_n,\psi_n)$ for the U-HT problem and an arbitrarily positive number $\gamma$ we also have by \cite[Lemma 4.1.2]{hanspectrum}
\begin{align}
  \alpha_n + \gamma\beta_n&\geq \mathrm{Pr}\big\{\frac{P_{Y^n\phi_n(X^n)}}{\chi^{\otimes n}\times P_{\phi_n}(X^n)}(Y^n,\phi_n(X^n))\leq \gamma\big\}\nonumber\\
  &= P_{Y^n\phi_n(X^n)}(\mathcal{A}_n(\phi_n,\gamma/\vert  \mathcal{Y}\vert  ^n)).\label{testing_abg}
\end{align}
Similar as in \cite[Theorem 2]{vu2021hypothesis} we have the following result.
\begin{theorem}\label{thm_5}
  Given positive numbers $\eta$ and $\gamma$.
  \begin{itemize}
  \item From a WAK-code $(\phi_{1n},\phi_{2n},\bar{\psi}_n)$, we can construct a testing scheme for a U-HT problem $(\phi_{1n},\psi_n)$ such that
    \begin{align}
      \alpha_n&\leq \mathrm{Pr}\{\hat{Y}^n\neq Y^n\} + e^{-\eta},\; \beta_n\leq \frac{e^{\eta}\vert  \mathcal{M}_2\vert  }{\vert  \mathcal{Y}\vert  ^{n}}.
    \end{align}
  \item For a given U-HT testing scheme $(\phi_n,\psi_n)$, there exists a WAK-code $(\phi_{1n},\phi_{2n},\bar{\psi}_n)$ such that
    \begin{align}
      \mathrm{Pr}\{\hat{Y}^n\neq Y^n\}\leq \alpha_n +\gamma\beta_n + \vert  \mathcal{Y}\vert  ^n/(\gamma\vert  \mathcal{M}_2\vert  ).
    \end{align}  
\end{itemize}
\end{theorem}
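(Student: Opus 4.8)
\textit{Proof proposal.} The plan is to prove the two directions separately; each reduces to one of the two displayed inequalities established just before the statement, so the remaining work is mostly bookkeeping, plus a random-binning argument for the second direction. For the first bullet I would reuse the WAK compressor $\phi_{1n}$ as the U-HT compressor and let the decision map reject $H_0$ exactly on the low-likelihood set, i.e.\ set $\psi_n(y^n,u_1)=0$ iff $(y^n,u_1)\in\mathcal{A}_n(\phi_{1n},e^{-\eta}/|\mathcal{M}_2|)$ and $\psi_n(y^n,u_1)=1$ otherwise. Then $\alpha_n = P_{Y^n\phi_{1n}(X^n)}\big(\mathcal{A}_n(\phi_{1n},e^{-\eta}/|\mathcal{M}_2|)\big)$, which is at most $\mathrm{Pr}\{\hat{Y}^n\neq Y^n\}+e^{-\eta}$ by the inequality derived from the WAK decoder just above. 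For the type-II error, under $H_1=\chi^{\otimes n}\times P_{X^n}$ the sequence $Y^n$ is uniform and independent of $\phi_{1n}(X^n)$, so $\beta_n = (\chi^{\otimes n}\times P_{\phi_{1n}(X^n)})\big((\mathcal{A}_n(\phi_{1n},e^{-\eta}/|\mathcal{M}_2|))^c\big)$; since for each $u_1$ at most $e^{\eta}|\mathcal{M}_2|$ sequences $y^n$ satisfy $P_{Y^n|\phi_{1n}(X^n)}(y^n|u_1)>e^{-\eta}/|\mathcal{M}_2|$ and each carries uniform mass $|\mathcal{Y}|^{-n}$, this is at most $e^{\eta}|\mathcal{M}_2|/|\mathcal{Y}|^{n}$.

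For the second bullet I would set $\phi_{1n}=\phi_n$ and construct $\phi_{2n}$ by random binning: assign each $y^n\in\mathcal{Y}^n$ an independent uniform label in $\mathcal{M}_2$, and let the decoder $\bar{\psi}_n(u_1,u_2)$ output the unique $y^n$ with $\phi_{2n}(y^n)=u_2$ and $P_{Y^n|\phi_{1n}(X^n)}(y^n|u_1)>\gamma/|\mathcal{Y}|^n$ (declaring failure, hence error, if no such $y^n$ or more than one exists). A reconstruction error then forces one of two events: either $(Y^n,\phi_n(X^n))\in\mathcal{A}_n(\phi_n,\gamma/|\mathcal{Y}|^n)$, whose $P$-probability is at most $\alpha_n+\gamma\beta_n$ by \eqref{testing_abg}; or a collision, i.e.\ some $\tilde{y}^n\neq Y^n$ with $P_{Y^n|\phi_{1n}(X^n)}(\tilde{y}^n|u_1)>\gamma/|\mathcal{Y}|^n$ is assigned the same bin as $Y^n$. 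For any fixed $u_1$ there are fewer than $|\mathcal{Y}|^n/\gamma$ such $\tilde{y}^n$, since their conditional probabilities sum to at most one, and each shares $Y^n$'s bin with probability $1/|\mathcal{M}_2|$ over the binning, so the expected collision probability is at most $|\mathcal{Y}|^n/(\gamma|\mathcal{M}_2|)$. Averaging the total error probability over the random binning and fixing a realization at or below the average yields a deterministic $\phi_{2n}$ with $\mathrm{Pr}\{\hat{Y}^n\neq Y^n\}\leq \alpha_n+\gamma\beta_n+|\mathcal{Y}|^n/(\gamma|\mathcal{M}_2|)$.

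The two counting estimates (the number of $y^n$ above a likelihood threshold) and the union bound over collisions are routine; the only genuinely delicate point is the random-binning step for the second direction. There the care needed is to make the WAK decoder depend only on $(u_1,u_2)$ and the fixed kernel $P_{Y^n|\phi_{1n}(X^n)}$ coming from the source model (not on the U-HT decision map $\psi_n$), and to invoke \eqref{testing_abg} with exactly the threshold $\gamma/|\mathcal{Y}|^n$ so that the two error contributions add up to precisely the stated bound.
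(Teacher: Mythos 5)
Your proposal is correct and follows essentially the same route as the paper: for the first bullet you reuse $\phi_{1n}$ and accept $H_0$ on $(\mathcal{A}_n(\phi_{1n},e^{-\eta}/|\mathcal{M}_2|))^c$, bounding $\alpha_n$ by the pre-theorem WAK inequality, and for the second bullet you use the same random binning with decoding threshold $\gamma/|\mathcal{Y}|^n$, the same two error events, the same $|\mathcal{Y}|^n/\gamma$ list-size bound, and the union bound over bin collisions. The only cosmetic deviations are that you bound $\beta_n$ by counting the at most $e^{\eta}|\mathcal{M}_2|$ sequences exceeding the conditional-probability threshold per message (the paper performs the equivalent change of measure from $\chi^{\otimes n}\times P_{\phi_{1n}(X^n)}$ to $P_{Y^n\phi_{1n}(X^n)}$) and that you spell out the derandomization over the random binning, which the paper leaves implicit.
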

\begin{proof}
U-HT $\Leftarrow$ WAK: given a WAK-code $(\phi_{1n},\phi_{2n},\bar{\psi}_n)$ we design a U-HT testing scheme as follows. We use $\phi_{1n}$ to compress $x^n$ in the U-HT setting. A decision region for the U-HT setting is given by $(\mathcal{A}_n(\phi_{1n},e^{-\eta}/\vert  \mathcal{M}_2\vert  ))^{c}$. Then the false alarm probability is upper bounded as
\begin{align}
\alpha_n &= P_{Y^n\phi_{1n}(X^n)}(\mathcal{A}_n(\phi_{1n},e^{-\eta}/\vert  \mathcal{M}_2\vert  ))\nonumber\\
&\stackrel{\eqref{wak_basic}}{\leq} \mathrm{Pr}\{\hat{Y}^n\neq Y^n\} + e^{-\eta}.
\end{align}
The miss detection probability is upper bounded as
\begin{align}
\beta_n &= \chi^{\otimes n}\times P_{\phi_{1n}(X^n)}[(\mathcal{A}_n(\phi_{1n},e^{-\eta}/\vert  \mathcal{M}_2\vert  ))^c]\nonumber\\
&\stackrel{(**)}{\leq} \frac{e^{\eta}\vert  \mathcal{M}_2\vert  }{\vert  \mathcal{Y}\vert  ^{n}}P_{Y^n\phi_{1n}(X^n)}[(\mathcal{A}_n(\phi_{1n},e^{-\eta}/\vert  \mathcal{M}_2\vert  ))^c]\leq \frac{e^{\eta}\vert  \mathcal{M}_2\vert  }{\vert  \mathcal{Y}\vert  ^{n}},
\end{align}
where $(**)$ follows since for $(y^n,\phi_{1n}(x^n))\in (\mathcal{A}_n(\phi_{1n},e^{-\eta}/\vert  \mathcal{M}_2\vert  ))^c$ we have $e^{\eta}\vert  \mathcal{M}_2\vert  P_{Y^n\vert  \phi_{1n}(X^n)}(y^n\vert  \phi_{1n}(x^n))\geq 1$.\\
U-HT $\Rightarrow$ WAK: given a U-HT scheme $(\phi_n,\psi_n)$ we show the existence a WAK-code as follows. We use $\phi_n$ to compress $x^n$ in the WAK problem. We randomly assign $y^n$ to an index $m_2$ in an alphabet $\mathcal{M}_2$. For each $m_2$ we denote the corresponding (random) set of such $y^n$ by $\mathcal{B}(m_2)$. We declare that $\hat{y}^n$ is the original source sequence if it is the unique sequence satisfying $\hat{y}^n\in\mathcal{B}(m_2)$ and $(\hat{y}^n,\phi_n(x^n))\in (\mathcal{A}_n(\phi_n,\gamma/\vert  \mathcal{Y}\vert  ^n))^c$. For each $u\in \mathcal{M}$, the cardinality of the set of $y^n$ satisfying $(y^n,u)\in (\mathcal{A}_n(\phi_n,\gamma/\vert  \mathcal{Y}\vert  ^n))^c$ is upper bounded by $\vert  \mathcal{Y}\vert  ^n/\gamma$. There are two sources of errors:
\begin{itemize}
\item either $(y^n,\phi_n(x^n))\in \mathcal{A}_n(\phi_n,\gamma/\vert  \mathcal{Y}\vert  ^n)$ holds, 
\item or there exists another $\tilde{y}^n$ for which $\tilde{y}^n\in \mathcal{B}(m_2)$ and $(\tilde{y}^n,\phi_n(x^n))\in (\mathcal{A}_n(\phi_n,\gamma/\vert  \mathcal{Y}\vert  ^n))^c$ hold.
\end{itemize}
The probability of the first event is upper bounded as
  \begin{align}
    \mathrm{Pr}\{(Y^n,\phi_n(X^n))\in \mathcal{A}_n(\phi_n,\gamma/\vert  \mathcal{Y}\vert  ^n)\}\stackrel{\eqref{testing_abg}}{\leq} \alpha_n + \gamma\beta_n,
  \end{align}
The probability of the second event is upper bounded by $\vert  \mathcal{Y}\vert  ^n/(\gamma\vert  \mathcal{M}_2\vert  )$ because each sequence $\tilde{y}^n$ is assigned to a bin with probability $1/\vert  \mathcal{M}_2\vert  $ and the number of such sequences satisfying the second event is upper bounded by $\vert  \mathcal{Y}\vert  ^n/\gamma$. Hence, it can be seen that
\begin{align}
\mathrm{Pr}\{\hat{Y}^n\neq Y^n\}\leq \alpha_n +\gamma\beta_n + \vert  \mathcal{Y}\vert  ^n/(\gamma\vert  \mathcal{M}_2\vert  ).
\end{align}
\end{proof}
\noindent Fix an $\epsilon\in[0,1)$. Let $\mathcal{R}_{\mathrm{WAK},\epsilon}$ be the closure of all $(R_c,R_2)$ such that there exists a sequence WAK-codes $(\phi_{1n},\phi_{2n},\bar{\psi}_n)$ which satisfies
  \begin{align}
    \limsup_{n\to\infty}\frac{1}{n}\log \vert  \phi_{1n}\vert  \leq R_c,&\; \limsup_{n\to\infty}\frac{1}{n}\log\vert  \phi_{2n}\vert  \leq R_2,\nonumber\\
    \limsup_{n\to\infty}\mathrm{Pr}\{\hat{Y}^n\neq Y^n\}&\leq \epsilon.\label{ep_wak_reg}
  \end{align}
  Define $R_{2,\epsilon}^{\star}(R_c) = \inf\{R_2\mid (R_c,R_2)\in\mathcal{R}_{\mathrm{WAK},\epsilon}\}$. We observe that $R_{2,\epsilon}^{\star}(R_c)\leq \log\vert  \mathcal{Y}\vert  $ for all $\epsilon\in [0,1)$ and $R_c$. By setting $\gamma = \vert  \mathcal{Y}\vert  ^n$ in \eqref{testing_abg} we obtain that $E_{\epsilon}^{\star}(R_c)\leq \log \vert  \mathcal{Y}\vert  $. The following result summarizes the relation between the minimum encoding rate $R_{2,\epsilon}^{\star}(R_c)$ in the WAK problem and the maximum $\epsilon$-achievable error exponent in the U-HT problem. As in \cite[Theorem 3]{vu2021hypothesis} we have the following result.
  \begin{corollary}\label{coroll_1}
For any given $R_c>0$ and $\epsilon\in [0,1)$, we have    
\begin{equation}
R_{2,\epsilon}^{\star}(R_c) + E_{\epsilon}^{\star}(R_c) = \log\vert  \mathcal{Y}\vert  .
\end{equation}
\end{corollary}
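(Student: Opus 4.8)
The plan is to establish the two inequalities $R_{2,\epsilon}^{\star}(R_c)+E_{\epsilon}^{\star}(R_c)\ge \log|\mathcal{Y}|$ and $R_{2,\epsilon}^{\star}(R_c)+E_{\epsilon}^{\star}(R_c)\le \log|\mathcal{Y}|$ separately. In each direction I would take a witnessing sequence of codes for one problem, push it through the corresponding bullet of Theorem~\ref{thm_5}, and choose the free parameter ($\eta$, or $\gamma$ together with $|\mathcal{M}_2|$) as a function of $n$ that varies subexponentially but fast enough to kill the residual terms. Since it was already noted before the statement that $E_{\epsilon}^{\star}(R_c)\le\log|\mathcal{Y}|$ and $R_{2,\epsilon}^{\star}(R_c)\le\log|\mathcal{Y}|$, both quantities are finite and the identity $R_{2,\epsilon}^{\star}(R_c)=\log|\mathcal{Y}|-E_{\epsilon}^{\star}(R_c)$ is the thing to prove.

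For the ``$\ge$'' direction I would fix any $(R_c,R_2)\in\mathcal{R}_{\mathrm{WAK},\epsilon}$ with a witnessing sequence of WAK-codes $(\phi_{1n},\phi_{2n},\bar\psi_n)$, and apply the first bullet of Theorem~\ref{thm_5} with $\eta=\eta_n$, where $\eta_n\to\infty$ and $\eta_n/n\to 0$. This yields a sequence of U-HT schemes $(\phi_{1n},\psi_n)$ with $\alpha_n\le\mathrm{Pr}\{\hat Y^n\neq Y^n\}+e^{-\eta_n}$, hence $\limsup_n\alpha_n\le\epsilon$, and with $\tfrac1n\log\tfrac1{\beta_n}\ge\log|\mathcal{Y}|-\tfrac1n\log|\mathcal{M}_2|-\eta_n/n$, whose $\liminf$ is at least $\log|\mathcal{Y}|-R_2$; the compression rate of $\phi_{1n}$ is unchanged and still $\le R_c$. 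So $\log|\mathcal{Y}|-R_2$ is an $\epsilon$-achievable error exponent at rate $R_c$, giving $E_{\epsilon}^{\star}(R_c)\ge\log|\mathcal{Y}|-R_2$, i.e. $R_2\ge\log|\mathcal{Y}|-E_{\epsilon}^{\star}(R_c)$; taking the infimum over admissible $R_2$ gives $R_{2,\epsilon}^{\star}(R_c)\ge\log|\mathcal{Y}|-E_{\epsilon}^{\star}(R_c)$.

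For the ``$\le$'' direction I would fix any $\epsilon$-achievable exponent $E<E_{\epsilon}^{\star}(R_c)$ with a witnessing sequence $(\phi_n,\psi_n)$, so $\limsup_n\alpha_n\le\epsilon$, $\tfrac1n\log|\phi_n|\le R_c+o(1)$, and $\beta_n\le e^{-n(E-\delta)}$ eventually for any fixed $\delta>0$. Applying the second bullet of Theorem~\ref{thm_5} with $\gamma=\gamma_n=e^{n(E-2\delta)}$ and $|\mathcal{M}_2|=\lceil e^{n(\log|\mathcal{Y}|-E+3\delta)}\rceil$ makes $\gamma_n\beta_n\le e^{-n\delta}\to0$ and $|\mathcal{Y}|^n/(\gamma_n|\mathcal{M}_2|)\le e^{-n\delta}\to0$, so the resulting WAK-codes satisfy $\limsup_n\mathrm{Pr}\{\hat Y^n\neq Y^n\}\le\limsup_n\alpha_n\le\epsilon$, use first-encoder rate $\le R_c$, and second-encoder rate $\le\log|\mathcal{Y}|-E+3\delta$. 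Hence $(R_c,\log|\mathcal{Y}|-E+3\delta)\in\mathcal{R}_{\mathrm{WAK},\epsilon}$, so $R_{2,\epsilon}^{\star}(R_c)\le\log|\mathcal{Y}|-E+3\delta$; letting $\delta\downarrow0$ and $E\uparrow E_{\epsilon}^{\star}(R_c)$ gives $R_{2,\epsilon}^{\star}(R_c)\le\log|\mathcal{Y}|-E_{\epsilon}^{\star}(R_c)$. Combining the two bounds yields the claimed identity.

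The main obstacle is not conceptual but bookkeeping: one must make the parameter sequences $\eta_n$ and $\gamma_n$ vary slowly enough that the exponential rates are unaffected, yet fast enough that after the transformation the new error quantity still has $\limsup\le\epsilon$ rather than $\epsilon$ plus a stubborn residual; this is exactly the reason for the $\eta_n/n\to0$ and the $e^{-n\delta}$ slack above. A second point requiring care is that $E_{\epsilon}^{\star}(R_c)$ and $R_{2,\epsilon}^{\star}(R_c)$ are a supremum and an infimum, possibly not attained, so the argument must be carried out along approximating sequences with the limits taken only at the end, and one should check that the $\limsup$/closure conventions in Definition~\ref{def_1} and in the definition of $\mathcal{R}_{\mathrm{WAK},\epsilon}$ line up under the (rate-preserving) code transfer.
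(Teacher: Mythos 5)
Your proposal is correct and takes essentially the same route as the paper: in each direction a witnessing code sequence is pushed through Theorem \ref{thm_5} with subexponentially tuned parameters ($\eta_n$, respectively $\gamma_n$ and $|\mathcal{M}_2|$), which is exactly the transfer the paper invokes, writing out only the extreme case $R_{2,\epsilon}^{\star}(R_c)=\log|\mathcal{Y}|$ and deferring the remaining cases to the argument of \cite[Theorem 3]{vu2021hypothesis}. Your explicit two-inequality version is just a fuller write-up of that same code-transfer argument.
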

\begin{proof}
  We consider the extreme case where $R_{2,\epsilon}^{\star}(R_c) = \log\vert  \mathcal{Y}\vert  $ holds. Assume that $E>0$ is an achievable error exponent in the U-HT problem with the corresponding sequence of testing schemes $(\phi_n,\psi_n)$. We take $\gamma = e^{n(E-\eta)}$ where $0<\eta<\frac{2}{3}E$. By plugging it into the second part of Theorem \ref{thm_5} and choosing $\vert  \mathcal{M}_2\vert   = (\vert  \mathcal{Y}\vert  ^n/\gamma) e^{n\eta/2}$, we obtain a sequence WAK-codes such that $\limsup_{n\to\infty}\mathrm{Pr}\{\hat{Y}^n\neq Y^n\}\leq \epsilon$. The corresponding compression rate is $\log\vert  \mathcal{Y}\vert  -E+\frac{3}{2}\eta<R_{2,\epsilon}^{\star}(R_c)$, a contradiction. Therefore in this case we must have $E_{\epsilon}^{\star}(R_c)=0$. The other cases can be worked out similarly as in the proof of \cite[Theorem 3]{vu2021hypothesis}.
\end{proof}
Assume now that the source distribution in the WAK problem (and the distribution in the U-HT problem) is given by $P_{X^nY^n} = \sum_{i=1}^m\nu_iP_{X_iY_i}^{\otimes n}$. Further we assume that $Q_{Y_j^n} = \chi^{\otimes n}$ for all $j\in [1:k]$, and $\{Q_{X_t^n}\} = \{P_{\mathcal{X},s}^{\otimes n}\}$ as well as $Q_{Y^nX^n} = \chi^{\otimes n}\times P_{X^n}$ hold. Therefore we can use results from Section III as follows. The quantities $\theta_s(R_c)$, $s\in [1:\vert \mathcal{P}_{\mathcal{X}}\vert ]$, in this case are given by
\begin{align}
\theta_s(R_c) &= \max_{P_{U\vert  X}\colon I(X_s;U)\leq R_c}\min_{i\in\mathfrak{F}_s}[I(Y_i;U) - H(Y_i) + \log\vert  \mathcal{Y}\vert  ]\nonumber\\ 
& = \log\vert  \mathcal{Y}\vert  -\min_{P_{U\vert  X}\colon I(X_s;U)\leq R_c} \max_{i\in\mathfrak{F}_s}H(Y_i\vert  U).
\end{align}
Therefore using Corollary \ref{coroll_1} we obtain
\begin{equation}
R_{2,0}^{\star}(R_c) = \max_s\min_{P_{U\vert  X}\colon I(X_s;U)\leq R_c} \max_{i\in\mathfrak{F}_s}H(Y_i\vert  U).
\end{equation}
Similarly if Assumption \ref{assump_2} hold and $(\xi_i(R_c))_{i=1}^m$ is an increasing sequence, the minimum $\epsilon$-achievable compression rate at $R_c$ is given by
\begin{align}
R_{2,\epsilon}^{\star}(R_c) = \log\vert  \mathcal{Y}\vert   - \xi_i(R_c),\;\text{when}\;\sum_{j=1}^{i-1}\nu_i\leq \epsilon <\sum_{j=1}^i\nu_i,
\end{align}
where in this case for all $i\in [1:m]$ we have
\begin{equation}
\xi_i(R_c) = \log\vert  \mathcal{Y}\vert  -\min_{P_{U\vert  X}\colon I(X_s;U)\leq R_c} H(Y_i\vert  U).
\end{equation}
\appendices

\section{Hypothesis testing with two-sided compression}\label{ap_common_seq}
In this section we study code transformations between hypothesis testing problems with compression at both terminals. Although the setting considered in the following is not directly related to our main problem, the arguments presented herein however are useful in simplifying proofs of main results in this work.

Assume that $x^n$ is available at Terminal 1. Further $y^n$ is available at Terminal 2. At a given number of samples $n$ a generic testing scheme involves a triple of mappings $(\phi_{1n},\phi_{2n},\psi_n)$ where
\begin{align}
  \phi_{1n}\colon\mathcal{X}^n&\to\mathcal{M}_1,\;  \phi_{2n}\colon\mathcal{Y}^n\to\mathcal{M}_2,\nonumber\\
  \psi_n\colon &\mathcal{M}_1\times \mathcal{M}_2\to\{0,1\}.
\end{align}
Similarly as in \eqref{error_probs_def} the generic error probabilities $\alpha_n$ and $\beta_n$ are defined as
\begin{align}
  \alpha_n = P_{\phi_{1n}(X^n)\phi_{2n}(Y^n)}(1-\psi_n),\; \beta_n = Q_{\phi_{1n}(X^n)\phi_{2n}(Y^n)}(\psi_n).
\end{align}
We study a relation between two hypothesis testing problems which have the same distribution in the null hypothesis. The first problem involves the following hypotheses
\begin{align}
  H_0&\colon (x^n,y^n)\sim P_{X^nY^n}\nonumber\\
  H_1&\colon (x^n,y^n)\sim Q_{Y^n}\times Q_{X^n}, \label{first_setup}
\end{align}
whereas the second problem considers 
\begin{align}
  \bar{H}_0&\colon (x^n,y^n)\sim P_{X^nY^n}\nonumber\\
  \bar{H}_1&\colon (x^n,y^n)\sim \bar{Q}_{Y^n}\times \bar{Q}_{X^n}. \label{second_setup}
\end{align}
We make the following technical assumptions
\begin{align}
  \forall n,\; & P_{X^n}\ll Q_{X^n},\;P_{Y^n}\ll Q_{Y^n},\nonumber\\
  &\; P_{X^n}\ll \bar{Q}_{X^n},\;P_{Y^n}\ll \bar{Q}_{Y^n}. \label{ab_conds}
\end{align}

 \begin{figure}[htb]
   \centering
   \includegraphics{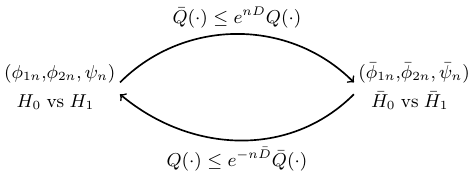}
   \caption{Constructive code transformations between two hypothesis testing problems. Each arrow indicates that given a testing scheme for the source node we can construct a testing scheme for the destination node. Each transformation allows us to change the measure of miss detection at the destination node to the measure of miss detection at the source node.}\label{code_transform_illu}
 \end{figure}
\noindent An overview of our derivation is depicted in Fig. \ref{code_transform_illu}. Let $(\phi_{1n},\phi_{2n},\psi_n)$ be a testing scheme for differentiating between $H_0$ and $H_1$ in \eqref{first_setup}. We construct a testing scheme $(\bar{\phi}_{1n},\bar{\phi}_{2n},\bar{\psi}_n)$ for testing $\bar{H}_0$ against $\bar{H}_1$ in \eqref{second_setup} as follows.
For an arbitrarily given $\gamma>0$, we define typical sets $\mathcal{B}_{n,\gamma}^1$, and $\mathcal{B}_{n,\gamma}^2$ as
\begin{align}
  \mathcal{B}_{n,\gamma}^1 &= \{x^n\mid \vert  [\log \bar{Q}_X^n(x^n) - \log Q_{X^n}(x^n)]/n-A_{X}\vert  <\gamma\},\nonumber\\
  \mathcal{B}_{n,\gamma}^2 &= \{y^n\mid \vert  [\log \bar{Q}_{Y^n}(y^n) - \log Q_{Y^n}(y^n)]/n-A_{Y}\vert  <\gamma\},\label{weak_law_sets}
\end{align}
where $A_{X}$ and $A_{Y}$ are finite numbers.\\
We define the compression mapping $\bar{\phi}_{1n}$ as follows
\begin{align}
  \bar{\phi}_{1n}\colon \mathcal{X}^n&\to \mathcal{M}_1\cup\{e_1\}\nonumber\\
  \bar{\phi}_{1n}(x^n)&\mapsto \begin{dcases}\phi_{1n}(x^n)\;&\text{if}\; x^n\in\mathcal{B}_{n,\gamma}^1,\\ 
  e_1\;&\text{otherwise}\end{dcases}.\label{transf_atx}
\end{align}
Similarly the compression mapping $\bar{\phi}_{2n}$ is defined as
\begin{align}
  \bar{\phi}_{2n}\colon \mathcal{Y}^n&\to \mathcal{M}_2\cup\{e_2\}\nonumber\\
  \bar{\phi}_{2n}(y^n)&\mapsto \begin{dcases}\phi_{2n}(y^n)\;&\text{if}\; y^n\in \mathcal{B}_{n,\gamma}^2,\\
  e_2\;&\text{otherwise}\end{dcases}.\label{transf_aty}
\end{align}
The corresponding decision mapping $\bar{\psi}_n$, is given as
\begin{align}
  \bar{\psi}_n\colon \prod_{k=1}^2 (\mathcal{M}_k\cup\{e_k\})&\to\{0,1\}\nonumber\\
  \bar{\psi}_n(\bar{u}_1,\bar{u}_2)&\mapsto \begin{dcases}\psi_n(\bar{u}_1,\bar{u}_2)\;&\text{if}\; \bar{u}_k\neq e_k,\; \forall k=1,2,\nonumber\\
  1&\;\text{otherwise}\end{dcases}.\label{transf_decs}
\end{align}
For notation simplicity we define for $k=1,2,$ and $u_k\in\mathcal{M}_k$, $\mathcal{W}_{u_k}^k =\phi_{kn}^{-1}(u_k)\cap \mathcal{B}_{n,\gamma}^k$.
The set of all $(x^n,y^n)$ for which $\bar{u}_1\neq e_1$ and $\bar{u}_2\neq e_2$ hold is
$
  \mathcal{B}_{n,\gamma}^1\times \mathcal{B}_{n,\gamma}^2,
$
which can be factorized further as
\begin{align}
  \bigcup_{(u_1,u_2)\in\mathcal{M}_1\times \mathcal{M}_2}(\mathcal{W}_{u_1}^1\times\mathcal{W}_{u_2}^2).
\end{align}
In differentiating between $P_{Y^nX^n}$ and $\bar{Q}_{Y^n}\times \bar{Q}_{X^n}$ the induced false alarm probability $\bar{\alpha}_n$  by the testing scheme $(\bar{\phi}_{1n},\bar{\phi}_{2n},\bar{\psi}_n)$  is upper bounded as

\begin{align}
  \bar{\alpha}_n &= 1-\sum_{(u_1,u_2)\in\mathcal{M}_1\times \mathcal{M}_2}P_{Y^nX^n}(\mathcal{W}_{u_1}^1\times\mathcal{W}_{u_2}^2)P_{H_0\vert  u_1,u_2}\nonumber\\
                    &\leq 1 - \sum_{(u_1,u_2)\in\mathcal{M}_1\times \mathcal{M}_2}P_{Y^nX^n}[\phi_{1n}^{-1}(u_1)\times \phi_{2n}^{-1}(u_2)]P_{H_0\vert  u_1,u_2}\nonumber\\
                    &+ P_{Y^nX^n}[([(\mathcal{B}_{n,\gamma}^1\times \mathcal{Y}^n)\cap(\mathcal{X}^n\times \mathcal{B}_{n,\gamma}^2)])^c]\nonumber\\
                    &\leq 1 -\sum_{(u_1,u_2)\in\mathcal{M}_1\times \mathcal{M}_2}P_{Y^nX^n}[\phi_{1n}^{-1}(u_1)\times \phi_{2n}^{-1}(u_2)]P_{H_0\vert  u_1,u_2} \nonumber\\
  &\quad+ P_{X^n}[(\mathcal{B}_{n,\gamma}^1)^c]+ P_{Y^n}[(\mathcal{B}_{n,\gamma}^2)^c]\nonumber\\
  & = \alpha_n +  P_{X^n}[(\mathcal{B}_{n,\gamma}^1)^c]+ P_{Y^n}[(\mathcal{B}_{n,\gamma}^2)^c].\label{false_alrm_bnd}
\end{align}
Similarly the probability of miss detection $\bar{\beta}_n$ is bounded by
\begin{align}
  \bar{\beta}_n &= \sum_{(u_1,u_2)\in\mathcal{M}_1\times \mathcal{M}_2}\bar{Q}_{X^n}(\mathcal{W}_{u_1}^1)\bar{Q}_{Y^n}(\mathcal{W}_{u_2}^2)P_{H_0\vert  u_1,u_2}\nonumber\\
                   &\leq e^{n( A_{X} + A_{Y} + 2\gamma)}\nonumber\\
  &\quad\times \sum_{(u_1,u_2)\in\mathcal{M}_1\times \mathcal{M}_2}Q_{X^n}(\mathcal{W}_{u_1}^1)Q_{Y^n}(\mathcal{W}_{u_2}^2)P_{H_0\vert  u_1,u_2}\nonumber\\
  &\leq e^{n( A_{X} + A_{Y} +2\gamma)}\beta_n.\label{miss_dect_bnd}
\end{align}
 
Given a testing scheme $(\bar{\phi}_{1n},\bar{\phi}_{2n},\bar{\psi}_n)$ for testing $P_{Y^nX^n}$ against $\bar{Q}_{Y^n}\times \bar{Q}_{X^n}$ we apply a similar procedure as from  \eqref{transf_atx} to \eqref{transf_decs}, by swapping the positions of $\bar{\phi}_{kn}$ and $\phi_{kn}$ for $k=1,2$, in \eqref{transf_atx} and \eqref{transf_aty} as well as switching the roles of $\bar{\psi}_n$ and $\psi_n$ in \eqref{transf_decs}, to obtain a testing scheme $(\phi_{1n},\phi_{2n},\psi_n)$ for testing $P_{Y^nX^n}$ against $Q_{Y^n}\times Q_{X^n}$.   

\noindent The induced false alarm probability $\alpha_n$ in testing  $H_0$ against $H_1$ is similarly upper bounded by
\begin{align}
  \alpha_n \leq \bar{\alpha}_n +  P_{X^n}[(\mathcal{B}_{n,\gamma}^1)^c]+ P_{Y^n}[(\mathcal{B}_{n,\gamma}^2)^c].\label{false_alrm_bnd_prime}
\end{align}
The induced miss detection probability $\beta_n$ in testing $H_0$ against $H_1$ is upper bounded as
\begin{align}
  \beta_n &= \sum_{(\bar{u}_1,\bar{u}_2)\in\bar{\mathcal{M}}_1\times \bar{\mathcal{M}}_2}Q_{X^n}(\mathcal{W}_{\bar{u}_1}^{1})Q_{Y^n}(\mathcal{W}_{\bar{u}_2}^{2})P_{\bar{H}_0\vert  \bar{u}_1,\bar{u}_2}\nonumber\\
                   &\leq e^{-n(A_{X} + A_{Y}- 2\gamma)}\nonumber\\
  &\quad\times \sum_{(\bar{u}_1,\bar{u}_2)\in\bar{\mathcal{M}}_1\times \bar{\mathcal{M}}_2}\bar{Q}_{X^n}(\mathcal{W}_{\bar{u}_1}^{1})\bar{Q}_{Y^n}(\mathcal{W}_{\bar{u}_2}^{2})P_{\bar{H}_0\vert  \bar{u}_1,\bar{u}_2}\nonumber\\
  &\leq e^{-n(A_{X} + A_{Y} -2\gamma)}\bar{\beta}_n.\label{miss_dect_bnd_prime}
\end{align}
For a given pair of compression rates $(R_1,R_2)$, let $\bar{E}_{\epsilon}^{\star}(R_1,R_2)$ be the maximum $\epsilon$-achievable error exponent for testing $\bar{H}_0$ against $\bar{H}_1$. Similarly let $E_{\epsilon}^{\star}(R_1,R_2)$ be the maximum $\epsilon$-achievable error exponent for testing $H_0$ against $H_1$. To establish a relation between $\bar{E}_{\epsilon}^{\star}(R_1,R_2)$ and $E_{\epsilon}^{\star}(R_1,R_2)$, we make the following assumption.
\begin{assumption}\label{assump_3}
  The sequences of joint distributions $(P_{X^nY^n})$, $(Q_{X^nY^n})$, $(\bar{Q}_{X^nY^n})$, and the quantities $A_{X}$, $A_{Y}$, satisfy \eqref{ab_conds} and the following conditions
  \begin{align}
    \forall \gamma>0,\; \lim_{n\to\infty} P_{X^n}(\mathcal{B}_{n,\gamma}^1)=1,\;    \lim_{n\to\infty} P_{Y^n}(\mathcal{B}_{n,\gamma}^2)=1. \label{aep_cond_gen}
  \end{align}
\end{assumption}
\noindent We give some examples in which Assumption \eqref{assump_3} is satisfied.
\begin{itemize}
\item Assume that
  \begin{align}
    P_{Y^n}= P_{Y}^{\otimes n},\; P_{X^n} &= P_{X}^{\otimes n},\; Q_{X^n} = Q_{X}^{\otimes n},\; Q_{Y^n} = Q_{Y}^{\otimes n},\nonumber\\
  \text{and}\;  \bar{Q}_{X^n} &= \bar{Q}_{X}^{\otimes n},\; \bar{Q}_{Y^n} = \bar{Q}_{Y}^{\otimes n},
  \end{align}
  hold such that the conditions in \eqref{ab_conds} are fulfilled. Furthermore we assume that
  \begin{align}
   A_{X} &= D(P_{X}\Vert Q_{X})- D(P_{X}\Vert \bar{Q}_{X}),\nonumber\\
    \text{and}\; A_{Y} &= D(P_{Y}\Vert Q_{Y}) - D(P_{Y}\Vert \bar{Q}_{Y})
    \end{align}
    are finite. By the weak law of large numbers, the conditions in \eqref{aep_cond_gen} are satisfied.
\item Assume that $(P_{Y^n})$, $(P_{X^n})$ are stationary and ergodic processes as well as $(Q_{Y^n})$, $(Q_{X^n})$, and $(\bar{Q}_{Y^n})$, $(\bar{Q}_{X^n})$ are finite order Markov processes with stationary transition probabilities such that the conditions in \eqref{ab_conds} are satisfied. Assume further that the relative divergence rates
  \begin{align}
    D_{X}&= \lim_{n\to\infty}[D(P_{X^{n+1}}\Vert Q_{X^{n+1}}) - D(P_{X^n}\Vert Q_{X^n})],\nonumber\\
    \bar{D}_{X}&= \lim_{n\to\infty}[D(P_{X^{n+1}}\Vert \bar{Q}_{X^{n+1}}) - D(P_{X^n}\Vert \bar{Q}_{X^n})],\nonumber\\
    D_{Y} &= \lim_{n\to\infty}[D(P_{Y^{n+1}}\Vert Q_{Y^{n+1}})-D(P_{Y^n}\Vert Q_{Y^n})],\nonumber\\
    \bar{D}_{Y} &= \lim_{n\to\infty}[D(P_{Y^{n+1}}\Vert \bar{Q}_{Y^{n+1}})-D(P_{Y^n}\Vert \bar{Q}_{Y^n})],
  \end{align}
  are finite. With $A_{X} = D_{X} - \bar{D}_{X}$, and $A_{Y} = D_{Y} - \bar{D}_{Y}$, the conditions in \eqref{aep_cond_gen} are fulfilled by \cite[Theorem 1]{barron1985strong}.
\end{itemize}
\noindent Assume that Assumption \ref{assump_3} is fulfilled, then \eqref{false_alrm_bnd} and \eqref{miss_dect_bnd} imply that
\begin{align}
 \bar{E}_{\epsilon}^{\star}(R_1,R_2) \geq E_{\epsilon}^{\star}(R_1,R_2) - [A_{X} + A_{Y}].
\end{align}
Conversely, \eqref{false_alrm_bnd_prime} and \eqref{miss_dect_bnd_prime} imply that
\begin{align}
  E_{\epsilon}^{\star}(R_1,R_2)\geq  \bar{E}_{\epsilon}^{\star}(R_1,R_2) + [A_{X} + A_{Y}].
\end{align}
In conclusion we have shown the following result.
\begin{theorem}\label{thm_codetrsfmis}
  Given $\epsilon\in [0,1)$ and $(R_1,R_2)\in\mathbb{R}_{+}^2$, under Assumption \ref{assump_3} we have
  \begin{align}
    E_{\epsilon}^{\star}(R_1,R_2) = \bar{E}_{\epsilon}^{\star}(R_1,R_2) + [A_{X} + A_{Y}].
  \end{align}  
\end{theorem}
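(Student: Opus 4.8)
The plan is to read the statement off from the two explicit code transformations already constructed above, combined with Assumption \ref{assump_3}. First I would check that the transformation \eqref{transf_atx}--\eqref{transf_decs} does not inflate the compression rates: the new encoders take values in $\mathcal{M}_k\cup\{e_k\}$, so $\frac{1}{n}\log(|\mathcal{M}_k|+1)=\frac{1}{n}\log|\mathcal{M}_k|+o(1)$, and the constraints $\limsup_n\frac{1}{n}\log|\bar{\phi}_{kn}|\leq R_k$ are inherited from the original scheme.

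Next I would prove $\bar{E}_{\epsilon}^{\star}(R_1,R_2)\geq E_{\epsilon}^{\star}(R_1,R_2)-[A_{X_0Z}+A_{Y_0Z}-A_Z]$. Fix $\gamma>0$ and let $(\phi_{1n},\phi_{2n},\psi_n)$ be any sequence of schemes achieving an $\epsilon$-achievable exponent $E$ for testing $H_0$ against $H_1$, so $\limsup_n\alpha_n\leq\epsilon$ and $\liminf_n\frac{1}{n}\log\frac{1}{\beta_n}\geq E$. Applying the transformation and invoking \eqref{false_alrm_bnd} and \eqref{miss_dect_bnd}, the transformed scheme satisfies $\bar{\alpha}_n\leq\alpha_n+P_{Z^n}[(\mathcal{B}_{n,\gamma}^0)^c]+P_{X_0^nZ^n}[(\mathcal{B}_{n,\gamma}^1)^c]+P_{Y_0^nZ^n}[(\mathcal{B}_{n,\gamma}^2)^c]$ and $\bar{\beta}_n\leq e^{n(A_{X_0Z}+A_{Y_0Z}-A_Z+5\gamma)}\beta_n$. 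By \eqref{aep_cond_gen} the three false-alarm correction terms tend to $0$ as $n\to\infty$ for this fixed $\gamma$, hence $\limsup_n\bar{\alpha}_n\leq\epsilon$; taking $\frac{1}{n}\log\frac{1}{\cdot}$ of the second bound gives $\liminf_n\frac{1}{n}\log\frac{1}{\bar{\beta}_n}\geq E-[A_{X_0Z}+A_{Y_0Z}-A_Z]-5\gamma$. Thus $E-[A_{X_0Z}+A_{Y_0Z}-A_Z]-5\gamma$ is $\epsilon$-achievable for testing $\bar{H}_0$ against $\bar{H}_1$; taking the supremum over $\epsilon$-achievable $E$ and then letting $\gamma\downarrow 0$ yields the inequality.

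Then I would repeat the argument with the roles of $Q$ and $\bar{Q}$ interchanged, using the mirror transformation and its bounds \eqref{false_alrm_bnd_prime} and \eqref{miss_dect_bnd_prime}. The point to record is that the supporting sets $\mathcal{B}_{n,\gamma}^0,\mathcal{B}_{n,\gamma}^1,\mathcal{B}_{n,\gamma}^2$ of \eqref{weak_law_sets} and the constants $A_Z,A_{X_0Z},A_{Y_0Z}$ are literally the same objects in both directions, so Assumption \ref{assump_3} again forces the false-alarm corrections to vanish, and one obtains $E_{\epsilon}^{\star}(R_1,R_2)\geq\bar{E}_{\epsilon}^{\star}(R_1,R_2)+[A_{X_0Z}+A_{Y_0Z}-A_Z]$. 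Combining the two inequalities gives the claimed equality.

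I expect the only delicate point to be the order of limits in the false-alarm bound: one must fix $\gamma$, send $n\to\infty$ first so that Assumption \ref{assump_3} kills the $P[(\mathcal{B})^c]$ terms and preserves $\limsup_n\bar{\alpha}_n\leq\epsilon$, and only afterwards let $\gamma\downarrow 0$ to sharpen the exponent; reversing this order would break the argument. A minor point is to observe that $[A_{X_0Z}+A_{Y_0Z}-A_Z]$ is finite (guaranteed by the finiteness hypotheses folded into Assumption \ref{assump_3} and illustrated by the examples following it), so the asserted shift of the exponent is well defined.
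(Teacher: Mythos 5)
Your proposal is correct and follows essentially the same route as the paper: the paper's argument is exactly to combine the already-derived bounds \eqref{false_alrm_bnd}, \eqref{miss_dect_bnd} for one direction and \eqref{false_alrm_bnd_prime}, \eqref{miss_dect_bnd_prime} for the reversed transformation, with Assumption \ref{assump_3} killing the $P[(\mathcal{B}_{n,\gamma}^k)^c]$ correction terms, yielding the two inequalities whose conjunction is the stated equality. Your added remarks on the unchanged rate constraints and the order of limits (fix $\gamma$, let $n\to\infty$, then $\gamma\downarrow 0$) are just the routine details the paper leaves implicit.
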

As a corollary for this result by setting $\bar{Q}_{X^n} = P_{X}^{\otimes n}$ and $\bar{Q}_{Y^n} = P_{Y}^{\otimes n}$ we have the following result which states that Theorem 5 in \cite{ahlswede1986hypothesis} is tight for a special case.
\begin{theorem}\label{theorem_acs_extended}
  Assume that the sequences of distributions $(Q_{Y^n})$ and $(Q_{X^n})$ satisfy Assumption \ref{assump_1} with $m=1$, $r=1$, $S=1$, and $P_{Y_1X_1} = P_{YX}$. In testing $P_{YX}^{\otimes n}$ against $Q_{Y^n}\times Q_{X^n}$ using one-sided compression of the sequence $x^n$, the maximum $\epsilon$-achievable error exponent is given by
  \begin{align}
    E_{\epsilon}^{\star}(R_c) = \max_{P_{U\vert  X}\colon I(U;X)\leq R_c}I(Y;U) + d^{\mathrm{yx}},\; \forall \epsilon\in [0,1),
  \end{align}
where in the first category $d^{\mathrm{yx}} = D(P_X\Vert Q_X) + D(P_Y\Vert Q_Y)$, and in the second category $d^{\mathrm{yx}} = A_X^{(11)} + A_Y^{(11)}$.  
\end{theorem}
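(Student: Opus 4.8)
The plan is to derive Theorem~\ref{theorem_acs_extended} as a corollary of Theorem~\ref{thm_codetrsfmis} by (i) collapsing the common sequence $z^n$ to a trivial one and (ii) taking the auxiliary testing problem in that theorem to be testing against \emph{plain} independence, namely $P_{YX}^{\otimes n}$ versus $P_Y^{\otimes n}\times P_X^{\otimes n}$, for which the Ahlswede--Csisz{\'a}r characterization from \cite{ahlswede1986hypothesis} already supplies both the exponent and a strong converse. The code transformation of Appendix~\ref{ap_common_seq} will then transport that characterization to the target alternative $Q_{Y^n}\times Q_{X^n}$ and, by inspection of the constants, shift the exponent by exactly $d^{\star}$.

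Concretely, I would let $\mathcal{Z}$ be a singleton, so that $z^n$ carries no information, $P_{X_0^nY_0^nZ^n}$ reduces to $P_{YX}^{\otimes n}$, Terminal~1 sees $x^n$ and Terminal~2 sees $y^n$; I set $R_1=R_c$ and $R_2=\log|\mathcal{Y}|$, so that the Terminal~2 encoder may be taken to be the identity and the two-terminal quantities $E_\epsilon^\star(R_1,R_2)$ and $\bar{E}_\epsilon^\star(R_1,R_2)$ coincide with the one-sided ones. I would take the ``first'' problem of Appendix~\ref{ap_common_seq} (alternative $Q_{Y_0^n|Z^n}\times Q_{X_0^n|Z^n}\times Q_{Z^n}$, exponent $E_\epsilon^\star$) to be the target, i.e.\ $Q_{X^n}$ and $Q_{Y^n}$ as in the statement, and the ``second'' problem ($\bar{Q}$, exponent $\bar{E}_\epsilon^\star$) to be testing against independence, i.e.\ $\bar{Q}_{X^n}=P_X^{\otimes n}$ and $\bar{Q}_{Y^n}=P_Y^{\otimes n}$. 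The absolute-continuity requirements \eqref{ab_conds} then reduce to $P_X\ll Q_X$ and $P_Y\ll Q_Y$ (the $\bar{Q}$-side being automatic), which hold in the first category because $D(P_X||Q_X),D(P_Y||Q_Y)<\infty$ and in the second category by the explicit hypotheses $P_X^{\otimes n}\ll Q_{X^n}$, $P_Y^{\otimes n}\ll Q_{Y^n}$ built into Assumption~\ref{assump_1}.

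Next I would fix the constants of Theorem~\ref{thm_codetrsfmis}. With a trivial $Z$ one takes $A_Z=0$; in the first category $A_{X_0Z}=D(P_X||Q_X)-D(P_X||P_X)=D(P_X||Q_X)$ and $A_{Y_0Z}=D(P_Y||Q_Y)$, while in the second category $A_{X_0Z}$ and $A_{Y_0Z}$ are the relative divergence rates of $P_X^{\otimes n}$ with respect to $Q_{X^n}$ and of $P_Y^{\otimes n}$ with respect to $Q_{Y^n}$, namely $A_X^{(11)}$ and $A_Y^{(11)}$ (the $\bar{Q}$-rates being $0$ since $\bar{Q}_{X^n}=P_X^{\otimes n}$, $\bar{Q}_{Y^n}=P_Y^{\otimes n}$). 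In either case $A_{X_0Z}+A_{Y_0Z}-A_Z=d^{\star}$. The AEP-type conditions \eqref{aep_cond_gen} follow from the weak law of large numbers in the first category and from \cite[Theorem~1]{barron1985strong} in the second, where $P_X^{\otimes n},P_Y^{\otimes n}$ are i.i.d.\ (hence stationary ergodic) and $Q_{X^n},Q_{Y^n},P_X^{\otimes n},P_Y^{\otimes n}$ are finite-order Markov with stationary transitions, so that the situation matches one of the two worked examples following Assumption~\ref{assump_3}. Thus Assumption~\ref{assump_3} holds and Theorem~\ref{thm_codetrsfmis} gives $E_\epsilon^\star(R_c)=\bar{E}_\epsilon^\star(R_c)+d^{\star}$ for every $\epsilon\in[0,1)$.

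Finally, since $P_Y^{\otimes n}\times P_X^{\otimes n}$ is a genuine testing-against-independence alternative, restricting $\mathcal{Y}$ to the support of $P_Y$ makes $Q(y\mid x)=P_Y(y)>0$ for all $(x,y)$, so the strong converse of \cite{ahlswede1986hypothesis} applies and $\bar{E}_\epsilon^\star(R_c)=\max_{P_{U|X}\colon I(U;X)\le R_c}I(Y;U)$ for all $\epsilon\in[0,1)$; substituting this into $E_\epsilon^\star(R_c)=\bar{E}_\epsilon^\star(R_c)+d^{\star}$ yields the claimed formula. I expect the only delicate points to be verifying that the technical hypotheses of Assumption~\ref{assump_3} hold in the Markov category with the indicated choice of $A_Z,A_{X_0Z},A_{Y_0Z}$, and the support/absolute-continuity bookkeeping needed to legitimately invoke the strong-converse form of the Ahlswede--Csisz{\'a}r characterization for the intermediate problem; the exponent shift $d^{\star}$ itself is produced automatically by the code transformation and is not the bottleneck.
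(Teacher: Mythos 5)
Your proposal is correct and follows essentially the same route as the paper: the paper presents Theorem \ref{theorem_acs_extended} precisely as a corollary of the code-transformation result (Theorem \ref{thm_codetrsfmis}) with a trivial common sequence, the auxiliary problem taken to be testing against independence $P_{YX}^{\otimes n}$ vs.\ $P_Y^{\otimes n}\times P_X^{\otimes n}$ (for which the Ahlswede--Csisz{\'a}r characterization and strong converse give $\max_{P_{U|X}\colon I(X;U)\le R_c}I(Y;U)$ for all $\epsilon$), and the shift $A_{X_0Z}+A_{Y_0Z}-A_Z=d^{\star}$ computed exactly as you do via the weak law in the iid category and \cite[Theorem 1]{barron1985strong} in the Markov category. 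Your filling-in of the Assumption \ref{assump_3} bookkeeping (absolute continuity from $D(P_X\|Q_X),D(P_Y\|Q_Y)<\infty$ or the explicit Markov-case hypotheses, and $R_2=\log|\mathcal{Y}|$ to recover the one-sided setting) matches the paper's intended argument.
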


\section{Proof of Theorem \ref{thm_1}}\label{proof_thm_1}
\subsection{A support lemma}
For the achievability proof of Theorem \ref{thm_1} we need the following support lemma.
\begin{lemma}\label{lemma_mix_sup}
Let $(\{P_{V_{\eta}}^{\otimes n}\}, \{Q_{V_{\tau}^n}\})$ be either $(\{P_{\mathcal{X},s}^{\otimes n}\}, \{Q_{X_t^n}\})$ or $(\{P_{Y_i}^{\otimes n}\}, \{Q_{Y_j^n}\})$ in Assumption \ref{assump_1}. For each $\eta$, let $d_{\eta}^{\star}$ be the corresponding $d_s^{\mathrm{x}}$ or $d_i^{\mathrm{y}}$. For each $n$, define $P_{V^n} = \sum_{s}\nu_{\eta}P_{V_{\eta}}^{\otimes n}$ where for simplicity $(\nu_{\eta})$ are fixed positive numbers and $\sum_{\eta} \nu_{\eta} = 1$. Also for each $\eta$, we assume that $V_{\eta}^n\sim P_{V_{\eta}}^{\otimes n}$. Then for an arbitrarily given $\gamma>0$ we have 
\begin{align}
\lim_{n\to\infty}\mathrm{Pr}\{\min_{\tau}\iota_{P_{V^n}\Vert Q_{V_{\tau}^n}}(V_{\eta}^n)<n(d_{\eta}^{\star}-\gamma)\} = 0.
\end{align}
\end{lemma}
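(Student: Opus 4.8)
The plan is to reduce the minimum over the mixture measure to a minimum of per-component log-likelihood ratios, to control each of the finitely many resulting terms by a law of large numbers (in the iid category) or by the relative-entropy ergodic theorem (in the Markov category), and then to conclude with a union bound over $\tau$.

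First I would use the pointwise lower bound $P_{V^n}=\sum_{\eta'}\nu_{\eta'}P_{V_{\eta'}}^{\otimes n}\geq \nu_\eta P_{V_\eta}^{\otimes n}$, which yields $\iota_{P_{V^n}||Q_{V_{\tau}^n}}(v^n)\geq \log\nu_\eta+\iota_{P_{V_\eta}^{\otimes n}||Q_{V_{\tau}^n}}(v^n)$ for every $v^n$ and every $\tau$. Taking the minimum over $\tau$ and applying a union bound over the finite index set, for $V_\eta^n\sim P_{V_\eta}^{\otimes n}$ we get
\begin{align}
  \mathrm{Pr}\Big\{\min_{\tau}\iota_{P_{V^n}||Q_{V_{\tau}^n}}(V_\eta^n)<n(d_\eta^\star-\gamma)\Big\}\leq \sum_{\tau}\mathrm{Pr}\Big\{\frac{1}{n}\iota_{P_{V_\eta}^{\otimes n}||Q_{V_{\tau}^n}}(V_\eta^n)<d_\eta^\star-\gamma-\frac{1}{n}\log\nu_\eta\Big\}.
\end{align}
Since $\nu_\eta$ is a fixed positive number, $\frac{1}{n}\log\nu_\eta\to 0$, so for all sufficiently large $n$ the threshold on the right is at most $d_\eta^\star-\gamma/2$; hence it suffices to show that for every fixed $\tau$ the probability $\mathrm{Pr}\{\frac{1}{n}\iota_{P_{V_\eta}^{\otimes n}||Q_{V_{\tau}^n}}(V_\eta^n)<d_\eta^\star-\gamma/2\}$ tends to $0$, the finiteness of the index set then closing the argument.

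To control a single term I distinguish the two categories of Assumption \ref{assump_1}. In the Markov category $P_{V_\eta}^{\otimes n}\ll Q_{V_{\tau}^n}$ is assumed for every $(\eta,\tau)$ and every $n$, and $P_{V_\eta}^{\otimes n}$ is iid (hence stationary and ergodic) while $Q_{V_{\tau}^n}$ is a finite-order Markov measure with stationary transition kernel; therefore $\frac{1}{n}\iota_{P_{V_\eta}^{\otimes n}||Q_{V_{\tau}^n}}(V_\eta^n)\to A^{(\eta\tau)}$ almost surely by \cite[Theorem 1]{barron1985strong}, and since $A^{(\eta\tau)}\geq\min_{\tau'}A^{(\eta\tau')}=d_\eta^\star>d_\eta^\star-\gamma/2$ the probability vanishes. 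In the iid category, if $D(P_{V_\eta}||Q_{V_{\tau}})<\infty$ then $\frac{1}{n}\iota_{P_{V_\eta}^{\otimes n}||Q_{V_{\tau}}^{\otimes n}}(V_\eta^n)=\frac{1}{n}\sum_{l=1}^n\iota_{P_{V_\eta}||Q_{V_{\tau}}}(V_{\eta,l})$ converges in probability to $D(P_{V_\eta}||Q_{V_{\tau}})\geq d_\eta^\star$ by the weak law of large numbers (on the finite alphabet the summands are almost surely finite with mean $D(P_{V_\eta}||Q_{V_{\tau}})$), so again the probability vanishes. If instead $D(P_{V_\eta}||Q_{V_{\tau}})=+\infty$, pick a symbol $v^\star$ with $P_{V_\eta}(v^\star)>0=Q_{V_{\tau}}(v^\star)$: whenever $v^\star$ occurs in $V_\eta^n$ we have $Q_{V_{\tau}}^{\otimes n}(V_\eta^n)=0$ while $P_{V_\eta}^{\otimes n}(V_\eta^n)>0$ almost surely, so $\iota_{P_{V_\eta}^{\otimes n}||Q_{V_{\tau}}^{\otimes n}}(V_\eta^n)=+\infty$; hence, up to a $P_{V_\eta}^{\otimes n}$-null set, the event $\{\frac{1}{n}\iota<d_\eta^\star-\gamma/2\}$ is contained in the event that $v^\star$ does not occur in $V_\eta^n$, which has probability $(1-P_{V_\eta}(v^\star))^n\to 0$.

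The argument is otherwise routine; the only step requiring some care is the iid subcase $D(P_{V_\eta}||Q_{V_{\tau}})=\infty$, where the normalized log-likelihood ratio need not converge and one must argue directly that it is $+\infty$ with probability tending to one, whereas for all remaining $\tau$ the claim is an immediate consequence of the (weak or relative-entropy) ergodic theorem together with the defining minimality of $d_\eta^\star$.
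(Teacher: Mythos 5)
Your proposal is correct and follows essentially the same route as the paper's proof: lower-bound the mixture by $\nu_\eta P_{V_\eta}^{\otimes n}$, union-bound over the finitely many $\tau$, then invoke the weak law of large numbers (finite divergence, iid case), a never-occurring-zero-probability-symbol argument (infinite divergence, iid case), and \cite[Theorem 1]{barron1985strong} (Markov case), using $d_\eta^\star=\min_\tau$ of the corresponding divergence quantities. The only differences are cosmetic: you use a single symbol $v^\star$ where the paper uses the full set $\mathcal{B}_{\eta\tau}$ of such symbols, and you absorb the $\tfrac{1}{n}\log\nu_\eta$ shift into $\gamma/2$ rather than carrying it explicitly, while the paper additionally spells out the $\iota$-conventions at points where $Q_{V_\tau^n}$ vanishes, which your pointwise inequality also respects.
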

\begin{proof}
It can be seen that
\begin{align}
 & \mathrm{Pr}\{\min_{\tau}\iota_{P_{V^n}\Vert Q_{V_{\tau}^n}}(V_{\eta}^n)\leq  n(d_{\eta}^{\star}-\gamma)\}\nonumber\\
  &\leq \sum_{\tau}\mathrm{Pr}\{\iota_{P_{V^n}\Vert Q_{V_{\tau}^n}}(V_{\eta}^n)\leq  n(d_{\eta}^{\star}-\gamma)\}\nonumber\\
  &\leq\sum_{\tau}\mathrm{Pr}\{\iota_{P_{V_{\eta}}^{\otimes n}\Vert Q_{V_{\tau}^n}}(V_{\eta}^n)\leq  n(d_{\eta}^{\star}-\gamma-\log\nu_{\eta}/n)\}.
\end{align}
If $\{Q_{V_{\tau}^n}\}$ belongs to the second category, we always have $P_{V_{\eta}}^{\otimes n}\ll Q_{V_{\tau}^n}$ for all $(\eta,\tau)$. This leads to $P_{V^n}\ll Q_{V_{\tau}^n}$ for all $\tau$. The second inequality then follows since $P_{V^n}(v^n)\geq \nu_{\eta}P_{V_{\eta}}^{\otimes n}(v^n)$ for all $v^n$ holds.

Assume now that $Q_{V_{\tau}^n}$ lies in the first category. When $P_{V^n}\ll Q_{V_{\tau}^n}$ holds, i.e., for all $\eta$ we have $P_{V_{\bar{\eta}}}^{\otimes n}\ll Q_{V_{\tau}^n}$, then the set $\{v^n\mid Q_{V_{\tau}^n}(v^n)=0\}$ can be omitted since it has zero probability. If $P_{V^n}\nll Q_{V_{\tau}^n}$, for example when there exists a $\bar{\eta}$ satisfying $P_{V_{\bar{\eta}}}^{\otimes n}\nll Q_{V_{\tau}^n}$, then for all $v^n$ such that $Q_{V_{\tau}^n}(v^n)=0$ we have $\iota_{P_{V^n}\Vert Q_{V_{\tau}^n}}(v^n)=+\infty$ which in turn violates the inequality $\iota_{P_{V^n}\Vert Q_{V_{\tau}^n}}(v^n)\leq  n(d_{\eta}^{\star}-\gamma)$. Therefore in both cases we only need to consider the set $\{v^n\mid Q_{V_{\tau}^n}(v^n)>0\}$, in which the second inequality follows from the definitions of $\iota_{P_{V^n}\Vert Q_{V_{\tau}^n}}$, and $\iota_{P_{V_{\eta}}^{\otimes n}\Vert Q_{V_{\tau}^n}}$ as well as the fact that $P_{V^n}(v^n)\geq \nu_{\eta}P_{V_{\eta}}^{\otimes n}(v^n)$ for all $v^n$ holds. For notation simplicity, we define $E_{\eta,n} = d_{\eta}^{\star}-\gamma-\log\nu_{\eta}/n$. We examine two cases in Assumption \ref{assump_1} separately in the following.
\begin{itemize}
\item Consider the first case in which $\{Q_{V_{\tau}^n}\}$ is the set of product distributions on a common alphabet $\mathcal{V}$.
To avoid dealing with cumbersome extended real number operations, we define a set $\mathcal{A}_{\eta,n} = \{v^n\mid v^n\in\mathcal{V}^n,\; P_{V_{\eta}}^{\otimes n}(v^n)>0\}$. We further have\footnote{Without considering $\mathcal{A}_{\eta,n}$, we might encounter an indeterminate form $\dots+\infty+\dots-\infty+\dots$ in the third expression.}
\begin{align}
\mathrm{Pr}\big\{&\iota_{P_{V_{\eta}}^{\otimes n}\Vert Q_{V_{\tau}}^{\otimes n}}(V_{\eta}^n)<nE_{\eta,n}\big\}\nonumber\\
& = \mathrm{Pr}\big\{\iota_{P_{V_{\eta}}^{\otimes n}\Vert Q_{V_{\tau}}^{\otimes n}}(V_{\eta}^n)<nE_{\eta,n}, V_{\eta}^n\in\mathcal{A}_{\eta,n}\big\}\nonumber\\
& = \mathrm{Pr}\big\{\sum_{l}\iota_{P_{V_{\eta l}}\Vert Q_{V_{\tau l}}}(V_{\eta l})<nE_{\eta,n}, V_{\eta}^n\in\mathcal{A}_{\eta,n}\big\}.\label{last_lem_1}
\end{align}

If $D(P_{V_{\eta}}\Vert Q_{V_{\tau}})<\infty$ then by the weak law of large numbers \eqref{last_lem_1} goes to 0 as $d_{\eta}^{\star}\leq D(P_{V_{\eta}}\Vert Q_{V_{\tau}})$. When $D(P_{V_{\eta}}\Vert Q_{V_{\tau}})=+\infty$, let $\mathcal{B}_{\eta\tau}$ be the largest subset of $\mathcal{V}$ such that $P_{V_{\eta}}(v)>0$ and $Q_{V_{\tau}}(v)=0$ for all $v\in\mathcal{B}_{\eta\tau}$. By our assumption we have $P_{V_{\eta}}(\mathcal{B}_{\eta\tau})>0$. And if $v\in\mathcal{B}_{\eta\tau}$ then $\iota_{P_{V_{\eta}}\Vert Q_{V_{\tau}}}(v)=+\infty$ holds. Therefore we have
\begin{align}
&\{v^n\mid \sum_{l}\iota_{P_{V_{\eta l}}\Vert Q_{V_{\tau l}}}(v_l)<nE_{\eta,n},\;v^n\in\mathcal{A}_{\eta,n}\}\nonumber\\
&\subseteq \{v^n\mid v_l\in\mathcal{B}_{\eta\tau}^c,\;\forall l \in [1:n],\;v^n\in\mathcal{A}_{\eta,n}\}.
\end{align}
Therefore 
\begin{align}
\mathrm{Pr}&\big\{\sum_{l}\iota_{P_{V_{\eta l}}\Vert Q_{V_{\tau l}}}(V_{\eta l})<nE_{\eta,n},V_{\eta}^n\in \mathcal{A}_{\eta,n}\big\}\nonumber\\
&\leq (1-P_{V_{\eta}}(\mathcal{B}_{\eta\tau}))^n\to 0,\;\text{as}\; n\to\infty.
\end{align}
\item Next consider the case in which $\{Q_{V_{\eta}^n}\}$ is the set of finite order Markov processes satisfying Assumption \ref{assump_1}. By \cite[Theorem 1]{barron1985strong} we have with probability 1
\begin{align}
\frac{1}{n}\iota_{P_{V_{\eta}}^{\otimes n}\Vert Q_{V_{\tau}^n}}(V_{\eta}^n)\to A^{\eta\tau},\;\forall \eta,\tau,
\end{align}
where $A^{\eta\tau}$ is either $A_X^{(st)}$ or $A_Y^{(ij)}$. This holds even when $A^{\eta\tau}=+\infty$. Since, by definition $d_{\eta}^{\star} = \min_{\tau}A^{\eta\tau}<+\infty$, the conclusion also holds in this case.
\end{itemize}
\end{proof}
\subsection{Existence of a testing scheme}
Let $t_{\min} = \min_{s,\bar{s}\in [1:\vert \mathcal{P}_{\mathcal{X}}\vert ]} d_{TV}(P_{\mathcal{X},s},P_{\mathcal{X},{\bar{s}}})$ be the minimum (total variation) distance between any two probability distributions in the set of marginal distributions on $\mathcal{X}$, $\mathcal{P}_{\mathcal{X}}$. Denote by $P_{x^n}$ the type of a sequence $x^n$. We define a mapping $T\colon\mathcal{X}^n\to[1:\vert \mathcal{P}_{\mathcal{X}}\vert ]\cup \{e\}$ as follows. We look for a unique $s\in[1:\vert \mathcal{P}_{\mathcal{X}}\vert ]$ such that $d_{TV}(P_{x^n},P_{\mathcal{X},s})<t_{\min}/2$. If such an $s$ exists, we set $T(x^n)=s$. Otherwise we set $T(x^n)=e$.\\
For a given $R_c>0$ and for each $s\in [1:\vert \mathcal{P}_{\mathcal{X}}\vert ]$, let $P_{U_s\vert  \bar{X}_s}$ be a probability kernel which achieves $\theta_s(R_c)$. Define
\begin{align}
  P_{Y^nX^nU^n} &= \sum_{i=1}^m\nu_iP_{Y_iX_iU_i}^{\otimes n}, \;\text{where}\; \nu_i>0, \forall i\in [1:m],\;\sum_{i=1}^m\nu_i = 1,\nonumber\\
  &\text{and}\; P_{U_i\vert  X_i} = P_{U_s\vert  \bar{X}_s},\;\forall i\in\mathfrak{F}_s,\; s\in [1:\vert \mathcal{P}_{\mathcal{X}}\vert ].
\end{align}
Let $P_{Y^nU^n}$ be the corresponding marginal joint distribution on $\mathcal{Y}^n\times\mathcal{U}^n$. For notation simplicity we define the following score function\footnote{By our assumption $\min_{j\in [1:k]}\iota_{P_{Y^n}\Vert Q_{Y_j^n}}(y^n)< +\infty$, as if $P_{Y^n}(y^n)>0$ then there exists an $j$ such that $Q_{Y_j^n}(y^n)>0$, therefore $\zeta$ is well-defined.}
\begin{align}
\zeta(y^n,u^n) = \iota_{P_{Y^nU^n}}(y^n;u^n) + \min_{j\in[1:k]}\iota_{P_{Y^n}\Vert Q_{Y_j^n}}(y^n).
\end{align}
For each $i\in [1:m]$, and $n$, let $(Y_i^n,X_i^n)$ be a tuple of random variables such that $(Y_i^n,X_i^n)\sim P_{Y_iX_i}^{\otimes n}$. Let $\gamma>0$ be arbitrary but given. For each $s\in [1:\vert \mathcal{P}_{\mathcal{X}}\vert ]$ draw $M$ codewords $\{u_{s}^n(j)\}_{j=1}^{M}$, $M=e^{n(R_c+2\gamma)}$, from the marginal distribution $P_{U_s}^{\otimes n}$ of $(P_{U_s\vert  \bar{X}_s}\times P_{\mathcal{X},s})^{\otimes n}$. Given $x^n$, let $\hat{s}=T(x^n)$ be an estimate of the index of the marginal distribution of $x^n$. Assume that $\hat{s}\neq e$. We proceed with the compression process if
\begin{align}
  \min_{t}\iota_{P_{X^n}\Vert Q_{X_t^n}}(x^n)>  n(d_{\hat{s}}^{\mathrm{x}}-\gamma), \label{x_compress_cond}
\end{align}
where $P_{X}^n = \sum_{i}\nu_i P_{X_i}^{\otimes n}$ is the marginal on $\mathcal{X}^n$ of $P_{Y^nX^nU^n}$. Otherwise we send a special index $j_{\hat{s}}^{\star} = e^{\star}$.

When \eqref{x_compress_cond} is fulfilled, we then select a transmission index $j\in[1:M]$ to be $j^{\star}_{\hat{s}} = \arg\min_{j\in [1:M]}\pi_{{\hat{s}}}(x^n;u_{\hat{s}}^n(j))$ where for all $s\in [1:\vert \mathcal{P}_{\mathcal{X}}\vert ]$
\begin{equation}
  \pi_{s}(x^n;u^n) = \max_{i\in\mathfrak{F}_s}\mathrm{Pr}\{\zeta(Y_i^n,u^n)\leq n(E-d_s^{\mathrm{x}})\vert  X_i^n=x^n\}. \label{pi_funcs}
\end{equation}
If $\hat{s}=e$, we set $j^{\star}_{\hat{s}}=1$. The pair $(\hat{s},j^{\star}_{\hat{s}})$ is provided to the decision center.
If $\hat{s}\neq e$, and $j^{\star}_{\hat{s}}\neq e^{\star}$ hold we declare $H_0$ is the underlying distribution if
\begin{align}
  \zeta(y^n,u_{\hat{s}}^n(j^{\star}_{\hat{s}}))>n(E-d_{\hat{s}}^{\mathrm{x}}).\label{decision_thm_1}
\end{align}
Otherwise we declare that $H_1$ is true. 
Given a codebook realization, let $\hat{S}$ and $J^{\star}_{\hat{S}}$ be induced random variables under the null hypothesis.
 For each $i\in [1:m]$, $i\in\mathfrak{F}_s$, the corresponding false alarm probability is given by
\begin{align}
  \alpha_n^{(i)} &= \mathrm{Pr}\{[\zeta(Y_i^n,u_{\hat{S}}^n(J^{\star}_{\hat{S}}))\leq n(E-d_{\hat{S}}^{\mathrm{x}}),\; \text{and}\; \hat{S}\neq e,\; \text{and}\; J^{\star}_{\hat{S}} \neq e^{\star} ],\nonumber\\
  &\hspace{2cm}\text{or}\;[\hat{S}=e],\;\text{or}\; [J^{\star}_{\hat{S}} = e^{\star}]\}\nonumber\\
                 &\leq \underbrace{\mathrm{Pr}\{\zeta(Y_i^n,u_{s}^n(J^{\star}_s))\leq n(E-d_s^{\mathrm{x}})\}}_{\tau_i^1} + \mathrm{Pr}\{T(X_i^n)\neq s\}\nonumber\\
                 &\qquad +\underbrace{\mathrm{Pr}\{\min_{t}\iota_{P_{X^n}\Vert Q_{X_t^n}}(X_i^n)<n(d_s^{\mathrm{x}}-\gamma)\}}_{\tau_s^3}.
\end{align}
For notation simplicity we define for each $s$, $E_s = E-d_s^{\mathrm{x}}$. The first term can be upper bounded further as
\begin{align}
  \tau_i^1&= \int \mathrm{Pr}\{\zeta(Y_i^n,u^n))\leq nE_s\vert  X_i^n=x^n\}dP_{X_i^nu_{s}^n(J^{\star}_s)}(x^n,u^n)\nonumber\\
  &\leq \int \max_{i^{\prime}\in\mathfrak{F}_s}\mathrm{Pr}\{\zeta(Y_{i^{\prime}}^n,u^n))\leq nE_s\vert  X_{i^{\prime}}^n=x^n\}dP_{X_i^nu_{s}^n(J^{\star}_s)}(x^n,u^n)\nonumber\\
  &\stackrel{\eqref{pi_funcs}}{=} \mathrm{E}[\pi_s(X_i^n;u_{s}^n(J^{\star}_s))]= \int_{[0,1]}\mathrm{Pr}\{\pi_s(X_i^n;u_{s}^n(J^{\star}_s))>t\}dt.
\end{align}
Let $(\bar{X}_s^n,\bar{U}_s^n)$ be another tuple of generic random variables that is independent of the random codebook $(U_s^n(j))$ such that $(\bar{X}_s^n,\bar{U}_s^n)\sim (P_{\mathcal{X},s}\times P_{U_s\vert  \bar{X}_s})^{\otimes n}$. Furthermore for the given codebook realization let $\bar{J}^{\star}_s$ be the random message induced by $\bar{X}_s^n$ through the encoding process.
Then we have
\begin{align}
  \alpha_n = \max_{i\in [1:m]}\alpha_n^{(i)}\leq &\sum_{s}\int_{[0,1]}\mathrm{Pr}\{\pi_s(\bar{X}_s^n;u_{s}^n(\bar{J}^{\star}_s))>t\}dt\nonumber\\
  & + \sum_{s}\underbrace{\mathrm{Pr}\{T(\bar{X}_s^n)\neq s\}}_{\tau_s^2} + \sum_s\tau_s^3.
\end{align}  
Averaging over all codebooks we obtain by Fubini's theorem
\begin{align}
  \mathbb{E}[\alpha_n]\leq \sum_{s}\int_{[0,1]}\mathrm{Pr}\{\pi_s(\bar{X}_s^n;U_{s}^n(\bar{J}^{\star}_s))>t\}dt + \sum_{s}(\tau_s^2 + \tau_s^3).
\end{align}
 By the non-asymptotic covering lemma in \cite[Lemma 5]{verdu2012non} we have
\begin{align}
  \mathrm{Pr}\{\pi_s(\bar{X}_s^n;U_{s}^n(\bar{J}^{\star}_s))>t\} &\leq \mathrm{Pr}\{\pi_s(\bar{X}_s^n;\bar{U}_s^n)>t\} + e^{-\exp(n\gamma)}\nonumber\\
  &+ \underbrace{\mathrm{Pr}\{\iota_{P_{\bar{X}_s^n\bar{U}_s^n}}(\bar{X}_s^n;\bar{U}_s^n)>n(R_c+\gamma)\}}_{\tau_s^4}.
\end{align}
This result implies the following chain of expressions
\begin{align}
  \mathbb{E}[\alpha_n]&\leq \sum_{s}\int_{[0,1]}\mathrm{Pr}\{\pi_s(\bar{X}_s^n;\bar{U}_s^n)>t\}dt + S e^{-\exp(n\gamma)}+ \sum_{s}(\tau_s^2 + \tau_s^3+\tau_s^4)\nonumber\\
  &= \sum_{s}\mathbb{E}[\pi_s(\bar{X}_s^n;\bar{U}_s^n)] + S e^{-\exp(n\gamma)}+ \sum_{s}(\tau_s^2 + \tau_s^3+\tau_s^4).
\end{align}
For each $i\in\mathfrak{F}_s$ let $\bar{Y}_i^n$ be a tuple of random variable such that $\bar{Y}_i^n-\bar{X}_s^n-\bar{U}_s^n$ and $P_{\bar{Y}_i^n\vert  \bar{X}_s^n} = P_{Y_i\vert  X_i}^{\otimes n}$ hold. From the definition of $\pi_s(\cdot;\cdot)$ we also have
\begin{align}
  \mathbb{E}[\pi_s(\bar{X}_s^n;\bar{U}_s^n)]\leq\sum_{i\in\mathfrak{F}_s} \mathrm{Pr}\{\zeta(\bar{Y}_i^n,\bar{U}_s^n)\leq nE_s\}.
\end{align}
In summary there exists a codebook realization, hence a mapping $\phi_n\colon \mathcal{X}^n\to \mathcal{M}\triangleq([1:\vert \mathcal{P}_{\mathcal{X}}\vert ]\cup\{e\})\times([1:M]\cup\{e^{\star}\})$, such that
\begin{align}
  \alpha_n\leq&\sum_s\sum_{i\in\mathfrak{F}_s} \mathrm{Pr}\{\zeta(\bar{Y}_i^n,\bar{U}_s^n)\leq nE_s\} \nonumber\\
  &+ Se^{-\exp(n\gamma)}+ \sum_{s}(\tau_s^2+\tau_s^3+\tau_s^4),
\end{align}
holds.
\subsection{Bounding error probabilities}
 Due to Lemma \ref{lemma_mix_sup}, $\tau_s^3$ goes to 0 as $n\to\infty$. By the weak law of large numbers the terms $\tau_s^2$ and $\tau_s^4$ go to $0$ as $n\to\infty$. We focus now on the first term. We observe that
\begin{align}
  \mathrm{Pr}\{\zeta(\bar{Y}_i^n,\bar{U}_s^n)\leq nE_s\}&\leq \mathrm{Pr}\{\iota_{P_{Y^nU^n}}(\bar{Y}_i^n;\bar{U}_s^n)\leq n(E_s-d_i^{\mathrm{y}}+\gamma)\}\nonumber\\
  &+ \mathrm{Pr}\{\min_{j\in[1:k]}\iota_{P_{Y^n}\Vert Q_{Y_j^n}}(\bar{Y}_i^n)<n(d_i^{\mathrm{y}}-\gamma)\}.
\end{align}
Similarly, the last term goes to 0 due to Lemma \ref{lemma_mix_sup}. In the next step we perform several change of measure steps from the general distribution $P_{Y^nU^n}$ to our distributions of interest $P_{Y_iU_i}^{\otimes n}$, $i\in [1:m]$.
For that purpose, for each $i\in \mathfrak{F}_s$, we define the following sets 
\begin{align}
 \mathcal{A}_{is} &= \{(y^n,u^n)\mid P_{\bar{Y}_i^n\bar{U}_s^n}(y^n,u^n)>0\},\nonumber\\
 \mathcal{B}_i &= \{y^n\mid P_{Y^n}(y^n)\leq e^{n\gamma_n}P_{\bar{Y}_i}^{\otimes n}(y^n)\},\nonumber\\
 \mathcal{C}_s & = \{u^n\mid P_{U^n}(u^n)\leq e^{n\gamma_n}P_{\bar{U}_s}^{\otimes n}(u^n)\},
\end{align}
 where $\gamma_n\to 0$ and $n\gamma_n\to\infty$ as $n\to\infty$. We have
\begin{align}
P_{\bar{Y}_i\bar{U}_s}^{\otimes n}[(\mathcal{B}_i\times \mathcal{U}^n)^c\cap\mathcal{A}_{is}]\leq P_{\bar{Y}_i}^{\otimes n}[(\mathcal{B}_i)^c]\leq e^{-n\gamma_n},\nonumber\\
P_{\bar{Y}_i\bar{U}_s}^{\otimes n}[(\mathcal{Y}^n\times \mathcal{C}_s)^c\cap\mathcal{A}_{is}]\leq P_{\bar{U}_s}^{\otimes n}[(\mathcal{C}_s)^c]\leq e^{-n\gamma_n}.
\end{align}
For $(y^n,u^n)\in\mathcal{A}_{is}\cap(\mathcal{B}_i\times \mathcal{U}^n)\cap(\mathcal{Y}^n\times \mathcal{C}_s)$ as $P_{Y^nU^n}(y^n,u^n)\geq \nu_iP_{\bar{Y}_i\bar{U}_s}^{\otimes n}(y^n,u^n)$ holds by the definition of $P_{Y^nU^n}$, we further have
\begin{equation}
\iota_{P_{Y^nU^n}}(y^n;u^n)\geq \log\nu_i + \iota_{P_{\bar{Y}_i^n\bar{U}_s^n}}(y^n;u^n)-2n\gamma_n.
\end{equation}
This leads to
\begin{align}
\mathrm{Pr}&\{\iota_{P_{Y^nU^n}}(\bar{Y}_i^n;\bar{U}_s^n)\leq n(E_s-d_i^{\mathrm{y}}+\gamma)\}\nonumber\\
& = \mathrm{Pr}\{\iota_{P_{Y^nU^n}}(\bar{Y}_i^n;\bar{U}_s^n)\leq n(E-d_{is}^{\mathrm{yx}}+\gamma),\;(\bar{Y}_i^n,\bar{U}_s^n)\in\mathcal{A}_{is}\} \nonumber\\
&\leq \mathrm{Pr}\{\iota_{P_{Y^nU^n}}(\bar{Y}_i^n;\bar{U}_s^n)\leq n(E-d_{is}^{\mathrm{yx}}+\gamma),\;\nonumber\\
&\hspace{2cm}(\bar{Y}_i^n,\bar{U}_s^n)\in\mathcal{A}_{is}\cap(\mathcal{B}_i\times \mathcal{U}^n)\cap(\mathcal{Y}^n\times \mathcal{C}_s)\} + 2e^{-n\gamma_n}\nonumber\\
&\leq \mathrm{Pr}\big\{\iota_{P_{\bar{Y}_i^n\bar{U}_s^n}}(\bar{Y}_i^n;\bar{U}_s^n)\leq n(E-d_{is}^{\mathrm{yx}}-\log\nu_i/n+2\gamma_n+\gamma)\big\} + 2e^{-n\gamma_n}.\label{two_terms}
\end{align}
For an arbitrary $\gamma>0$ select $E = \min_s\theta_s(R_c)-2\gamma$ which implies that $E-d_{is}^{\mathrm{yx}}+\gamma<I(\bar{Y}_i;\bar{U}_s)$ holds. By the weak law of large numbers we obtain
\begin{align}
\lim_{n\to\infty}  \mathrm{Pr}\big\{\iota_{P_{\bar{Y}_i^n\bar{U}_s^n}}(\bar{Y}_i^n;\bar{U}_s^n)&\leq n(E-d_{is}^{\mathrm{yx}}-\log\nu_i/n+2\gamma_n+\gamma)\big\} = 0,\nonumber\\
&\forall i\in [1:m], i\in\mathfrak{F}_s.
\end{align}
\noindent Therefore we have
\begin{align}
  \mathrm{Pr}\{\zeta(\bar{Y}_i^n,\bar{U}_s^n)\leq nE_s\}&\to 0,\;\forall s, \;i\in \mathfrak{F}_s,\; \text{as}\; n\to\infty,\nonumber\\
  \Rightarrow \alpha_n&\to 0,\;\text{as}\; n\to\infty.
\end{align}
Define the following sets
\begin{align}
  \mathcal{G} = \{(y^n,\phi_n(x^n))\mid &\zeta(y^n,u^n(\phi_n(x^n)))> n(E-d_{\hat{s}}^{\mathrm{x}}),\nonumber\\
  &\phi_n(x^n)\notin \{(e,1),(1,e^{\star}),\dots,(S,e^{\star})\}\}
\end{align}
and
\begin{align}
  \mathcal{H}_j = \{y^n\mid Q_{Y_j^n}(y^n)>0\},\; j\in [1:k].
\end{align}
$\mathcal{G}$ is our decision region described in \eqref{decision_thm_1}. Using $\mathcal{G}$ and $\{\mathcal{H}_j\}$ we perform in the following change of measure steps from $Q_{Y_j^n}$ to $P_{Y^n\vert  U^n}$ and from $Q_{\phi_n(X_t^n)}$ to $P_{\phi_n(X^n)}$ in the calculation of the miss detection probability. For each $j\in [1:k]$ and $(y^n,\phi_n(x^n))\in\mathcal{G}\cap (\mathcal{H}_j\times \mathcal{M})$ we have
\begin{align}
&\zeta(y^n,u^n(\phi_n(x^n)))> n(E-d_{\hat{s}}^{\mathrm{x}})\nonumber\\
\Rightarrow &\iota_{P_{Y^nU^n}}(y^n,u^n(\phi_n(x^n)) + \iota_{P_{Y^n}\Vert Q_{Y_j^n}}(y^n)>n(E-d_{\hat{s}}^{\mathrm{x}})\nonumber\\
  \Rightarrow &\log\frac{P_{Y^n\vert  U^n}(y^n\vert  u^n(\phi_n(x^n)))}{Q_{Y_j^n}(y^n)}>n(E-d_{\hat{s}}^{\mathrm{x}}),\nonumber\\
  \Rightarrow &Q_{Y_j^n}(y^n)\leq e^{-n(E-d_{\hat{s}}^{\mathrm{x}})}P_{Y^n\vert  U^n}(y^n\vert  u^n(\phi_n(x^n))),
\end{align}
where the second last expression follows since $(y^n,\phi_n(x^n))\in\mathcal{G}\cap(\mathcal{H}_j\times \mathcal{M})$ implies that $P_{Y^n}(y^n)>0$, otherwise in the previous line the first term is $0$ while the second term is $-\infty$, and due to our coding arguments $P_{U^n}(u^n(\phi_n(x^n)))>0$. Furthermore for $(y^n,\phi_n(x^n))\in\mathcal{G}$, as $\phi_n(x^n)\notin \{(1,e^{\star}),\dots,(\vert \mathcal{P}_{\mathcal{X}}\vert,e^{\star})\}$ holds, we also have
\begin{align}
  Q_{\phi_n(X_t^n)}(\phi_n(x^n))\leq P_{\phi_n(X^n)}(\phi_n(x^n))e^{-n(d_{\hat{s}}^{\mathrm{x}}-\gamma)},\;\forall t\in [1:r].
\end{align}
Therefore, for each $j\in [1:k]$ and $t\in [1:r]$ the probability of miss detection is bounded by
\begin{align}
  \beta_n^{(jt)} &= Q_{Y_j^n}\times Q_{\phi_n(X_t^n)}(\mathcal{G})= Q_{Y_j^n}\times Q_{\phi_n(X_t^n)}(\mathcal{G}\cap(\mathcal{H}_j\times \mathcal{M}))\nonumber\\
                 &\leq e^{-n(d_{\hat{s}}^{\mathrm{x}}-\gamma)}e^{-n(E-d_{\hat{s}}^{\mathrm{x}})}\nonumber\\
                 &\times\sum_{(y^n,\phi_n(x^n))\in \mathcal{G}\cap(\mathcal{H}_j\times \mathcal{M})}P_{\phi_n(X^n)}(\phi_n(x^n))P_{Y^n\vert  U^n}(y^n\vert  u^n(\phi_n(x^n)))\nonumber\\
  &\leq e^{-n(E-\gamma)}.
\end{align}
This implies that we have
\begin{align}
  \beta_n = \max_{j\in[1:k],t\in[1:r]}\beta_n^{(jt)}\leq e^{-n(E-\gamma)}.
\end{align}
 Therefore, the chosen sequence of $\phi_n$ satisfies
\begin{align}
  \lim_{n\to\infty}\alpha_n = 0,\;\liminf_{n\to\infty}\frac{1}{n}\log\frac{1}{\beta_n}\geq E-\gamma.
\end{align}
Combining with the definition of intersected sets $\mathcal{I}_n^{(i)}(E)$ this further implies that for all $i\in [1:m]$ we have
\begin{align}
  \lim_{n\to\infty}P_{Y_i^n\phi_n(X_i^n)}[\mathcal{I}_{n}^{(i)}(E-2\gamma)^c]\leq \lim_{n\to\infty}(\alpha_n^{(i)} + e^{n(E-2\gamma)}\sum_{j,t}\beta^{(jt)}_n) = 0.
\end{align}

\section{Proof of Theorem \ref{thm_2}}\label{proof_thm_2}
\noindent For an arbitrarily given $\gamma>0$, define the following typical sets
\begin{align}
  \mathcal{B}_{n,\gamma}^{(i)} &= \{y^n\mid \vert  \iota_{P_{Y_i}^{\otimes n}\Vert Q_{Y_{j_i^{\star}}^n}}(y^n)/n - d_i^{\mathrm{y}}\vert  <\gamma\},\; i\in [1:m],\nonumber\\
  \mathcal{B}_{n,\gamma}^s &=  \{x^n\mid \vert  \iota_{P_{\mathcal{X},s}^{\otimes n}\Vert Q_{X_{t_s^{\star}}}^n}(x^n)/n - d_s^{\mathrm{x}}\vert  <\gamma\},\;s\in [1:\vert \mathcal{P}_{\mathcal{X}}\vert ].\label{bigB_def}
\end{align}
We have
\begin{align}
  \lim_{n\to\infty}&P_{\mathcal{X},s}^{\otimes n}(\mathcal{B}_{n,\gamma}^s) = 1,\;\forall s\in [1:\vert \mathcal{P}_{\mathcal{X}}\vert ],\nonumber\\
  \lim_{n\to\infty}&P_{Y_i}^{\otimes n}(\mathcal{B}_{n,\gamma}^{(i)}) = 1,\;\forall i\in [1:m].
\end{align}
either due to the weak law of large numbers or due to \cite[Theorem 1]{barron1985strong}.
For simplicity we first consider the case that there is a single marginal distribution in the set $\mathcal{P}_{\mathcal{X}}$. In this case we simply write $P_{\mathcal{X},1}$ as $P_X$, $t_1^{\star}$ as $t^{\star}$, $\mathcal{B}_{n,\gamma}^1$ as $\mathcal{B}_{n,\gamma}$, and $\theta_s(R_c)$ as $\theta(R_c)$. Furthermore, for notation compactness we define the following quantities in this case
\begin{align}
  \bar{d}_i^{\star} = d_i^{\mathrm{y}} + d_1^{\mathrm{x}},\;\forall i\in [1:m].
\end{align}
To support to analysis we define $\bar{P} = \prod P_{Y_i\vert  X_i}\times P_{X}$. Given an arbitrary joint distribution $P_{(Y_i)_{i=1}^mX}$, consider the following region
\begin{align}
\mathcal{R} = \{(R_c,E)\mid &E\leq \min_{i\in [1:m]}[I(Y_i;U) + \bar{d}_i^{\star}],\nonumber\\
&R_c\geq I(X;U),\; U-X-(Y_i)_{i=1}^m\}.\label{region_R}
\end{align}
It can be seen that $\mathcal{R}$ only depends on marginal distributions $\{P_{Y_iX}\}_{i\in [1:m]}$ but not on the joint distribution $P_{(Y_i)_{i=1}^mX}$. Without the loss of generality we assume that in the evaluation of $\mathcal{R}$, $P_{(Y_i)_{i=1}^mX} = \bar{P}$. In the following we use the hyperplane characterization of $\mathcal{R}$. For that purpose, we first show the following result.
\begin{lemma}\label{lemma_1}
$\mathcal{R}$ is a closed, convex set. Furthermore, $\theta(R_c)$ is a concave function.
\end{lemma}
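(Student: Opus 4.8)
The plan is to derive convexity of $\mathcal{R}$ from a time-sharing construction, closedness from a support-lemma-plus-compactness argument, and then obtain concavity of $\theta$ from the identity $\theta(R_c)=\max\{E\colon(R_c,E)\in\mathcal{R}\}$ together with convexity. For convexity I would fix two points $(R_c^{(0)},E^{(0)}),(R_c^{(1)},E^{(1)})\in\mathcal{R}$ realized by kernels $P^{(0)}_{U|X},P^{(1)}_{U|X}$ acting (after relabelling) into a common alphabet, introduce a $\{0,1\}$-valued $T$ with $\mathrm{Pr}\{T=1\}=\lambda$ independent of $X$, and conditionally on $T=t$ draw $U$ via $P^{(t)}_{U|X}$ while always drawing $(Y_i)_{i=1}^m$ via $\prod_iP_{Y_i|X}$; set $\tilde U=(U,T)$. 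Then $\tilde U-X-(Y_i)_{i=1}^m$ still holds, and $T\perp X$ gives $I(X;T)=I(Y_i;T)=0$, so $I(X;\tilde U)$ and each $I(Y_i;\tilde U)$ are the $\lambda$-convex combinations of the corresponding quantities for $U^{(0)}$ and $U^{(1)}$. Since $\min_i$ of a convex combination dominates the convex combination of $\min_i$'s, $\bigl((1-\lambda)R_c^{(0)}+\lambda R_c^{(1)},\,(1-\lambda)E^{(0)}+\lambda E^{(1)}\bigr)\in\mathcal{R}$.

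For closedness I would first invoke the support lemma \cite{csiszar2011information} to note that $\mathcal{R}$ is unchanged when the auxiliary alphabet is restricted to a fixed finite cardinality (e.g. $|\mathcal{U}|\le|\mathcal{X}|+1$). On that fixed alphabet the set of kernels $P_{U|X}$ is compact and the maps $P_{U|X}\mapsto I(X;U)$ and $P_{U|X}\mapsto I(Y_i;U)$ (with each $P_{Y_i|X}$ fixed) are continuous, so $P_{U|X}\mapsto\bigl(I(X;U),\min_i[I(Y_i;U)+\bar d_i^{\star}]\bigr)$ has compact image $K\subset\mathbb{R}^2$, and $\mathcal{R}=K+\bigl([0,\infty)\times(-\infty,0]\bigr)$ is a Minkowski sum of a compact set and a closed set, hence closed. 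Equivalently, a convergent sequence $(R_c^{(n)},E^{(n)})\to(R_c,E)$ in $\mathcal{R}$ admits witnesses on the fixed alphabet with a convergent subsequence $P^{(n_k)}_{U|X}\to P_{U|X}$, and continuity yields $I(X;U)\le R_c$ and $\min_i[I(Y_i;U)+\bar d_i^{\star}]\ge E$, i.e. $(R_c,E)\in\mathcal{R}$.

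Finally, $\theta(R_c)=\max\{E\colon(R_c,E)\in\mathcal{R}\}$, the maximum being attained because the section is closed and bounded above by $\max_i[H(Y_i)+\bar d_i^{\star}]$, which is finite under Assumption~\ref{assump_1}. Hence for $R_c^{(0)},R_c^{(1)}\ge 0$ and $\lambda\in[0,1]$ the points $(R_c^{(j)},\theta(R_c^{(j)}))$ lie in $\mathcal{R}$, convexity places their $\lambda$-combination in $\mathcal{R}$, and therefore $\theta\bigl((1-\lambda)R_c^{(0)}+\lambda R_c^{(1)}\bigr)\ge(1-\lambda)\theta(R_c^{(0)})+\lambda\theta(R_c^{(1)})$, so $\theta$ is concave. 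I expect the only non-routine point to be closedness — specifically the reduction to a uniformly bounded auxiliary alphabet so that compactness and continuity of the mutual-information functionals can be applied — while the time-sharing step and the deduction of concavity from convexity of $\mathcal{R}$ are standard.
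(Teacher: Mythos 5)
Your proof is correct and essentially matches the paper's: the binary time-sharing variable independent of $X$ is exactly the paper's convexity construction (the paper uses an index $\alpha\in\{1,2\}$ and $U=(U_\alpha,\alpha)$), and deducing concavity of $\theta$ from convexity of $\mathcal{R}$ is equivalent to the paper's argument via the reflected set $\mathcal{R}^{-}$. Your closedness step just fills in what the paper asserts from finiteness of the alphabets; the only small correction is that preserving $P_X$, $H(X|U)$ and all $m$ conditional entropies $H(Y_i|U)$ calls for a cardinality bound of the form $|\mathcal{U}|\le|\mathcal{X}|+m$ rather than $|\mathcal{X}|+1$.
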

\begin{proof}
  Assume that $\{(R_{c,i},E_i)\}_{i=1}^2$ are two points in $\mathcal{R}$ with corresponding kernels $P_{U_i\vert  X}$. Assume that $\alpha$ is a random variable taking values on $\{1,2\}$ with $P_{\alpha}(1) = \nu$, $\nu\in [0,1]$, and independent of everything else. Define $U = (U_{\alpha},\alpha)$.
Then we have
\begin{align}
I(X;U) &= \nu I(X;U_1) + (1-\nu) I(X;U_2)\nonumber\\
&\leq \nu R_{c,1} + (1-\nu) R_{c,2}.\nonumber\\
I(Y_i;U) &= \nu I(Y_i;U_1) + (1-\nu) I(Y_i;U_2)\nonumber\\
& \geq \nu E_{1} + (1-\nu) E_2 - \bar{d}_i^{\star},\; i \in [1:m].
\end{align}
Therefore, the convex combination $(\nu R_{c,1}+ (1-\nu) R_{c,2}, \nu E_{1} + (1-\nu)E_2)$ lies also in $\mathcal{R}$, or $\mathcal{R}$ is a convex set. It can also be seen that $\mathcal{R}$ is a closed set since all alphabets are finite. Let $\mathcal{R}^{-} = \{(R_c,E)\mid (R_c,-E)\in\mathcal{R}\}$ be the reflection of the set $\mathcal{R}$ via the $R_c$-axis. Then $\mathcal{R}^{-}$ is a convex set. Furthermore the function 
\begin{align}
\theta^{-}(R_c) &= \min\{E\mid (R_c,E)\in\mathcal{R}^{-}\}\nonumber\\
& = \min_{P_{U\vert  X}\colon I(X;U)\leq R_c} -\min_{i\in [1:m]}[I(Y_i;U) + \bar{d}_i^{\star}],\nonumber\\
&=-\max_{P_{U\vert  X}\colon I(X;U)\leq R_c}\min_{i\in [1:m]}[I(Y_i;U) + \bar{d}_i^{\star}],
\end{align}
is a convex function. This implies that $\theta(R_c) = -\theta^{-}(R_c)$ is a concave function.
\end{proof}

\noindent For any $\mu>0$ define
\begin{align}
R_{\mathrm{HT}}^{\mu}(\bar{P}) = \min_{P_{U\vert  X}}(I(X;U)-\mu \min_{i\in[1:m]}[I(Y_i;U) + \bar{d}_i^{\star}]).\label{r_mu}
\end{align}
Since $\mathcal{R}$ is a closed convex set, the line $R_c - \mu E = R_{\mathrm{HT}}^{\mu}(\bar{P})$ supports $\mathcal{R}$ or\footnote{Assume that $(x,y)\in\mathbb{R}_{+}^2$ and $(x,y)\notin\mathcal{R}$ hold. Then there exists a pair $(a,b)\in\mathbb{R}^2$ such that $ax+by<aR+bE$ for all $(R,E)\in\mathcal{R}$ as $\mathcal{R}$ is a closed convex set. Since $(0,\min_{i}\bar{d}_i^{\star})\in\mathcal{R}$, we must have $y>\min_i \bar{d}_i^{\star}$. This implies that either $a$ or $b$ is negative by plugging $(R,E) = (0,\min_i\bar{d}_i^{\star})$ in. Similarly plugging $(R,\min_i\bar{d}_i^{\star})$ with sufficiently large $R$ in we see that $a$ is positive and $b$ is negative. Setting $\mu=-b/a$, we see that if $(x,y)\notin\mathcal{R}$ then $(x,y)\notin\cap_{\mu>0}\{(R_c,E)\mid R_c-\mu E\geq R_{\mathrm{HT}}^{\mu}(\bar{P})\}$. Hence $\mathcal{R}\supseteq \cap_{\mu>0}\{(R_c,E)\mid R_c-\mu E\geq R_{\mathrm{HT}}^{\mu}(\bar{P})\}$ holds. The other direction is straightforward.} $\mathcal{R} = \cap_{\mu>0}\{(R_c,E)\mid R_c-\mu E\geq R_{\mathrm{HT}}^{\mu}(\bar{P})\}$. For our proof we also use an additional characterization of $R_{\mathrm{HT}}^{\mu}(\bar{P})$ which is stated in the following. For any pair of positive numbers $(\mu,\alpha)$ we define 
\begin{align}
    R_{\mathrm{HT}}^{\mu,\alpha}(\bar{P}) =& \min_{P_{\tilde{U}\tilde{X}(\tilde{Y}_i)_{i=1}^m}}\bigg(I(\tilde{X},(\tilde{Y}_i)_{i=1}^m;\tilde{U}) -\mu \min_{i\in [1:m]}[I(\tilde{Y}_i;\tilde{U}) + \bar{d}_i^{\star}]\nonumber\\
    &\qquad +\alpha I(\tilde{U};(\tilde{Y}_i)_{i=1}^m\vert  \tilde{X}) + (\alpha+1) D(P_{\tilde{X}(\tilde{Y}_i)_{i=1}^m}\Vert \bar{P})\bigg),\label{r_mu_alpha}
\end{align}
where $P_{\tilde{U}\tilde{X}(\tilde{Y}_i)_{i=1}^m}$ is a joint probability measure on $\tilde{\mathcal{U}}\times\mathcal{X}\times\mathcal{Y}^m$ satisfying $P_{\tilde{X}(\tilde{Y}_i)_{i=1}^m}\ll \bar{P}$ and $(\tilde{U},\tilde{X},(\tilde{Y}_i)_{i=1}^m)\sim P_{\tilde{U}\tilde{X}(\tilde{Y}_i)_{i=1}^m}$. By the support lemma in \cite{csiszar2011information} we can upper bound the cardinality of $\vert  \tilde{\mathcal{U}}\vert  $ by a constant. An alternative characterization of $R_{\mathrm{HT}}^{\mu}(\bar{P})$ is given in the following.
\begin{lemma}\label{ht_reduced_reg}
\begin{align}
    \sup_{\alpha>0}R_{\mathrm{HT}}^{\mu,\alpha}(\bar{P}) = R_{\mathrm{HT}}^{\mu}(\bar{P}).
\end{align}
\end{lemma}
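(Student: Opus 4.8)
The plan is to establish the two inequalities $\sup_{\alpha>0}R_{\mathrm{HT}}^{\mu,\alpha}(\bar P)\le R_{\mathrm{HT}}^{\mu}(\bar P)$ and $\sup_{\alpha>0}R_{\mathrm{HT}}^{\mu,\alpha}(\bar P)\ge R_{\mathrm{HT}}^{\mu}(\bar P)$ separately, the first by a direct substitution and the second by a compactness argument as $\alpha\to\infty$.

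For the \emph{easy} direction, I would take any kernel $P_{U|X}$ feasible for $R_{\mathrm{HT}}^{\mu}(\bar P)$ and form the joint measure $P_{\tilde U\tilde X(\tilde Y_i)_{i=1}^m}=P_{U|X}\times\bar P$. This is feasible for $R_{\mathrm{HT}}^{\mu,\alpha}(\bar P)$ since trivially $P_{\tilde X(\tilde Y_i)}=\bar P\ll\bar P$, and along it $I(\tilde U;(\tilde Y_i)_{i=1}^m|\tilde X)=0$, $D(P_{\tilde X(\tilde Y_i)}||\bar P)=0$, and $I(\tilde X,(\tilde Y_i)_{i=1}^m;\tilde U)=I(\tilde X;\tilde U)+I((\tilde Y_i)_{i=1}^m;\tilde U|\tilde X)=I(X;U)$. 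Hence the objective reduces to $I(X;U)-\mu\min_i[I(Y_i;U)+\bar d_i^\star]$, and minimizing over $P_{U|X}$ yields $R_{\mathrm{HT}}^{\mu,\alpha}(\bar P)\le R_{\mathrm{HT}}^{\mu}(\bar P)$ for every $\alpha$, so the supremum is $\le R_{\mathrm{HT}}^{\mu}(\bar P)$.

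For the reverse direction, I would first note that $\alpha\mapsto R_{\mathrm{HT}}^{\mu,\alpha}(\bar P)$ is nondecreasing, because for each fixed $P$ the two penalty terms are nonnegative, so $\sup_{\alpha>0}R_{\mathrm{HT}}^{\mu,\alpha}(\bar P)=\lim_{\alpha\to\infty}R_{\mathrm{HT}}^{\mu,\alpha}(\bar P)$. Since all alphabets are finite, $|\tilde{\mathcal U}|$ is bounded by the cited support lemma, and all $\bar d_i^\star$ are finite by Assumption \ref{assump_1}, a minimizer $P^{(\alpha)}$ exists for each $\alpha$ (Weierstrass: the objective is lower semicontinuous — mutual informations are continuous and $D(\cdot\|\bar P)$ is lsc — on the compact feasible set $\{P_{\tilde U\tilde X(\tilde Y_i)}:P_{\tilde X(\tilde Y_i)}\ll\bar P\}$). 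Writing $g(P)=I(\tilde X,(\tilde Y_i)_{i=1}^m;\tilde U)-\mu\min_i[I(\tilde Y_i;\tilde U)+\bar d_i^\star]$ for the non-penalty part and $h(P)\ge0$ for the penalty part, the easy direction gives $g(P^{(\alpha)})+h(P^{(\alpha)})\le R_{\mathrm{HT}}^{\mu}(\bar P)$, while $|g(P^{(\alpha)})|\le C_0$ for a constant $C_0$ depending only on $\mu$, $\log|\tilde{\mathcal U}|$ and $\max_i\bar d_i^\star$; hence $0\le h(P^{(\alpha)})\le R_{\mathrm{HT}}^{\mu}(\bar P)+C_0$, which forces both $I(\tilde U^{(\alpha)};(\tilde Y_i^{(\alpha)})_{i=1}^m|\tilde X^{(\alpha)})$ and $D(P^{(\alpha)}_{\tilde X(\tilde Y_i)}||\bar P)$ to be $O(1/\alpha)$, hence to vanish. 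Extracting a subsequence $\alpha_k\to\infty$ with $P^{(\alpha_k)}\to P^\star$ on the fixed finite alphabet, and using that $\{P:P_{\tilde X(\tilde Y_i)}\ll\bar P\}$ is closed (so $D(\cdot\|\bar P)$ is a finite sum of continuous terms there), I would conclude $I(\tilde U^\star;(\tilde Y_i^\star)_{i=1}^m|\tilde X^\star)=0$, i.e. the Markov chain $\tilde U^\star-\tilde X^\star-(\tilde Y_i^\star)_{i=1}^m$, and $P^\star_{\tilde X(\tilde Y_i)}=\bar P$. Thus $P^\star$ is a feasible point of the $R_{\mathrm{HT}}^{\mu}(\bar P)$ program with $g(P^\star)\ge R_{\mathrm{HT}}^{\mu}(\bar P)$; since $R_{\mathrm{HT}}^{\mu,\alpha_k}(\bar P)\ge g(P^{(\alpha_k)})\to g(P^\star)$ by continuity of $g$, we get $\lim_k R_{\mathrm{HT}}^{\mu,\alpha_k}(\bar P)\ge R_{\mathrm{HT}}^{\mu}(\bar P)$, and combined with monotonicity this gives $\sup_{\alpha>0}R_{\mathrm{HT}}^{\mu,\alpha}(\bar P)\ge R_{\mathrm{HT}}^{\mu}(\bar P)$.

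The step I expect to be the main obstacle is controlling the boundary behaviour of the term $D(P_{\tilde X(\tilde Y_i)}||\bar P)$: one must ensure the (near-)minimizers remain in the region where this divergence is finite — which is precisely what the uniform bound $h(P^{(\alpha)})\le R_{\mathrm{HT}}^{\mu}(\bar P)+C_0$ guarantees — and that on that region the divergence is genuinely continuous, not merely lower semicontinuous, so that its vanishing transfers to the limit point. Everything else is a routine continuity-and-compactness manipulation on finite alphabets, and if one prefers to avoid invoking existence of an exact minimizer at each $\alpha$, the same argument runs verbatim with $(1/\alpha)$-near-minimizers.
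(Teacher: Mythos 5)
Your proof is correct, and its crucial quantitative step coincides with the paper's: at a minimizer $P^{(\alpha)}$ of $R_{\mathrm{HT}}^{\mu,\alpha}(\bar P)$, the easy-direction bound $R_{\mathrm{HT}}^{\mu,\alpha}(\bar P)\le R_{\mathrm{HT}}^{\mu}(\bar P)$ together with the trivial bounds on the unpenalized part forces $I(\tilde U;(\tilde Y_i)_{i=1}^m|\tilde X)+D(P^{(\alpha)}_{\tilde X(\tilde Y_i)_{i=1}^m}\|\bar P)=O(1/\alpha)$; the paper packages this sum as the single divergence $D(P^{(\alpha)}\|P^{(\alpha)}_{\tilde U|\tilde X}\times\bar P)$. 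Where you genuinely diverge is the finishing step. The paper never takes a limit point: it evaluates the unpenalized objective at the explicitly ``projected'' distribution $P^{(\alpha)}_{\tilde U|\tilde X}\times\bar P$, which is feasible for the $R_{\mathrm{HT}}^{\mu}$ program, and absorbs the discrepancy between the two evaluations into a modulus of continuity $\Delta(\cdot)$ of the information quantities on the fixed finite alphabet, yielding the quantitative bound $R_{\mathrm{HT}}^{\mu,\alpha}(\bar P)\ge R_{\mathrm{HT}}^{\mu}(\bar P)-\Delta(C/\alpha)$ for every $\alpha$, after which taking the supremum suffices. You instead run a compactness argument: existence of (near-)minimizers, a convergent subsequence, continuity of $D(\cdot\|\bar P)$ on the closed set cut out by the absolute-continuity constraint, identification of the limit as $P^{\star}_{\tilde U|\tilde X}\times\bar P$, and monotonicity of $\alpha\mapsto R_{\mathrm{HT}}^{\mu,\alpha}(\bar P)$ to upgrade the subsequential statement. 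Both routes are complete; the paper's projection argument is slightly more economical (no Weierstrass, no subsequence, no monotonicity needed) and gives an explicit rate in $\alpha$, while yours avoids having to introduce and justify the uniform modulus $\Delta$, at the price of the compactness bookkeeping you correctly supply (including the cardinality bound on $\tilde{\mathcal U}$, which is what makes the feasible set compact and $g$ uniformly bounded).
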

\noindent The proof of Lemma \ref{ht_reduced_reg} is deferred to the end of Subsection \ref{subsec_A}. 
\subsection{Strong converse proof for $\epsilon<\min\{\min_{s\in\mathcal{S}}\frac{1}{\vert  \mathfrak{F}_s\vert}, 1\}$}\label{subsec_A}
We now present the main part of the proof of Theorem \ref{thm_2}. In the proof, we will use a recent technique by Tyagi and Watanabe \cite {tyagi2019strong}. When the inactive set is empty, $\mathcal{S}=\varnothing$, showing that $E_{\mathrm{comp},\epsilon}^{\star}(R_c)\leq \min_{i\in[1:m]}\xi_i(R_c)$ for all $\epsilon\in [0,1)$ can be deduced from the strong converse result of testing $P_{Y_iX_i}^{\otimes n}$ against $Q_{Y_{j_i^{\star}}^n}\times Q_{X_{t^{\star}}^n}$ for all $i\in [1:m]$, cf. Theorem \ref{theorem_acs_extended}. Without loss of generality, we assume in the following that the inactive set is not empty $\mathcal{S}\neq \varnothing$. 

Assume first that the set of marginal distributions on $\mathcal{X}$, $\mathcal{P}_{\mathcal{X}}$, is a singleton. This implies that $\vert  \mathfrak{F}_1\vert   = m$ holds. 
Given a sequence of testing schemes $(\phi_n,\psi_n)$ such that the $\epsilon$-achievability conditions are fulfilled
\begin{align}
  \limsup_{n\to\infty}\frac{1}{n}\log\vert  \phi_n\vert  &\leq R_c,\;\limsup_{n\to\infty}\alpha_n\leq \epsilon,\nonumber\\
  \;\liminf_{n\to\infty}&\frac{1}{n}\log\frac{1}{\beta_n}\geq E,
\end{align}
we define for each $i\in [1:m]$ the following likelihood based decision region
\begin{align}
  \mathcal{A}_{n,\gamma}^{i} = \{(y^n,\phi_n(x^n))\mid & P_{Y_i^n\phi_n(X_i^n)}(y^n,\phi_n(x^n))\nonumber\\
  &\geq e^{n(E-\gamma)}Q_{Y_{j_i^{\star}}^n}\times Q_{\phi_n(X_{t^{\star}}^n)}(y^n,\phi_n(x^n))\}.\label{decision_i}
\end{align}
By \cite[Lemma 4.1.2]{hanspectrum}, cf. also \cite[Lemma 12.2]{polyanskiy2014lecture}, we obtain
\begin{align}
  \alpha_n+e^{n(E-\gamma)}\beta_n&\geq P_{Y_i^n\phi_n(X_i^n)}[(\mathcal{A}_{n,\gamma}^{i})^c],\nonumber\\
  Q_{Y_{j_i^{\star}}^n}\times Q_{\phi_n(X_{t^{\star}}^n)}(\mathcal{A}^i_{n,\gamma})&\leq e^{-n(E-\gamma)}. 
\end{align}
For each $n$, $(\phi_n,\mathbf{1}_{\mathcal{A}^i_{n,\gamma}})$ can be seen as a testing scheme for differentiating between $P_{Y_iX_i}^{\otimes n}$ and $Q_{Y_{j_i^{\star}}^n}\times Q_{X_{t^{\star}}^n}$. For a given $i\in [1:m]$ we now a construct a testing scheme $(\bar{\phi}_n,\bar{\psi}_{in})$ to differentiate between $P_{Y_iX_i}^{\otimes n}$ and $P_{Y_i}^{\otimes n}\times P_{X}^{\otimes n}$.  Given $\phi_n$, the compression mapping $\bar{\phi}_n$ does not depend on $i$. Our arguments are similar to the code transformations given in Appendix \ref{ap_common_seq}. We present the procedure in the following for completeness. Given $\phi_n$,  $\bar{\phi}_n$ is defined as
\begin{align}
  \bar{\phi}_n\colon\mathcal{X}^n&\to\mathcal{M}\cup\{e\}\nonumber\\
  \bar{\phi}_n(x^n)&\mapsto\begin{dcases}\phi_n(x^n),\;&\text{if}\; x^n\in\mathcal{B}_{n,\gamma},\nonumber\\
  e\;&\text{otherwise}\end{dcases}.
\end{align}
For each $i\in\mathfrak{F}_1$, the decision mapping $\bar{\psi}_{in}$ is defined as
\begin{align}
  \bar{\psi}_{in}\colon \mathcal{Y}^n\times(\mathcal{M}\cup\{e\})&\to \{0,1\}\nonumber\\
  \bar{\psi}_{in}(y^n,\bar{u})&\mapsto\begin{dcases}\mathbf{1}_{\mathcal{A}^i_{n,\gamma}}(y^n,\bar{u}),\;&\text{if}\; y^n\in \mathcal{B}_{n,\gamma}^{(i)},\;\text{and}\; \bar{u}\neq e,\nonumber\\
  1&\text{otherwise}\end{dcases}.
\end{align}
In the above definitions the typical sets $\mathcal{B}_{n,\gamma}$ (or $\mathcal{B}_{n,\gamma}^s$) and $\mathcal{B}_{n,\gamma}^{(i)}$ are defined in \eqref{bigB_def}. Then we also have
\begin{align}
  P_{Y_i^n\bar{\phi}_n(X_i^n)}(1-\bar{\psi}_{in})&\leq P_{Y_i^n\phi_n(X_i^n)}[(\mathcal{A}_{n,\gamma}^{i})^c] + P_{X}^{\otimes n}[(\mathcal{B}_{n,\gamma})^c] + P_{Y_i}^{\otimes n}[(\mathcal{B}_{n,\gamma}^{(i)})^c]\nonumber\\
  P_{Y_i^n}\times P_{\bar{\phi}_n(X_i^n)}(\bar{\psi}_{in})&\leq e^{n(d_i^{\mathrm{y}}+ d_s^{\mathrm{x}} +2\gamma)}Q_{Y_{j_i^{\star}}^n}\times Q_{\phi_n(X_{t^{\star}}^n)}(\mathcal{A}^i_{n,\gamma})\nonumber\\
  &\leq e^{-n(E-\bar{d}_i^{\star}-3\gamma)}.\label{first_s}
\end{align}
Let $\mathcal{A}_{n}^{(i)}$ be the corresponding acceptance region of $\bar{\psi}_{in}$. We take $n_0$ to be sufficiently large such that the following conditions hold for all $n\geq n_0$
\begin{align}
\log\vert  \bar{\phi}_n\vert &\leq n(R_c+\gamma),\nonumber\\
P_{Y_i^n\bar{\phi}_n(X_i^n)}(\mathcal{A}_{n}^{(i)})&\geq 1-(\epsilon+\gamma),\;\forall i\in [1:m].
\end{align}
Next, we can further decompose $\mathcal{A}_n^{(i)}$ as
\begin{equation}
  \mathcal{A}_n^{(i)}=\bigcup_{u\in \mathcal{M}}\mathcal{A}_{n,u}^{(i)}\times \{u\}.\label{decompose_Ai}
\end{equation}
For simplicity define $\delta = 1-(\epsilon+\gamma)$. Consider the set
\begin{align}
\mathcal{V}_i = \{x^n\mid P_{Y_i\vert  X_i}^{\otimes n}(\mathcal{A}_{n,\bar{\phi}_n(x^n)}^{(i)}\vert  x^n)>\eta\},
\end{align}
where $\delta>\eta>0$. Then we have
\begin{align}
                   \delta&\leq P_{Y_i^n\phi_n(X_i^n)}(\mathcal{A}_n^{(i)})\nonumber\\
                   & = \mathrm{Pr}[(Y_i^n,\phi_n(X_i^n))\in \mathcal{A}_n^{(i)}, X_i^n \in \mathcal{V}_i] + \mathrm{Pr}[(Y_i^n,\phi_n(X_i^n))\in \mathcal{A}_n^{(i)}, X_i^n \notin \mathcal{V}_i]\nonumber\\
                   &\leq P_{X}^{\otimes n}(\mathcal{V}_i) + \eta P_{X}^{\otimes n}(\mathcal{V}_i^c),
\end{align}
which implies that $P_{X}^{\otimes n}(\mathcal{V}_i)\geq (\delta-\eta)/(1-\eta)$. Using this inequality we further obtain 
\begin{align}
P_{X}^{\otimes n}(\cap_{i=1}^m\mathcal{V}_i)&=1-P_{X}^{\otimes n}(\cup_{i=1}^m\mathcal{V}_i^c)\nonumber\\
& \geq 1-m(1-\delta)/(1-\eta).
\end{align}
For $P_{X}^{\otimes n}(\cap_{i=1}^m\mathcal{V}_i)>0$, we require that
\begin{equation}
  \eta< 1-m(1-\delta) = 1-m(\epsilon+\gamma),
\end{equation}
must hold. This in turn implies that $\epsilon<1/m = 1/\vert  \mathfrak{F}_1\vert  $. Taking $\eta = (1-m(\epsilon+\gamma))/2$, we obtain $P_{X}^{\otimes n}(\cap_{i=1}^m\mathcal{V}_i)>(1-m(\epsilon+\gamma))/(1+m(\epsilon+\gamma))$.
Define $\tilde{\mathcal{V}}_n = \cap_{i=1}^m \mathcal{V}_i$, $\tilde{\epsilon} =(1-m(\epsilon+\gamma))/(1+m(\epsilon+\gamma))$, and the following distribution on $\mathcal{X}^n$
\begin{align}
\tilde{P}_{\mathcal{X}^n}(x^n) = P_{X}^{\otimes n}(x^n)/P_{X}^{\otimes n}(\tilde{\mathcal{V}}_n)\mathbf{1}\{x^n\in\tilde{\mathcal{V}}_n\}.
\end{align}
For a given $x^n\in \tilde{\mathcal{V}}_n$ we also define the following joint conditional distribution 
on $\mathcal{Y}^{nm}$
\begin{align}
\tilde{P}_{\mathcal{Y}^{nm},x^n}((y_i^n)_{i=1}^m\vert  x^n) = \prod_{i=1}^m\frac{P_{Y_i\vert  X_i}^{\otimes n}(y^n\vert  x^n)}{P_{Y_i\vert  X_i}^{\otimes n}(\mathcal{A}_{n,\bar{\phi}_n(x^n)}^{(i)}\vert  x^n)}\mathbf{1}\{y^n\in\mathcal{A}_{n,\bar{\phi}_n(x^n)}^{(i)}\}.
\end{align}
Additionally, we define $\tilde{P}_{\mathcal{Y}^{nm},x^n}((y_i^n)_{i=1}^m\vert  x^n)=0$ for all $y^n$ if $x^n\notin \tilde{\mathcal{V}}_n$.
Let $(\tilde{X}^n,(\tilde{Y}_i^n)_{i=1}^m)$ be a tuple of general sources such that 
\begin{equation}
(\tilde{X}^n,(\tilde{Y}_i^n)_{i=1}^m) \sim \tilde{P}\triangleq\tilde{P}_{\mathcal{Y}^{nm},x^n}\times \tilde{P}_{\mathcal{X}^n}.
\end{equation}
Then for each $x^n\in \tilde{\mathcal{V}}_n$ we have the following inequality
\begin{align}
  P_{X}^{\otimes n}(x^n) = P_{\tilde{X}^n}(x^n)P_{X}^{\otimes n}(\tilde{\mathcal{V}}_n)\geq \tilde{\epsilon}P_{\tilde{X}^n}(x^n).\label{x_side}
\end{align}
 \begin{figure}[htb]
   \centering
   \includegraphics{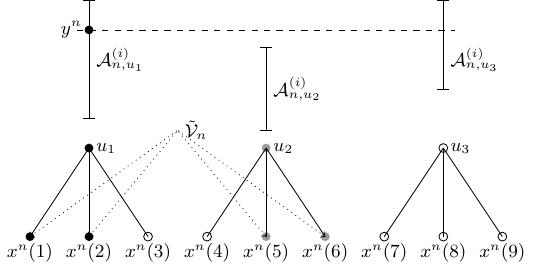}
   \caption{Different situations for $y^n\in \bigcup_{u\colon\bar{\phi}_n^{-1}(u)\cap\tilde{\mathcal{V}}_n\neq \varnothing}\mathcal{A}_{n,u}^{(i)}$.}
   \label{partition_illu}
\end{figure}

\noindent Consider an arbitrarily fixed sequence $y^n\in \bigcup_{u\colon\bar{\phi}_n^{-1}(u)\cap\tilde{\mathcal{V}}_n\neq \varnothing}\mathcal{A}_{n,u}^{(i)}$. For the following analysis, the illustration given in Fig. \ref{partition_illu} is perhaps helpful for readers.
Let $u$ be a message index such that $\bar{\phi}_n^{-1}(u)\cap\tilde{\mathcal{V}}_n\neq \varnothing$ and $y^n\in \mathcal{A}_{n,u}^{(i)}$ hold. For all $x^n\in\tilde{\mathcal{V}}_n\cap \bar{\phi}_n^{-1}(u)$, we then have
\begin{align}
  P_{Y_iX_i}^{\otimes n}(y^n,x^n) &= P_{Y_i\vert  X_i}^{\otimes n}(\mathcal{A}_{n,\bar{\phi}_n(x^n)}^{(i)}\vert  x^n)P_{X}^{\otimes n}(\tilde{\mathcal{V}}_n)P_{\tilde{Y}_i^n\tilde{X}^n}(y^n,x^n)\nonumber\\
                                &\geq \eta\tilde{\epsilon}P_{\tilde{Y}_i^n\tilde{X}^n}(y^n,x^n).\label{complicated_yn}
\end{align}
If $u$ is another message index such that $\bar{\phi}_n^{-1}(u)\cap\tilde{\mathcal{V}}_n\neq \varnothing$ and that $y^n\notin \mathcal{A}_{n,u}^{(i)}$ hold, then for all $x^n\in\tilde{\mathcal{V}}\cap\bar{\phi}_n^{-1}(u)$, we have $P_{\tilde{Y}_i^n\tilde{X}^n}(y^n,x^n)=0$.
When $x^n\notin\tilde{\mathcal{V}}_n$, we also have $P_{\tilde{Y}_i^n\tilde{X}^n}(y^n,x^n)=0$. Hence for all $y^n\in \bigcup_{u\colon\bar{\phi}_n^{-1}(u)\cap\tilde{\mathcal{V}}_n\neq \varnothing}\mathcal{A}_{n,u}^{(i)}$, we obtain
\begin{align}
&  P_{Y_i}^{\otimes n}(y^n)=\sum_{x^n\in\mathcal{X}^n}P_{Y_iX_i}^{\otimes n}(y^n,x^n)\nonumber\\
  &=\sum_{x^n\notin \tilde{\mathcal{V}}_n}P_{Y_iX_i}^{\otimes n}(y^n,x^n) + \sum_{\substack{x^n\in \tilde{\mathcal{V}}_n\\y^n\in \mathcal{A}_{n,\bar{\phi}_n(x^n)}^{(i)}}}P_{Y_iX_i}^{\otimes n}(y^n,x^n) \nonumber\\
  &\hspace{1cm}+ \sum_{\substack{x^n\in \tilde{\mathcal{V}}_n\\y^n\notin \mathcal{A}_{n,\bar{\phi}_n(x^n)}^{(i)}}} P_{Y_iX_i}^{\otimes n}(y^n,x^n)\nonumber\\
                          &\geq \eta\tilde{\epsilon}\big[\sum_{x^n\notin \tilde{\mathcal{V}}_n}P_{\tilde{Y}_i^n\tilde{X}^n}(y^n,x^n) + \sum_{\substack{x^n\in \tilde{\mathcal{V}}_n\\y^n\in \mathcal{A}_{n,\bar{\phi}_n(x^n)}^{(i)}}}P_{\tilde{Y}_i^n\tilde{X}^n}(y^n,x^n)\nonumber\\
                          &\hspace{1cm} + \sum_{\substack{x^n\in \tilde{\mathcal{V}}_n\\y^n\notin \mathcal{A}_{n,\bar{\phi}_n(x^n)}^{(i)}}} P_{\tilde{Y}_i^n\tilde{X}^n}(y^n,x^n)\big]\nonumber\\
  &= \eta\tilde{\epsilon}P_{\tilde{Y}_i^n}(y^n),  \label{y_side}
\end{align}
We observe that not only $\tilde{P}\ll \bar{P}^{\otimes n}$ holds but also we have the following bound
\begin{align}
& D(\tilde{P}\Vert \bar{P}^{\otimes n})\nonumber\\
 &= \sum P_{\tilde{X}^n}(x^n)\log\frac{P_{\tilde{X}^n}(x^n)}{P_{X}^{\otimes n}}\nonumber\\
& + \sum P_{\tilde{X}^n}(x^n)\sum P_{(\tilde{Y}_i^n)_{i=1}^m\vert  \tilde{X}^n}((y_i^n)_{i=1}^m\vert  x^n)\log\frac{P_{(\tilde{Y}_i^n)_{i=1}^m\vert  \tilde{X}^n}((y_i^n)_{i=1}^m\vert  x^n)}{\prod_{i=1}^m P_{Y_i\vert  X_i}^{\otimes n}(y_i^n\vert  x^n)}\nonumber\\
&\leq \log\frac{1}{\tilde{\epsilon}} + m\log\frac{1}{\eta}\triangleq \tilde{\eta}.\label{divergence_bound}
\end{align}
We now derive bounds on the compression rate $R_c$ and the $\epsilon$-achievable error exponent $E$ using expressions involving $(\tilde{X}^n,(\tilde{Y}_i^n)_{i=1}^m)$. We first have
\begin{align}
    n(R_c+\gamma)\geq \log\vert  \bar{\phi}_n\vert  \geq I(\tilde{X}^n;\tilde{M}),\label{rate_ineq}
\end{align}
where $\tilde{M} =\bar{\phi}_n(\tilde{X}^n)$. Note further that the following Markov chain holds
\begin{equation}
(\tilde{Y}_i^n)_{i=1}^m - \tilde{X}^n - \tilde{M}.\label{markov_thm2}
\end{equation}
For each $i\in [1:m]$ the support set of the joint distribution $P_{\tilde{Y}_i^n\bar{\phi}_n(\tilde{X}^n)}$ is a subset of the following set, cf. Fig. \ref{partition_illu} for a visual illustration, $$\tilde{\mathcal{A}}_n^{(i)} = \bigcup_{u\colon\bar{\phi}_n^{-1}(u)\cap\tilde{\mathcal{V}}_n\neq\varnothing}\mathcal{A}_{n,u}^{(i)}\times \{u\}.$$
\noindent Then compared with \eqref{decompose_Ai} we have $\tilde{\mathcal{A}}_n^{(i)}\subseteq \mathcal{A}_n^{(i)}$ due to the restriction on $u$. Therefore, on one hand we have
\begin{align}
  P_{Y_i^n}\times P_{\bar{\phi}_n(X_i^n)}(\tilde{\mathcal{A}}_n^{(i)})\leq P_{Y_i^n}\times P_{\bar{\phi}_n(X_i^n)}(\mathcal{A}_n^{(i)}).\label{second_s}
\end{align}
On the other hand from \eqref{x_side} and \eqref{y_side} we have
\begin{align}
&  P_{Y_i^n}\times P_{\bar{\phi}_n(X_i^n)}(\tilde{\mathcal{A}}_n^{(i)}) = \sum_{u\colon\bar{\phi}_n^{-1}(u)\cap\tilde{\mathcal{V}}_n\neq\varnothing}P_{\bar{\phi}_n(X_i^n)}(u)\sum_{y^n\in\mathcal{A}_{n,u}^{(i)}}P_{Y_i}^{\otimes n}(y^n)\nonumber\\
                                                               &\stackrel{(*)}{\geq} \sum_{u\colon\bar{\phi}_n^{-1}(u)\cap\tilde{\mathcal{V}}_n\neq\varnothing}\sum_{x^n\in\bar{\phi}_n^{-1}(u)\cap\tilde{\mathcal{V}}_n}P_{X}^{\otimes n}(x^n)\sum_{y^n\in\mathcal{A}_{n,u}^{(i)}}P_{Y_i}^{\otimes n}(y^n)\nonumber\\
                                                               &\geq  \sum_{u\colon\bar{\phi}_n^{-1}(u)\cap\tilde{\mathcal{V}}_n\neq\varnothing}\sum_{x^n\in\bar{\phi}_n^{-1}(u)\cap\tilde{\mathcal{V}}_n} \tilde{\epsilon}P_{\tilde{X}^n}(x^n)\sum_{y^n\in\mathcal{A}_{n,u}^{(i)}}\tilde{\epsilon}\eta P_{\tilde{Y}_i^n}(y^n)\nonumber\\
  & = \tilde{\epsilon}^2\eta\sum_{u\colon\bar{\phi}_n^{-1}(u)\cap\tilde{\mathcal{V}}_n\neq\varnothing}\sum_{y^n\in\mathcal{A}_{n,u}^{(i)}}P_{\bar{\phi}_n(\tilde{X}^n)}(u)P_{\tilde{Y}_i^n}(y^n).
\end{align}
where $(*)$ holds because $P_{\bar{\phi}_n(X_i^n)}(u) = \sum_{x^n\in\bar{\phi}_n^{-1}(u)}P_{X}^{\otimes n}(x^n)$. The last equality is valid because for $x^n\in\bar{\phi}_n^{-1}(u)\cap(\tilde{\mathcal{V}}_n)^c$ we have $P_{\tilde{X}^n}(x^n)=0$. This leads to the following
\begin{align}
&\log\frac{1}{P_{Y_i^n}\times P_{\bar{\phi}_n(X_i^n)}(\tilde{A}_n^{(i)})}\nonumber\\
&\stackrel{(**)}{\leq} \sum_{u,y^n}P_{\bar{Y}_i^n\bar{\phi}_n(\bar{X}^n)}(y^n,u)\log\frac{P_{\bar{Y}_i^n\bar{\phi}_n(\bar{X}^n)}(y^n,u)}{\tilde{\epsilon}^2\eta P_{\phi_n(\tilde{X}^n)}(u)P_{\tilde{Y}_i^n}(y^n)}\nonumber\\
&\leq I(\tilde{Y}_i^n;\bar{\phi}_n(\tilde{X}^n)) + \log\frac{1}{\eta\tilde{\epsilon}^2},\label{third_s}
\end{align}
where we have use log-sum inequality in $(**)$ and the summation therein is taken over $u\colon\bar{\phi}_n^{-1}(u)\cap\tilde{\mathcal{V}}_n\neq\varnothing$ and $y^n\in\mathcal{A}_{n,u}^{(i)}$. The last inequality holds since for $(y^n,u) \notin \tilde{\mathcal{A}}_{n}^{(i)}$ we have $P_{\tilde{Y}_i^n\bar{\phi}_n(\tilde{X}^n)}(y^n,u) = 0$. Combining \eqref{first_s}, \eqref{second_s}, and \eqref{third_s} we obtain
\begin{align}
n(E-3\gamma)\leq \min_{i\in[1:m]}[I(\tilde{Y}_i^n;\tilde{M}) + n\bar{d}_i^{\star}] + \log\frac{1}{\eta\tilde{\epsilon}^2}.\label{exponent_ineq}
\end{align}
By combining those two previous inequalities in \eqref{rate_ineq} and \eqref{exponent_ineq}, we obtain for arbitrarily given positive numbers $\mu$ and $\alpha$
\begin{align}
&    n(R_c+\gamma - \mu (E-3\gamma))\nonumber\\
    &\geq I(\tilde{X}^n;\tilde{M})-\mu \min_{i\in[1:m]}[I(\tilde{Y}_i^n;\tilde{M}) + n\bar{d}_i^{\star}]-\mu\log\frac{1}{\eta\tilde{\epsilon}^2}\nonumber\\
    &\stackrel{\eqref{divergence_bound},\eqref{markov_thm2}}{\geq} I(\tilde{X}^n;\tilde{M})-\mu \min_{i\in[1:m]}[I(\tilde{Y}_i^n;\tilde{M}) + n\bar{d}_i^{\star}]\nonumber\\
    & + (\alpha+1)I(\tilde{M};(\tilde{Y}_i^n)_{i=1}^m\vert  \tilde{X}^n) + (\alpha+1) D(\tilde{P}\Vert \bar{P}^{\otimes n}) - (\alpha+1)\tilde{\eta}-\mu\log\frac{1}{\eta\tilde{\epsilon}^2}\nonumber\\
    & = I(\tilde{M};\tilde{X}^n,(\tilde{Y}_i^n)_{i=1}^m) - \mu \min_{i\in [1:m]}[I(\tilde{Y}_i^n;\tilde{M}) + n\bar{d}_i^{\star}] + \alpha I(\tilde{M};(\tilde{Y}_i^n)_{i=1}^m\vert  \tilde{X}^n)\nonumber\\
    & + (\alpha+1)D(\tilde{P}\Vert \bar{P}^{\otimes n}) - (\alpha+1)\tilde{\eta}-\mu\log\frac{1}{\eta\tilde{\epsilon}^2}\nonumber\\
    & = A_1+ A_2- (\alpha+1)\tilde{\eta}-\mu\log\frac{1}{\eta\tilde{\epsilon}^2},\label{eq_202}
\end{align}
where
\begin{align}
    A_1 &= H(\tilde{X}^n,(\tilde{Y}_i^n)_{i=1}^m) + \alpha H((\tilde{Y}_i^n)_{i=1}^m\vert  \tilde{X}^n) + (\alpha+1)D(\tilde{P}\Vert \bar{P}^{\otimes n}),\nonumber\\
    A_2 &= -H(\tilde{X}^n,(\tilde{Y}_i^n)_{i=1}^m\vert  \tilde{M}) - \alpha H((\tilde{Y}_i^n)_{i=1}^m\vert  \tilde{X}^n,\tilde{M})\nonumber\\
    &\hspace{2cm} - \mu \min_{i\in [1:m]}[I(\tilde{Y}_i^n;\tilde{M}) + n\bar{d}_i^{\star}].
\end{align}
Let $T$ be a uniform random on $[1:n]$. Furthermore, define $\tilde{U}_l = (\tilde{M},(\tilde{Y}_i^{l-1})_{i=1}^m)$ for all $l=[1:n]$, and $\tilde{U} = (\tilde{U}_T,T)$. We show at the end of this subsection that
\begin{align}
A_1&\geq n(H(\tilde{X}_T,(\tilde{Y}_{iT})_{i=1}^m) +\alpha H((\tilde{Y}_{iT})_{i=1}^m\vert  \tilde{X}_T)\nonumber\\
&\hspace{1cm}+ (\alpha+1)D(P_{\tilde{X}_T(\tilde{Y}_{iT})_{i=1}^m}\Vert \bar{P})),\label{a1_ineq}\\
A_2&\geq n(-H(\tilde{X}_T,(\tilde{Y}_{iT})_{i=1}^m\vert  \tilde{U}) - \alpha H((\tilde{Y}_{iT})_{i=1}^m\vert  \tilde{X}_T,\tilde{U})\nonumber\\
&\hspace{1cm}-\mu \min_{i\in [1:m]}[I(\tilde{Y}_{iT};\tilde{U}) + \bar{d}_i^{\star}])\label{a2_ineq}.
\end{align}
In summary we obtain for all positive $\mu$ and $\alpha$ that
\begin{align}
&    (R_c+\gamma) - \mu (E-3\gamma)\nonumber\\
    &\geq I(\tilde{X}_T,(\tilde{Y}_{iT})_{i=1}^m;\tilde{U}) -\mu \min_{i\in[1:m]}[I(\tilde{Y}_{iT};\tilde{U}) + \bar{d}_i^{\star}] +\alpha I(\tilde{U};(\tilde{Y}_{iT})_{i=1}^m\vert  \tilde{X}_T)\nonumber\\
    &\qquad + (\alpha+1)D(P_{\tilde{X}_T(\tilde{Y}_{iT})_{i=1}^m}\Vert \bar{P})-\frac{ (\alpha+1)\tilde{\eta}+\mu\log\frac{1}{\eta\tilde{\epsilon}^2}}{n}\nonumber\\
    &\geq R_{\text{HT}}^{\mu,\alpha}(\bar{P}) -\frac{ (\alpha+1)\tilde{\eta}+\mu\log\frac{1}{\eta\tilde{\epsilon}^2}}{n}. \label{eq_206}
\end{align}
Taking $n\to\infty$ and supremum over $\alpha>0$ and using Lemma \ref{ht_reduced_reg}, we have shown that $(R_c+\gamma,E-3\gamma)\in\mathcal{R}$. Taking $\gamma\to 0$ we obtain the conclusion that $E\leq \theta(R_c)$.\\
If the set of marginal distributions on $\mathcal{X}$, $\mathcal{P}_{\mathcal{X}}$, has more than one element then for each inactive $s\in \mathcal{S}$ we have $E\leq \theta_s(R_c)$ provided that $\epsilon<1/\vert  \mathfrak{F}_s\vert  $ holds. For each active $s\notin\mathcal{S}$, then by the strong converse result for the simple hypothesis testing problem we obtain $E\leq\min_{i\in\mathfrak{F}_s}\xi_i(R_c) = \theta_s(R_c)$, for all $\epsilon\in [0,1)$. Thereby we obtain the conclusion.

\noindent \textit{Proof of Lemma \ref{ht_reduced_reg}}: 
Select for any $P_{U\vert  X}$ in the optimization domain of $R_{\text{HT}}^{\mu}(\bar{P})$ a $P_{\tilde{U}\tilde{X}(\tilde{Y}_i)_{i=1}^m}= P_{U\vert  X}\times \bar{P}$. Then we can see that $$\sup_{\alpha>0}R_{\text{HT}}^{\mu,\alpha}(\bar{P})\leq R_{\text{HT}}^{\mu}(\bar{P}).$$
Given an $\alpha>0$, let $P_{\tilde{U}\tilde{X}(\tilde{Y}_i)_{i=1}^m}^{\alpha}$ be an optimal solution for $R_{\text{HT}}^{\mu,\alpha}(\bar{P})$. As $R_{\text{HT}}^{\mu}(\bar{P})\leq \log\vert  \mathcal{X}\vert  $, and $I(\tilde{X};\tilde{U})-\mu \min_{i\in[1:m]}[I(\tilde{Y}_i;\tilde{U}) + \bar{d}_i^{\star}]\geq -\mu(\log\vert  \mathcal{Y}\vert   + \min_{i}\bar{d}_i^{\star})$, we have
\begin{align}
    D(P_{\tilde{U}\tilde{X}(\tilde{Y}_i)_{i=1}^m}^{\alpha}\Vert P_{\tilde{U}\vert  \tilde{X}}^{\alpha}\times \bar{P}) &=  I(\tilde{U}; (\tilde{Y}_i)_{i=1}^m\vert  \tilde{X})+ D(P_{\tilde{X}(\tilde{Y}_i)_{i=1}^m}^{\alpha}\Vert \bar{P})\nonumber\\
    &\leq \frac{\log\vert  \mathcal{X}\vert  +\mu(\log\vert  \mathcal{Y}\vert   + \min_{i\in [1:m]}\bar{d}_i^{\star})}{\alpha}.
\end{align}
We then have 
\begin{align}
&    R_{\text{HT}}^{\mu,\alpha}(\bar{P})\nonumber\\
    &\geq I(\tilde{X};\tilde{U}) - \mu \min_{i\in [1:m]}[I(\tilde{Y}_i;\tilde{U}) + \bar{d}_i^{\star}] &&\text{eval. with}\; P_{\tilde{U}\tilde{X}(\tilde{Y}_i)_{i=1}^m}^{\alpha}\nonumber\\
    &\geq I(X;\bar{U}) - \mu \min_{i\in [1:m]}[I(Y_i;\bar{U}) + \bar{d}_i^{\star}]\nonumber\\
    & - \Delta\bigg(\frac{\log\vert  \mathcal{X}\vert  +\mu(\log\vert  \mathcal{Y}\vert  +\min_{i\in [1:m]}\bar{d}_i^{\star})}{\alpha}\bigg) &&\text{eval. with}\; P_{\tilde{U}\vert  \tilde{X}}^{\alpha}\times \bar{P}\nonumber\\
    &\geq  R_{\text{HT}}^{\mu}(\bar{P})- \Delta\bigg(\frac{\log\vert  \mathcal{X}\vert  +\mu(\log\vert  \mathcal{Y}\vert  +\min_{i\in [1:m]}\bar{d}_i^{\star})}{\alpha}\bigg),
\end{align}
where $\Delta(t)\to 0$ as $t\to 0$. Taking the supremum over $\alpha$ we obtain the conclusion. 

\noindent\textit{Proof of \eqref{a1_ineq} and \eqref{a2_ineq}}: First, we have
\begin{align}
    A_1 =& H(\tilde{X}^n,(\tilde{Y}_i^n)_{i=1}^m)+D(\tilde{P}\Vert \bar{P}^{\otimes n}) \nonumber\\
    &+ \alpha [H((\tilde{Y}_i^n)_{i=1}^m\vert  \tilde{X}^n) + D(\tilde{P}\Vert \bar{P}^{\otimes n})].
\end{align}
Note that since $\tilde{P}\ll \bar{P}^{\otimes n}$, whenever $\bar{P}(x,(y_i)_{i=1}^m)=0$ we must have $P_{\tilde{X}_l(\tilde{Y}_{il})_{i=1}^m}(x,(y_i)_{i=1}^m)=0$ for all $l\in [1:n]$. This implies that $P_{\tilde{X}_T(\tilde{Y}_{iT})_{i=1}^m} = 1/n\sum P_{\tilde{X}_l(\tilde{Y}_{il})_{i=1}^m} \ll \bar{P}$. The absolute continuities ensure the validity of the following derivations
\begin{align}
&    H(\tilde{X}^n,(\tilde{Y}_i^n)_{i=1}^m)+D(\tilde{P}\Vert \bar{P}^{\otimes n})\nonumber\\
     &= \sum P_{\tilde{X}^n(\tilde{Y}_i^n)_{i=1}^m}(x^n,(y_i^n)_{i=1}^m)\log\frac{1}{\bar{P}^{\otimes n}(x^n,(y_i^n)_{i=1}^m)}\nonumber\\
    &=\sum_{l=1}^{n}\sum P_{\tilde{X}_l(\tilde{Y}_{il})_{i=1}^m}(x,(y_i)_{i=1}^m)\log\frac{1}{\bar{P}(x,(y_i)_{i=1}^m)}\nonumber\\
    &=\sum P_{\tilde{X}_T(\tilde{Y}_{iT})_{i=1}^m}(x,(y_i)_{i=1}^m)\log\frac{1}{\bar{P}(x,(y_i)_{i=1}^m)}\nonumber\\
    &= n(H(\tilde{X}_T,(\tilde{Y}_{iT})_{i=1}^m) + D(P_{\tilde{X}_T(\tilde{Y}_{iT})_{i=1}^m}\Vert \bar{P})), 
\end{align}
and
\begin{align}
&    H((\tilde{Y}_i^n)_{i=1}^m\vert  \tilde{X}^n) + D(\tilde{P}\Vert \bar{P}^{\otimes n}) = -H(\tilde{X}^n) + H(\tilde{X}^n,(\tilde{Y}_i^n)_{i=1}^m)+D(\tilde{P}\Vert \bar{P}^{\otimes n})\nonumber\\
    & = -H(\tilde{X}^n) + n H(\tilde{X}_T) +n H((\tilde{Y}_{iT})_{i=1}^m\vert  \tilde{X}_T) + n D(P_{\tilde{X}_T(\tilde{Y}_{iT})_{i=1}^m}\Vert \bar{P})\nonumber\\
    &\geq n H((\tilde{Y}_{iT})_{i=1}^m\vert  \tilde{X}_T) + nD(P_{\tilde{X}_T(\tilde{Y}_{iT})_{i=1}^m}\Vert \bar{P}).
\end{align}
In the last step we have use the inequality $H(\tilde{X}_T)\geq H(\tilde{X}_T\vert  T)$, as in this case $\tilde{X}_T$ and $T$ might not be independent of each other. This implies \eqref{a1_ineq}. Furthermore, since conditioning reduces entropy, we can lower bound the term $A_2$ as follows
\begin{align}
    A_2 &\geq \sum_{l=1}^n -H(\tilde{X}_l,(\tilde{Y}_{il})_{i=1}^m\vert  \tilde{M},\tilde{X}^{l-1},(\tilde{Y}_i^{l-1})_{i=1}^m)\nonumber\\
    & -\alpha H((\tilde{Y}_{il})_{i=1}^m\vert  \tilde{M},(\tilde{Y}_i^{l-1})_{i=1}^m,\tilde{X}^n)\nonumber\\
    &\qquad - \mu \min_{i\in [1:m]}\bigg[\sum_{l=1}^n I(\tilde{Y}_{il};\tilde{M},\tilde{Y}_i^{l-1}) + n\bar{d}_i^{\star}\bigg]\nonumber\\
    &\geq \sum_{l=1}^n -H(\tilde{X}_l,(\tilde{Y}_{il})_{i=1}^m\vert  \tilde{M},(\tilde{Y}_i^{l-1})_{i=1}^m) - \alpha H((\tilde{Y}_{il})_{i=1}^m\vert  \tilde{X}_l,\tilde{M},(\tilde{Y}_i^{l-1})_{i=1}^m)\nonumber\\
     &\qquad - \mu \min_{i\in [1:m]}\bigg[\sum_{l=1}^n I(\tilde{Y}_{il};\tilde{M},(\tilde{Y}_{\eta}^{l-1})_{\eta=1}^m) + n\bar{d}_i^{\star}\bigg]\nonumber\\
    & =  \sum_{l=1}^n -H(\tilde{X}_l,(\tilde{Y}_{il})_{i=1}^m\vert  \tilde{U}_l) - \alpha H((\tilde{Y}_{il})_{i=1}^m\vert  \tilde{X}_l,\tilde{U}_l) \nonumber\\
    &\hspace{1cm}- \mu \bigg[
    \min_{i\in [1:m]}\sum_{l=1}^nI(\tilde{Y}_{il};\tilde{U}_l) + n\bar{d}_i^{\star}\bigg]\nonumber\\
    & \geq n(-H(\tilde{X}_T,(\tilde{Y}_{iT})_{i=1}^m\vert  \tilde{U}_T,T) - \alpha H((\tilde{Y}_{iT})_{i=1}^m\vert  \tilde{X}_T,\tilde{U}_T,T)\nonumber\\
    &\hspace{1cm}-\mu \min_{i\in [1:m]}[I(\tilde{Y}_{iT};\tilde{U}_T,T) + \bar{d}_i^{\star}]).
\end{align}
\subsection{The case that $\epsilon>\max\{\max_{s\in\mathcal{S}}\frac{\vert  \mathfrak{F}_s\vert  -1}{\vert  \mathfrak{F}_s\vert },0\}$}
Fix a compression rate $R_c$ and an arbitrary $\gamma>0$. If the inactive set is empty, $\mathcal{S} = \varnothing$, then we have $\min_s\theta_s(R_c) = \min_{i\in [1:m]}\xi_i(R_c)$ and the threshold $\max\{\max_{s\in\mathcal{S}}(\vert  \mathfrak{F}_s\vert  -1)/\vert  \mathfrak{F}_s\vert  ,0\}$ becomes $0$. We can use the achievability of Theorem \ref{thm_1} and the strong converse in the subsection \ref{subsec_A} to verify the statement. 
Therefore in the following we assume that the inactive set is non-empty, $\mathcal{S}\neq \varnothing$. Our coding scheme is influenced by the one given in \cite{tian2008successive}.
\subsubsection{Construction of a testing scheme}
Recall that $\mathfrak{F}_s$ represents the distributions which have the same marginal distribution $P_{\mathcal{X},s}$ on $\mathcal{X}$, and $\vert  \mathfrak{F}_s\vert  $ is the number of these. For each inactive $s\in\mathcal{S}$ we partition the set $\mathcal{X}^n$ into $\vert  \mathfrak{F}_s\vert  $ sets $\{\mathcal{C}_n^{(ls)}\}_{l=1}^{\vert  \mathfrak{F}_s\vert  }$ such that for all sufficiently large $n$, we have $P_{\mathcal{X},s}^{\otimes n}(\mathcal{C}_n^{(ls)})>1-\epsilon$ for all $l\in [1:\vert  \mathfrak{F}_s\vert  ]$. This is possible\footnote{Let $0<\gamma<[1-\vert  \mathfrak{F}_s\vert  (1-\epsilon)]/2$ be an arbitrarily given number. Given a type class $T_P^n$ which is a subset of the strongly typical set $\mathcal{T}_{\gamma}^{n}(P_{\mathcal{X},s})$, we divide it into $\vert  \mathfrak{F}_s\vert  $ subsets such that each subset has cardinality of $\floor{\vert  T_{P}^n\vert  /\vert  \mathfrak{F}_s\vert  }$ and omit the remaining sequences. Enumerating over all type classes inside the typical set, the number of omitted typical sequences is upper bounded by $(n+1)^{\vert  \mathcal{X}\vert  }\vert  \mathfrak{F}_s\vert  $. Hence each of the constructed subsets has probability of at least $(1-2\gamma)/\vert  \mathfrak{F}_s\vert  >1-\epsilon$ for all sufficiently large $n$. The atypical sequences and the omitted typical sequences can be then assigned into these sets randomly.} because we have $\epsilon\in (\max_{s\in\mathcal{S}}(\vert  \mathfrak{F}_s\vert  -1)/\vert  \mathfrak{F}_s\vert  ,1)$. Furthermore for each $i\in \mathfrak{F}_s$ where $s\in\mathcal{S}$, let $l_i$ be its position inside the set $\mathfrak{F}_s$ according to the natural ordering.

For each of these inactive states $s$ and $i\in\mathfrak{F}_s$, let $(\phi_n^{(l_is)},\psi_n^{(l_is)})$ be a sequence of testing schemes to differentiate between $P_{Y_iX_i}^{\otimes n}$ and $Q_{Y_{j_i^{\star}}^n}\times Q_{X_{t_s^{\star}}^n}$ such that $\xi_i(R_c)-\gamma/3$ is achievable. Then similarly as in the conclusion of Theorem \ref{thm_1}, the false alarm probability of the likelihood ratio test also goes to zero
  \begin{align}
    \lim_{n\to\infty}\mathrm{Pr}\big\{&P_{Y_i^n\phi_n^{(l_is)}(X_i^n)}(Y_i^n,\phi_n^{(l_is)}(X_i^n))\nonumber\\
    &\leq e^{n(\xi_i(R_c)-\gamma/3)}Q_{Y_{j_i^{\star}}^n}\times Q_{\phi_n^{(l_is)}(X_{t_s^{\star}}^n)}(Y_i^n,\phi_n^{(l_is)}(X_i^n))\big\} = 0.
  \end{align}
  Note also that the following expressions
  \begin{align}
    \mathrm{Pr}\{P_{X_i}^{\otimes n}(X_i^n)\leq e^{n(d_s^{\mathrm{x}}+\gamma/3)}Q_{X_{t_s^{\star}}^n}(X_i^n)\}&\to 1,\nonumber\\
    \mathrm{Pr}\{P_{Y_i}^{\otimes n}(Y_i^n)\leq e^{n(d_i^{\mathrm{y}}+\gamma/3)}Q_{Y_{j_i^{\star}}^n}(Y_i^n)\}&\to 1,\label{reason_1}
    \end{align}
    hold due to either the weak law of large numbers or Theorem 1 in \cite{barron1985strong}.
  Rewriting the above expression we therefore obtain
  \begin{align}
  \lim_{n\to\infty}\mathrm{Pr}\big\{\iota_{P_{Y_i^n\phi_n^{(l_is)}(X_i^n)}}&(Y_i^n;\phi_n^{(l_is)}(X_i^n))\leq n(\xi_i(R_c)-d_{is}^{\mathrm{yx}}-\gamma)\big\} = 0.\label{eq_208}
\end{align}

\noindent For each active state $s$, $s\notin\mathcal{S}$, let $(\phi_n^s,\psi_n^s)$ be a sequence of testing schemes to differentiate between $\{P_{Y_iX_i}^{\otimes n}\}_{i\in\mathfrak{F}_s}$  and  $\{Q_{Y_j^n}\times Q_{X_t^n}\}$ such that $\theta_s(R_c)-\gamma/3 = \min_{\bar{i}\in\mathfrak{F}_s}\xi_{\bar{i}}(R_c)-\gamma/3$ is achievable. The existence of $(\phi_n^s,\psi_n^s)$ follows from Theorem \ref{thm_1}. Similarly from the conclusion of Theorem \ref{thm_1}, for each intersected set $\mathcal{I}^{(i)}(\theta_s(R_c)-\gamma/3)$, $i\in\mathfrak{F}_s$, we have
\begin{align}
\lim_{n\to\infty} P_{Y_i^n\phi_n^s(X_i^n)}[(\mathcal{I}^{(i)}(\theta_s(R_c)-\gamma/3))^c]=0.
\end{align}
From the definition of $\mathcal{I}^{(i)}(\theta_s(R_c)-\gamma/3)$ in \eqref{ie_expl} we further have
  \begin{align}
    \lim_{n\to\infty}\mathrm{Pr}\big\{&P_{Y_i^n\phi_n^s(X_i^n)}(Y_i^n,\phi_n^s(X_i^n))\nonumber\\
    &\leq e^{n(\theta_s(R_c)-\gamma/3)}Q_{Y_{j_i^{\star}}^n}\times Q_{\phi_n^s(X_{t_s^{\star}}^n)}(Y_i^n,\phi_n^s(X_i^n))\big\} = 0.
  \end{align}
Using \eqref{reason_1} again we obtain 
\begin{align}
\lim_{n\to\infty}\mathrm{Pr}\big\{&\iota_{P_{Y_i^n\phi_n^{s}(X_i^n)}}(Y_i^n;\phi_n^{s}(X_i^n))\nonumber\\
&\leq n(\min_{\bar{i}\in\mathfrak{F}_s}\xi_{\bar{i}}(R_c)-d_{is}^{\mathrm{yx}}-\gamma)\big\} = 0,\;\forall i\in\mathfrak{F}_s. \label{eq_210}
\end{align}
 Define the following auxiliary distribution
\begin{align}
  P_{Y^nX^n} = \sum_{i=1}^m\nu_iP_{Y_iX_i}^{\otimes n},\;\text{where}\; \nu_i>0,\;\text{and}\;\sum \nu_i = 1.
\end{align}
Let $P_{X^n}$ and $P_{Y^n}$ be the marginal distributions of $P_{Y^nX^n}$. Furthermore, let $P_{Y^nU}$ be the push-forward distribution resulting from applying $(\phi_n^{(l_is)})$ and $\phi_n^s$ to $\{P_{X_i^n}\}$, i.e.,
\begin{align}
  P_{Y^nU} = \sum_{s\in\mathcal{S}}\sum_{i\in\mathfrak{F}_s}\nu_iP_{Y_i^n\phi_n^{(l_is)}(X_i^n)} + \sum_{s\notin\mathcal{S}}\sum_{i\in\mathfrak{F}_s}\nu_iP_{Y_i^n\phi_n^{s}(X_i^n)}.
\end{align}
We use the same mapping $T(\cdot)$ to estimate $s$ before encoding the as in the proof of Theorem \ref{thm_1}.  If $\hat{s}=e$ we send $(e,1)$ to the decision center. If $\hat{s}\neq e$, we check if $x^n\in\mathcal{B}_{n,\gamma}^{\hat{s}}$ where 
\begin{align}
   \mathcal{B}_{n,\gamma}^{\hat{s}} = \{x^n\mid\min_{t\in [1:r]}\iota_{P_{X^n}\Vert Q_{X_t^n}}(x^n)>n(d_{\hat{s}}^{\mathrm{x}}-\gamma)\}.
\end{align}
If the condition is not fulfilled we send a special symbol $u=e^{\star}$. Assume that the condition $x^n\in\mathcal{B}_{n,\gamma}^{\hat{s}}$ holds. If $\hat{s}\in\mathcal{S}$ we  send the following message to the decision center $u = \bar{\phi}_n^{\hat{s}}(x^n) = \sum_{l=1}^{\vert  \mathfrak{F}_{\hat{s}}\vert  }\phi_n^{(l\hat{s})}(x^n)\mathbf{1}\{x^n\in\mathcal{C}_n^{(l\hat{s})}\cap\mathcal{B}_{n,\gamma}^{\hat{s}}\}$. If $\hat{s}\notin\mathcal{S}$ we send $u=\phi_n^{\hat{s}}(x^n)\mathbf{1}\{x^n\in\mathcal{B}_{n,\gamma}^{\hat{s}}\}$.
For $E = \min_{i\in [1:m]}\xi_i(R_c)-2\gamma$, we decide that the null hypothesis $H_0$ is true if $\hat{s}\neq e$, $u\neq e^{\star}$ and
\begin{align}
  \iota_{P_{Y^nU}}(y^n;u) + \min_{j\in[1:k]}\iota_{P_{Y^n}\Vert  Q_{Y_j^n}}(y^n)>n(E-d_{\hat{s}}^{\mathrm{x}}).
\end{align}
\subsubsection{Bounding error probabilities}
For each $i\in \mathfrak{F}_s$ where $s$ is active, $s\notin\mathcal{S}$, the false alarm probability is given by
\begin{align}
  \alpha_n^{(i)} &\leq \mathrm{Pr}\{\iota_{P_{Y^nU}}(Y_i^n;\phi_n^s(X_i^n)) + \min_{j\in[1:k]}\iota_{P_{Y^n}\Vert  Q_{Y_j^n}}(Y_i^n)\leq n(E-d_s^{\mathrm{x}})\}\nonumber\\
  &\quad+ P_{\mathcal{X},s}^{\otimes n}[(B_{n,\gamma}^{s})^c]+\mathrm{Pr}\{T(X_i^n)\neq s\}.\nonumber\\
                 &\leq \mathrm{Pr}\{\iota_{P_{Y^nU}}(Y_i^n;\phi_n^s(X_i^n))\leq n(E-d_{is}^{\mathrm{yx}}+\gamma)\}\nonumber\\
  &\quad+\mathrm{Pr}\{\min_{j\in[1:k]}\iota_{P_{Y^n}\Vert Q_{Y_j^n}}(\bar{Y}_i^n)<n(d_i^{\mathrm{y}}-\gamma)\}\nonumber\\
  &\quad + P_{\mathcal{X},s}^{\otimes n}[(B_{n,\gamma}^{s})^c]+\mathrm{Pr}\{T(X_i^n)\neq s\}.
\end{align}
Similarly, for each $i\in\mathfrak{F}_s$ where $s$ is not active, $s\in\mathcal{S}$, we have
\begin{align}
  \alpha_n^{(i)}&\leq  \mathrm{Pr}\{\iota_{P_{Y^nU}}(Y_i^n;\bar{\phi}_n^s(X_i^n)) + \min_{j\in[1:k]}\iota_{P_{Y^n}\Vert  Q_{Y_j^n}}(Y_i^n)\nonumber\\
  &\hspace{1cm}\leq n(E-d_s^{\mathrm{x}}),\;X_i^n\in \mathcal{B}_{n,\gamma}^{s}\}\nonumber\\
                &\quad+ P_{\mathcal{X},s}^{\otimes n}[(B_{n,\gamma}^{s})^c]+\mathrm{Pr}\{T(X_i^n)\neq s\}\nonumber\\
                &\leq \mathrm{Pr}\{\iota_{P_{Y^nU}}(Y_i^n;\phi_n^{(l_is)}(X_i^n)) \leq n(E-d_{is}^{\mathrm{yx}}+\gamma),\;X_i^n\in \mathcal{C}_{n}^{(l_is)}\cap\mathcal{B}_{n,\gamma}^{s}\}\nonumber\\
  &\quad+ \mathrm{Pr}\{\iota_{P_{Y^nU}}(Y_i^n;\bar{\phi}_n^{s}(X_i^n)) \leq n(E-d_{is}^{\mathrm{yx}}+\gamma),\;X_i^n\notin \mathcal{C}_{n}^{(l_is)}\}\nonumber\\
                &\quad+ \mathrm{Pr}\{\min_{j\in[1:k]}\iota_{P_{Y^n}\Vert Q_{Y_j^n}}(\bar{Y}_i^n)<n(d_i^{\mathrm{y}}-\gamma)\} \nonumber\\
                &\quad + P_{\mathcal{X},s}^{\otimes n}[(B_{n,\gamma}^{s})^c]+\mathrm{Pr}\{T(X_i^n)\neq s\}\nonumber\\
                &\leq \mathrm{Pr}\{\iota_{P_{Y^nU}}(Y_i^n;\phi_n^{(l_is)}(X_i^n)) \leq n(E-d_{is}^{\mathrm{yx}}+\gamma)\}\nonumber\\
                &\quad + \mathrm{Pr}\{\min_{j\in[1:k]}\iota_{P_{Y^n}\Vert Q_{Y_j^n}}(\bar{Y}_i^n)<n(d_i^{\mathrm{y}}-\gamma)\}\nonumber\\
  &\quad + P_{\mathcal{X},s}^{\otimes n}[(B_{n,\gamma}^{s})^c]+\mathrm{Pr}\{T(X_i^n)\neq s\} + \epsilon,
\end{align}
for all sufficiently large $n$. The last inequality holds since for all sufficiently large $n$ we have  $ P_{\mathcal{X},s}^{\otimes n}[(\mathcal{C}_{n}^{(l_is)})^c]\leq \epsilon$ by the definition of $\mathcal{C}_{n}^{(l_is)}$. 
Let $\gamma_n$ be a sequence such that $\gamma_n\to 0$ and $n\gamma_n\to\infty$ as $n\to \infty$. 
  Using the change of measure steps as in the proof of Theorem \ref{thm_1} we have
  \begin{align}
    &\mathrm{Pr}\{\iota_{P_{Y^nU}}(Y_i^n;\phi_n^s(X_i^n))\leq n(E-d_{is}^{\mathrm{yx}}+\gamma)\}\nonumber\\
    &\leq \mathrm{Pr}\{\iota_{P_{Y_i^n\phi_n^s(X_i^n)}}(Y_i^n;\phi_n^s(X_i^n))\leq n(E-d_{is}^{\mathrm{yx}}+\gamma-1/n\log{\nu_i}+2\gamma_n)\}\nonumber\\
    &\quad + 2e^{-n\gamma_n}, \nonumber\\
&    \mathrm{Pr}\{\iota_{P_{Y^nU}}(Y_i^n;\phi_n^{(l_is)}(X_i^n)) \leq n(E-d_{is}^{\mathrm{yx}}+\gamma)\}\nonumber\\
    &\leq \mathrm{Pr}\{\iota_{P_{Y_i^n}\phi_n^{(l_is)}(X_i^n)}(Y_i^n;\phi_n^{(l_is)}(X_i^n))\leq n(E-d_{is}^{\mathrm{yx}}+\gamma-1/n\log \nu_i+2\gamma_n)\}\nonumber\\
    &\quad + 2e^{-n\gamma_n}. 
  \end{align}  

Using \eqref{eq_208}, \eqref{eq_210} and Lemma \ref{lemma_mix_sup}, we obtain
  \begin{align}
    \lim_{n\to\infty}\alpha_n^{(i)} = 0,\;\forall i\in\mathfrak{F}_s,\;s\notin \mathcal{S},\nonumber\\
    \limsup_{n\to\infty}\alpha_n^{(i)}\leq \epsilon,\;\forall i\in\mathfrak{F}_s,\;s \in \mathcal{S}.
  \end{align}
This implies that
\begin{equation}
 \limsup_{n\to\infty}\alpha_n= \limsup_{n\to\infty}\max_{i\in [1:m]}\alpha_n^{(i)}= \max_{i\in [1:m]}\limsup_{n\to\infty}\alpha_n^{(i)}\leq \epsilon.
\end{equation}
Similarly as in the last part of the proof of Theorem \ref{thm_1}, by change of measure steps, we also have
\begin{align}
  \beta_n^{(jt)}\leq e^{-nE},\;\forall j\in [1:k],\; t\in [1:r].
\end{align}
Therefore for $\epsilon>\max_{s\in\mathcal{S}}(\vert  \mathfrak{F}_s\vert  -1)/\vert  \mathfrak{F}_s\vert  $ we have $E_{\epsilon}^{\star}(R_c)= \min_{i\in[1:m]}\xi_i(R_c)$ as the converse direction is straightforward. 
\subsection{Convergence of $\alpha_n^{(i^{\star})}$}
We assume that the optimality achieving index $s^{\star}$ is active, $s^{\star}\not\subset \mathcal{S}$, otherwise there is nothing to prove. Let $\mathcal{R}$, $R_{\mathrm{HT}}^{\mu}(P_{X_i^{\star}Y_{i^{\star}}})$ and  $R_{\mathrm{HT}}^{\mu,\alpha}(P_{X_{i^{\star}}Y_{i^{\star}}})$ be defined as in \eqref{region_R}, \eqref{r_mu} and \eqref{r_mu_alpha} with $m=1$ and $P_{X_{i^{\star}}Y_{i^{\star}}}$ in place of $\bar{P}$ and  $d_{i^{\star}s^{\star}}^{\mathrm{yx}}$ in place of $\bar{d}_i^{\star}$.

Let $(\phi_n,\psi_n)$ be an arbitrary sequence of testing schemes such that $\liminf_{n\to\infty}\frac{1}{n}\log\frac{1}{\beta_n}\geq E$ where $E = \xi_{i^{\star}}(R_c) + \tau$ for an arbitrary $\tau>0$ holds. Select $\gamma\in (0,\tau/5)$ small enough such that $(R+\gamma,E-3\gamma)\notin \mathcal{R}$.\\
Similarly as in the previous proofs we transform $(\phi_n,\psi_n)$ to a testing scheme $(\bar{\phi}_n,\bar{\psi}_n)$, where $\bar{\psi}_n$ is a deterministic mapping, for differentiating between $P_{Y_{i^{\star}}X_{i^{\star}}}^{\otimes n}$ and $P_{Y_{i^{\star}}}^{\otimes n}\times P_{X_{i^{\star}}}^{\otimes n}$. This can be done by using the typical sets $\mathcal{B}_{n,\gamma}^{s^{\star}}$ and $\mathcal{B}_{n,\gamma}^{(i^{\star})}$ defined in \eqref{bigB_def}. The resulting error probabilities are similarly bounded by
\begin{align}
  P_{Y_{i^{\star}}^n\bar{\phi}_n(X_{i^{\star}}^n)}(1-\bar{\psi}_n)&\leq \alpha_n^{(i^{\star})} + e^{n(E-\gamma)}\beta_n^{(j_{i^{\star}}^{\star}t_{s^{\star}}^{\star})}\nonumber\\
  &\quad + P_{X_{i^{\star}}}^{\otimes n}[(\mathcal{B}_{n,\gamma}^{s^{\star}})^c] + P_{Y_{i^{\star}}}^{\otimes n}[(\mathcal{B}_{n,\gamma}^{(i^{\star})})^c],\nonumber\\
  P_{Y_{i^{\star}}^n}\times P_{\bar{\phi}_n(X_{i^{\star}}^n)}(\bar{\psi}_n)&\leq e^{-n(E-d_{i^{\star}s^{\star}}^{\mathrm{yx}}-3\gamma)}.\label{istarerror}
\end{align}
Let $\bar{\mathcal{A}}_n$ be the acceptance region of $\bar{\psi}_n$.
We argue that there exists a $\lambda>0$ such that for all $n\geq n_0(\gamma)$ we have
\begin{align}
  P_{Y_{i^{\star}}^n\bar{\phi}_n(X_{i^{\star}}^n)}(\bar{\mathcal{A}}_n)\leq e^{-\lambda n}.\label{expo_clause}
\end{align}
Assume otherwise that for all $\lambda>0$ there exists an $n\geq n_0(\gamma)$ such that
\begin{align}
   P_{Y_{i^{\star}}^n\bar{\phi}_n(X_{i^{\star}}^n)}(\bar{\mathcal{A}}_n)> e^{-\lambda n}.
\end{align}
Similarly for notation simplicity we define $\bar{B}_n = \bigcup_{u}\{\bar{\phi}_n^{-1}(u)\}\times\bar{\mathcal{A}}_{n,u}$ as well as
\begin{align}
  P_{\tilde{X}^n\tilde{Y}^n} = \frac{P_{X_{i^{\star}}Y_{i^{\star}}}^{\otimes n}(x^n,y^n)}{P_{X_{i^{\star}}Y_{i^{\star}}}^{\otimes n}(\bar{B}_n)}\mathbf{1}\{(x^n,y^n)\in\bar{B}_n\}.
\end{align}
We then have
\begin{align}
  D(P_{\tilde{X}^n\tilde{Y}^n}\Vert  P_{X_{i^{\star}}Y_{i^{\star}}}^{\otimes n})\leq \lambda n.
\end{align}
For $(y^n,u)\in\bar{\mathcal{A}}_n$ it can be seen that
 \begin{align*}
 P_{Y_{i^{\star}}^n\bar{\phi}_n(X_{i^{\star}}^n)}(y^n,u) &= P_{X_{i^{\star}}Y_{i^{\star}}}^{\otimes n}(\bar{B}_n)P_{\tilde{Y}^n\bar{\phi}_n(\tilde{X}^n)}(y^n,u)\nonumber\\
 &\geq e^{-\lambda n}P_{\tilde{Y}^n\bar{\phi}_n(\tilde{X}^n)}(y^n,u)
 \end{align*}
  holds whereas for $(y^n,u)\notin \bar{\mathcal{A}}_n$ we have $P_{\tilde{Y}^n\bar{\phi}_n(\tilde{X}^n)}(y^n,u) = 0$. Therefore for all $(y^n,u)\in \bar{\mathcal{A}}_n$ the following inequalities hold
\begin{align}
P_{Y_{i^{\star}}}^{\otimes n}(y^n)\geq e^{-\lambda n} P_{\tilde{Y}^n}(y^n),\; P_{\bar{\phi}_n(X_{i^{\star}}^n)}(u)\geq e^{-\lambda n} P_{\bar{\phi}_n(\tilde{X}^n)}(u). 
\end{align}
This implies further that we also have
\begin{align}
&  \log\frac{1}{P_{Y_{i^{\star}}^n}\times P_{\bar{\phi}_n(X_{i^{\star}}^n)}(\bar{\mathcal{A}}_n)}\nonumber\\
  &\leq \sum_{(y^n,u)\in\bar{\mathcal{A}}_n}P_{\tilde{Y}^n\bar{\phi}_n(\tilde{X}^n)}(y^n,u)\log\frac{P_{\tilde{Y}^n\bar{\phi}_n(\tilde{X}^n)}(y^n,u)}{P_{Y_{i^{\star}}^n}(y^n)P_{\bar{\phi}_n(X_{i^{\star}}^n)}(u)}\nonumber\\
                                                                                                &\leq \sum_{(y^n,u)\in\bar{\mathcal{A}}_n}P_{\tilde{Y}^n\bar{\phi}_n(\tilde{X}^n)}(y^n,u)\log\frac{P_{\tilde{Y}^n\bar{\phi}_n(\tilde{X}^n)}(y^n,u)}{e^{-2\lambda n} P_{\tilde{Y}^n}(y^n)P_{\bar{\phi}_n(\tilde{X}^n)}(u)}\nonumber\\
  &= I(\tilde{Y}^n;\bar{\phi}_n(\tilde{X}^n)) +  2\lambda n.
\end{align}
Hence we obtain
\begin{align}
  n(E-3\gamma)\leq I(\tilde{Y}^n;\bar{\phi}_n(\tilde{X}^n)) + d_{i^{\star}s^{\star}}^{\mathrm{yx}}+ 2\lambda n,
\end{align}
where $(\tilde{Y}^n,\tilde{X}^n)\sim P_{\tilde{X}^n\tilde{Y}^n}$.
By using the same lines of arguments as from \eqref{eq_202} to \eqref{eq_206} we have for given positive $\alpha$ and $\mu$
\begin{align}
  (R_c+\gamma) - \mu(E-3\gamma)\geq R_{\mathrm{HT}}^{\mu,\alpha}(P_{X_{i^{\star}}Y_{i^{\star}}}) - ((\alpha+1)+2\mu)\lambda.
\end{align}
As $\mathcal{R}$ is a closed convex set and $(R+\gamma,E-3\gamma)\notin\mathcal{R}$ holds, by hyperplane separation theorem there exist positive numbers $\mu$ and $\nu$ such that $(R_c+\gamma) - \mu(E-3\gamma)< R_{\mathrm{HT}}^{\mu}(P_{X_{i^{\star}}Y_{i^{\star}}}) -2\nu$ holds. Furthermore there also exists an $\alpha$ such that $R_{\mathrm{HT}}^{\mu}(P_{X_{i^{\star}}Y_{i^{\star}}})<R_{\mathrm{HT}}^{\mu,\alpha}(P_{X_{i^{\star}}Y_{i^{\star}}})+\nu$. Then we obtain for such $\alpha,\mu$
\begin{align}
  R_{\mathrm{HT}}^{\mu,\alpha}(P_{X_{i^{\star}}Y_{i^{\star}}})-\nu&\geq R_{\mathrm{HT}}^{\mu}(P_{X_{i^{\star}}Y_{i^{\star}}})-2\nu\nonumber\\
  &> (R_c+\gamma) - \mu(E-3\gamma)\nonumber\\
  &\geq  R_{\mathrm{HT}}^{\mu,\alpha}(P_{X_{i^{\star}}Y_{i^{\star}}}) - ((\alpha+1)+2\mu)\lambda.
\end{align}
This inequality is violated for $\lambda<\nu/((\alpha+1)+2\mu)$.
Then \eqref{expo_clause} and \eqref{istarerror} imply that  
\begin{align}
\lim_{n\to\infty}\alpha_n^{(i^{\star})} = 1.
\end{align}

\bibliographystyle{IEEEtran}
\bibliography{IEEEabrv,references}

\begin{thebibliography}{10}
\providecommand{\url}[1]{#1}
\csname url@samestyle\endcsname
\providecommand{\newblock}{\relax}
\providecommand{\bibinfo}[2]{#2}
\providecommand{\BIBentrySTDinterwordspacing}{\spaceskip=0pt\relax}
\providecommand{\BIBentryALTinterwordstretchfactor}{4}
\providecommand{\BIBentryALTinterwordspacing}{\spaceskip=\fontdimen2\font plus
\BIBentryALTinterwordstretchfactor\fontdimen3\font minus
  \fontdimen4\font\relax}
\providecommand{\BIBforeignlanguage}[2]{{%
\expandafter\ifx\csname l@#1\endcsname\relax
\typeout{** WARNING: IEEEtran.bst: No hyphenation pattern has been}%
\typeout{** loaded for the language `#1'. Using the pattern for}%
\typeout{** the default language instead.}%
\else
\language=\csname l@#1\endcsname
\fi
#2}}
\providecommand{\BIBdecl}{\relax}
\BIBdecl

\bibitem{ahlswede1986hypothesis}
R.~Ahlswede and I.~Csisz{\'a}r, ``Hypothesis testing with communication
  constraints,'' \emph{IEEE Transactions on Information Theory}, vol.~32,
  no.~4, pp. 533--542, 1986.

\bibitem{berger1979hypothesis}
T.~Berger, ``Decentralized estimation and decision theory,'' in \emph{IEEE 7th
  Spring Workshop on Inf. Theory}, Mt. Kisco, NY, September 1979.

\bibitem{han1987hypothesis}
T.~Han, ``Hypothesis testing with multiterminal data compression,'' \emph{IEEE
  Transactions on Information Theory}, vol.~33, no.~6, pp. 759--772, 1987.

\bibitem{shimokawa1994error}
H.~Shimokawa, T.~S. Han, and S.~Amari, ``Error bound of hypothesis testing with
  data compression,'' in \emph{Proceedings of 1994 IEEE International Symposium
  on Information Theory}.\hskip 1em plus 0.5em minus 0.4em\relax IEEE, 1994, p.
  114.

\bibitem{shalaby1992multiterminal}
H.~M. Shalaby and A.~Papamarcou, ``Multiterminal detection with zero-rate data
  compression,'' \emph{IEEE Transactions on Information Theory}, vol.~38,
  no.~2, pp. 254--267, 1992.

\bibitem{rahman2012optimality}
M.~S. Rahman and A.~B. Wagner, ``On the optimality of binning for distributed
  hypothesis testing,'' \emph{IEEE Transactions on Information Theory},
  vol.~58, no.~10, pp. 6282--6303, 2012.

\bibitem{tian2008successive}
C.~Tian and J.~Chen, ``Successive refinement for hypothesis testing and
  lossless one-helper problem,'' \emph{IEEE Transactions on Information
  Theory}, vol.~54, no.~10, pp. 4666--4681, 2008.

\bibitem{watanabe2017neyman}
S.~Watanabe, ``Neyman--pearson test for zero-rate multiterminal hypothesis
  testing,'' \emph{IEEE Transactions on Information Theory}, vol.~64, no.~7,
  pp. 4923--4939, 2017.

\bibitem{mclachlanpeel}
G.~J. McLachlan and D.~Peel, \emph{Finite mixture models}.\hskip 1em plus 0.5em
  minus 0.4em\relax New York: John Wiley \& Sons, 2000.

\bibitem{chen1996general}
P.-N. Chen, ``General formulas for the neyman-pearson type-ii error exponent
  subject to fixed and exponential type-i error bounds,'' \emph{IEEE
  Transactions on Information Theory}, vol.~42, no.~1, pp. 316--323, 1996.

\bibitem{hanspectrum}
T.~S. Han, \emph{Information-Spectrum Methods in Information Theory}.\hskip 1em
  plus 0.5em minus 0.4em\relax Berlin Heidelberg: Springer-Verlag Berlin
  Heidelberg, 2003.

\bibitem{han2018first}
T.~S. Han, R.~Nomura \emph{et~al.}, ``First-and second-order hypothesis testing
  for mixed memoryless sources,'' \emph{Entropy}, vol.~20, no.~3, p. 174, 2018.

\bibitem{ritchie2020consistent}
A.~Ritchie, R.~A. Vandermeulen, and C.~Scott, ``Consistent estimation of
  identifiable nonparametric mixture models from grouped observations,''
  \emph{Advances in Neural Information Processing Systems}, vol.~33, pp.
  11\,676--11\,686, 2020.

\bibitem{elmore2004estimating}
R.~T. Elmore, T.~P. Hettmansperger, and H.~Thomas, ``Estimating component
  cumulative distribution functions in finite mixture models,''
  \emph{Communications in Statistics-Theory and Methods}, vol.~33, no.~9, pp.
  2075--2086, 2004.

\bibitem{cruz2004semiparametric}
I.~Cruz-Medina, T.~Hettmansperger, and H.~Thomas, ``Semiparametric mixture
  models and repeated measures: the multinomial cut point model,''
  \emph{Journal of the Royal Statistical Society: Series C (Applied
  Statistics)}, vol.~53, no.~3, pp. 463--474, 2004.

\bibitem{wei2020convergence}
Y.~Wei and X.~Nguyen, ``Convergence of de finetti's mixing measure in latent
  structure models for observed exchangeable sequences,'' \emph{arXiv preprint
  arXiv:2004.05542}, 2020.

\bibitem{pal2002noise}
C.~Pal, B.~Frey, and T.~Kristjansson, ``Noise robust speech recognition using
  gaussian basis functions for non-linear likelihood function approximation,''
  in \emph{2002 IEEE International Conference on Acoustics, Speech, and Signal
  Processing}, vol.~1.\hskip 1em plus 0.5em minus 0.4em\relax IEEE, 2002, pp.
  I--405.

\bibitem{anandkumar2012method}
A.~Anandkumar, D.~Hsu, and S.~M. Kakade, ``A method of moments for mixture
  models and hidden markov models,'' in \emph{Conference on Learning
  Theory}.\hskip 1em plus 0.5em minus 0.4em\relax JMLR Workshop and Conference
  Proceedings, 2012, pp. 33--1.

\bibitem{jordan2010lec}
M.~I. Jordan, ``{Stat260: Bayesian Modeling and Inference, Lecture 1: History
  and De Finetti’s Theorem},'' 2010.

\bibitem{kirsch2019elementary}
W.~Kirsch, ``An elementary proof of de finetti’s theorem,'' \emph{Statistics
  \& Probability Letters}, vol. 151, pp. 84--88, 2019.

\bibitem{vu2021hypothesis}
M.~T. Vu, T.~J. Oechtering, and M.~Skoglund, ``Hypothesis testing and
  identification systems,'' \emph{IEEE Transactions on Information Theory},
  vol.~67, no.~6, pp. 3765--3780, 2021.

\bibitem{wyner1975source}
A.~Wyner, ``On source coding with side information at the decoder,'' \emph{IEEE
  Transactions on Information Theory}, vol.~21, no.~3, pp. 294--300, 1975.

\bibitem{ahlswede1975source}
R.~Ahlswede and J.~K{\"o}rner, ``Source coding with side information and a
  converse for degraded broadcast channels,'' \emph{IEEE Transactions on
  Information Theory}, vol.~21, no.~6, pp. 629--637, 1975.

\bibitem{barron1985strong}
A.~R. Barron, ``The {S}trong {E}rgodic {T}heorem for {D}ensities: {G}eneralized
  {S}hannon-{M}c{M}illan-{B}reiman {T}heorem,'' \emph{Annals of Probability},
  vol.~13, no.~4, pp. 1292--1303, 1985.

\bibitem{verdu2012non}
S.~Verd{\'u}, ``Non-asymptotic achievability bounds in multiuser information
  theory,'' in \emph{Communication, Control, and Computing (Allerton), 2012
  50th Annual Allerton Conference on}.\hskip 1em plus 0.5em minus 0.4em\relax
  IEEE, 2012, pp. 1--8.

\bibitem{tyagi2019strong}
H.~Tyagi and S.~Watanabe, ``Strong converse using change of measure
  arguments,'' \emph{IEEE Transactions on Information Theory}, vol.~66, no.~2,
  pp. 689--703, 2019.

\bibitem{csiszar2011information}
I.~Csiszar and J.~K{\"o}rner, \emph{Information theory: coding theorems for
  discrete memoryless systems}.\hskip 1em plus 0.5em minus 0.4em\relax
  Cambridge: Cambridge University Press, 2011.

\bibitem{witsenhausen1975conditional}
H.~Witsenhausen and A.~Wyner, ``A conditional entropy bound for a pair of
  discrete random variables,'' \emph{IEEE Transactions on Information Theory},
  vol.~21, no.~5, pp. 493--501, 1975.

\bibitem{lessnoisy}
J.~K{\"o}rner and K.~Marton, ``Comparison of two noisy channels,'' in
  \emph{Topics in Information Theory}, vol.~16.\hskip 1em plus 0.5em minus
  0.4em\relax Amsterdam: North Holland, 1977, pp. 411--424.

\bibitem{miyake1995coding}
S.~Miyake and F.~Kanaya, ``Coding theorems on correlated general sources,''
  \emph{IEICE Transactions on Fundamentals of Electronics, Communications and
  Computer Sciences}, vol.~78, no.~9, pp. 1063--1070, 1995.

\bibitem{polyanskiy2014lecture}
Y.~Polyanskiy and Y.~Wu, ``Lecture notes on information theory,'' \emph{MIT
  (6.441), UIUC (ECE 563)}, 2017.

\end{thebibliography}
\end{document}